\newtheorem{theorem}{Theorem}[section]
\newtheorem{proposition}[theorem]{Proposition}
\newtheorem{lemma}[theorem]{Lemma}
\newtheorem{corollary}[theorem]{Corollary}
\newcommand{\eps}{\varepsilon}
\newenvironment{proof}[1][Proof]{\textbf{#1.} }{\ \rule{0.5em}{0.5em}}
\begin{document}

\title{Bin stretching with migration on two
hierarchical machines}

\author{
Islam Akaria
\thanks{Department of Mathematics, University of Haifa, Haifa, Israel.\newline E--mail: islam.akaria@gmail.com.}
\and
Leah Epstein
\thanks{Department of Mathematics, University of Haifa, Haifa, Israel.\newline E--mail: lea@math.haifa.ac.il.} }

\date{}

\maketitle

\setcounter{page}{1}

\begin{abstract}
In this paper, we consider semi-online scheduling with migration on two hierarchical machines, with the purpose of minimizing the makespan. The meaning of two hierarchical machines is that one of the machines can run any job, while the other machine can only run specific jobs. Every instance also has a fixed parameter $M \geq 0$, known as the migration factor. Jobs are presented one by one. Each new job has to be assigned to a machine when it arrives, and at the same time it is possible to modify the assignment of previously assigned jobs, such that the moved jobs have a total size not exceeding $M$ times the size of the new job.

The semi-online variant studied here is called {\it bin stretching}. In this problem, the optimal makespan is provided to the scheduler in advance. This is still a non-trivial variant for any migration factor $M \geq 0$. We prove tight bounds on the competitive ratio for any migration factor $M$, where the design and analysis is split into several cases, based on the value of $M$, and the resulting competitive ratio. Unlike the online variant with migration for two hierarchical machines, this case allows an online approximation scheme.
\end{abstract}

{Keywords: Hierarchical machines, Competitive ratio, Migration
factor, Semi-online scheduling.}

\section{Introduction}
In this work, we study semi-online scheduling on two hierarchical machines.
Jobs are presented one by one, over a list. Each job $j$ has a positive
size (or processing time) $p_j$, and a hierarchy $g_j \in \{1,2\}$. The hierarchy corresponds to the grade of service (GoS), which this job requires. There are two
machines to be used for processing these jobs, where the speeds of the machines are unit, but the
machines are different with respect to their capabilities. The first machine $m_1$ or machine $1$
can run any job, while the second machine $m_2$ or machine $2$ cannot run jobs of hierarchy $1$, and it can only receive jobs of hierarchy $2$.
A limited amount of migration is allowed, and when a job arrives, the algorithm may reassign some jobs as long
as their total size is (at most) proportional to the processing time of
the arriving job. Specifically, there is a fixed parameter $M>0$, called {\it migration factor}, and when a job $j$ arrives, the total size of moved jobs cannot exceed $M \cdot p_j$.

The completion time (or load) of a machine is the total size of its jobs,
and the makespan is defined in a standard way as the maximum load over all machines. The goal is to
schedule the input jobs so as to minimize the maximum completion time,
that is, the goal is to minimize the makespan. We use the usual measure for online algorithms, called the competitive ratio,
for the analysis. The competitive ratio is the worst case ratio between the makespan
of an online or a semi-online algorithm and the makespan of an optimal
offline algorithm (for the same input).

Differently from the purely online version, we consider a semi-online model. In this semi-online variant,
the optimal makespan is known in advance, and
for convenience, we assume that its value is $1$, by scaling.  The variant is called bin stretching. Thus, the total size of jobs does not exceed $2$, and the total size of jobs of GoS $1$ does not exceed $1$. These simple properties are not restrictive, but they are used in the analysis of algorithms.

In this study, we consider all possible finite and positive migration factors. We show tight bounds on the competitive ratio for every possible $M>0$, which divides the problem into four cases, based on different values of the migration factor. The specific competitive ratios for the different cases are stated below.
Our algorithm will obviously schedule all jobs with
hierarchy $1$ to the first machine. In fact, our algorithms will never
migrate jobs in this case. The reason for this is that such jobs can be arbitrarily
small (and a larger job can be replaced with very small jobs, essentially without changing the input). Thus, the algorithms examine the migration option only in cases where a new job with hierarchy $2$ arrives. We study the online variant of our problem with migration on two hierarchical machines in another manuscript \cite{AE21a}. We note that here we find that the competitive ratio tends to zero as $M$ grows, in contrast to the online case with migration \cite{AE21a}, where the competitive ratio cannot be smaller than $M$, no matter how large $M$ is. In that problem, the best possible competitive ratio is $\frac 32$ for $M\geq 1$, and it is strictly larger for smaller $M$. In fact, it is exactly $\frac 53$ for $M\leq \frac 13$ \cite{JS016,YY009,AE21a}, and it is equal to $1+\frac{1}{M+1}$ for $\frac{1+\sqrt{5}}2 \leq M \leq 1$ \cite{AE21a}.

The tight
bound for online bin stretching on two hierarchical machines and the case $M=0$, which is $\frac 32$, follows from earlier work
\cite{JS016} (see also \cite{YM023}). In  \cite{JS016}, the variant where the
total size of all jobs is known in advance is studied. For this last model
it is not hard to see that any algorithm is also valid for the case where
the optimal makespan is known in advance (see below), and the lower bound on the
competitive ratio, presented in that work, is simple, and the proof is
valid for the case of known makespan (we show this later as a special
case of the case $M<\frac 12$).
Note that knowing the cost of an optimal solution allows us to design better algorithms in terms of  their competitive ratios (compared to the case where the makespan may be arbitrary), but it does not simplify the design of an algorithm, and often the design becomes more advanced.

The two similar semi-online problems, bin stretching \cite{AzarR01,BBSSV17a,BBSSV17b,GBK17,KK13,GKB15}, and scheduling with known total size \cite{KKST97,ANST04,CKK05,AH12,KKG15}, were both studied, in particular, for identical machines . It is not hard to see that knowing the makespan is a stronger assumption compared to knowing the total size, so an algorithm for the latter can be used for the former (with the same competitive ratio), and a lower bound construction for the competitive ratio of the former can be used for the latter. The issue of the relation between the two models was addressed specifically \cite{LeeL13}. For two identical machines, there is essentially no difference between the two variants \cite{AzarR01,KKST97}, but for a general number of machines is was proved that the two models are different \cite{AH12,BBSSV17a}. It follows from previous work \cite{JS016,YM023} that for two hierarchical machines (without migration) the two variants are also very similar.
In the last section we address the same question for the problem studied here, and show that the competitive ratio for any migration factor $M$ will be bounded away from $1$, that is, it will not get closer to $1$ as $M$ grows, as it is the case for bin stretching.

Online scheduling with migration, where it is allowed to migrate jobs whose total size is proportional to the
size of an arriving job, was first proposed by Sanders et al. \cite{PN-2009}, who studied the problem on identical machines (without hierarchies). That work contains in particular a linear time {\it online approximation scheme,}
which is a family of online algorithms with migration, such that the competitive ratio can be arbitrarily close to
$1$ for sufficiently larger migration factors. The required migration factor increases as the competitive ratio is closer to $1$, and it is exponential in the value $\frac 1{\eps}$, where $\eps$ is the competitive ratio minus $1$.
For two identical speed machines, it is known that the required migration factor is polynomial in $\frac{1}{\eps}$ \cite{IW}, and given the results here, one can also obtain such a scheme for bin stretching and two hierarchical machines. However, as explained above, this is possible since the optimal makespan is known in advance, and not possible for online algorithms with migration \cite{AE21a}.
Job migration in scheduling according to
this migration model has been studied further for other variants of scheduling and other combinatorial optimization problems \cite{SV-2016,EL-2014,EpsteinL09,EL19,GSV18,BJK20,Levin22}.

As we explain above, the case of identical machines allows an online approximation scheme for any $m$, and this is not the case for hierarchical machines. Moreover, the lower bound will hold for any $m\geq 2$ by constructing the input used for two machines without defining any jobs that can run on machines with indices larger than $2$. However, one can still improve the result without migration by using migration. For the more general case of restricted assignment (where every job has a subset of machines where it can run), migration is not helpful at all for two machines \cite{AE21a}. Here, we show not only that bin stretching for two hierarchical machines has an online approximation scheme, but also that the migration factor is polynomial in $\frac 1{\eps}$. As explained above, we show that such a scheme is not possible for the similar variant of known total size of jobs.

The online hierarchical scheduling problem for multiple machines, or scheduling with grades
of service (GoS) was first proposed by Bar-Noy et al. \cite{BNFN01}, where an algorithm whose
competitive ratio is a constant was designed. Each job, as well as each machine, has a hierarchy
associated with it. A job can be scheduled on a machine only when its hierarchy is no higher than
that of the machine. There are further studies of different hierarchical variants online and semi-online scheduling. The work of Park et al.~\cite{JS016}, mentioned above, contained a study of the case of two machines for online and semi-online algorithms, while the work of Jiang  et al.~\cite{YY009} contained a study of the case of two machines for online algorithms (including a preemptive variant). Wu et al. \cite{YM023} investigated semi-online versions for two hierarchical machines. Specifically, they showed tight bounds of $(1+\sqrt{5})/2$ for the case where the largest processing time of any job is
known in advance.
There are also multiple studies of the online hierarchical scheduling problem for parallel machines,
its special cases where there cannot be a large number of hierarchies \cite{ZJT09,CG04,Jiang08,TA11,LLC11,LHL2014}, and semi-online variants with known total size of jobs \cite{LHL2014}. There is a fair number of articles focusing on semi-online hierarchical scheduling on two machines. Xiao et al. \cite{XWL19} investigated the
problem in several cases (including the total size of low-hierarchy is known, and the case where the total size of
each hierarchy is known, see also \cite{LX14,XZ015}). There is work on other semi-online models and hierarchical machines, such as bounded sizes and combined information \cite{LCXZ11,LX16,ZY015}. There is also work for
similar variants with different objectives \cite{QY19,LX15}.
 It is worth noting that such models of semi-online scheduling were also studied for other scheduling problems without hierarchies (see for example
\cite{HZ99,ZY019,GY003,XL015}).

The paper is organized as follows. We provide definitions and notation in the second section.
In the other sections, we discuss the problem and split it into four cases with respect to the value of
the migration factor $M$ (see Figure \ref{f:5} for a specification of the bounds). Obviously, the competitive ratio as a function of $M$ is monotonically non-increasing.
We design and analyze four algorithms, and we also prove a matching lower bound on the competitive ratio for each case, obtaining tight bounds for all finite values of $M$. The first case is $M\geq \frac 52$, for which the tight bound on the competitive ratio as a function of $M$ is $\frac{2M+5}{2M+3}$. In the second case, $\frac 34 \leq M < \frac 52$, the tight bound is on the
competitive ratio is $1.25$, achieved by an algorithm whose migration factor does not exceed $\frac 34$, while the lower bound of $1.25$ on
the competitive ratio will hold for any semi-online algorithm with $M \leq 2.5$. In the third case, $\frac 12 \leq M < \frac 34$, we prove
a tight bound of $2-M$ by presenting two algorithms defined over two different domains in the interval $[\frac 12,\frac 34)$,
and proving a lower bound on the competitive ratio for any algorithm and the suitable value of $M$. In the last case, $0 \leq M < \frac 12$, we show that this
case is equivalent to the case without migration, by presenting a lower bound of $\frac 32$ on the competitive ratio for any online algorithm
whose migration factor is below $\frac 12$. The idea for the lower bound is similar to that which was presented in \cite{JS016}, and the algorithm presented in that work (with migration factor $M=0$) can be used, since this is an algorithm for known total size. At the end of
the fourth case, we show that if replacing the assumption of known makepsan with the
assumption of known total size, the problem would have been different, unlike the case $M=0$, for which the two problems are known to be similar \cite{JS016,YM023}.

\section{Preliminaries}

Throughout the paper we will use the following notation. Every job will be denoted by its index in $\{1,2,\ldots,n\}$ and presented to the online
algorithm in this order, where the number of jobs $n$ is not known in advance. Each job $j$ (also denoted by $J_j$) is an ordered pair $J_j=(p_j,g_j)$, where $p_j>0$ is the processing time and $g_j\in \{1,2\}$ is the grade of service (GoS or hierarchy) of job $j$. A job with hierarchy $g_j$ can be assigned to a machine in $\{1,\ldots,g_j\}$.

The set of jobs with GoS $1$ is denoted by $X$, and as mentioned above, these jobs are always assigned to machine $1$.
We let $Y$ be the set of jobs with GoS $2$, that are assigned to the second machine by a fixed algorithm at a
certain time. Similarly, the set of jobs with GoS $2$ that are scheduled on the first machine at a certain time (by the same algorithm) is denoted by $Z$.
Note that jobs of hierarchy $2$, which are assigned to a machine at a certain time, could have been scheduled to that machine by the algorithm at its arrival time, or they might have been assigned to the other machine, but they were migrated to this machine at a later time. Any job may be migrated multiple times.
Here we will use an additional variable, $W$. This variable denotes a subset of jobs with GoS $2$ selected by the algorithm, where the algorithm may migrate it or it may keep it on the second machine.

We let $x_j$, $y_j$, $z_j$, and $w_j$ denote the total sizes of the jobs of $X$, $Y$, $Z$, and $W$, respectively, at time $j$, that is, just after job $j$ was scheduled (and all migrations corresponding to the arrival of $j$ have been performed). We let $x_0=y_0=z_0=w_0=0$, since the sets are empty before any job was presented to the algorithm.

Thus, the loads of $m_1$ and $m_2$ after $j$ was assigned are $x_j+z_j$ and $y_j$, respectively. We sometimes let $X_j$, $Y_j$, $Z_j$, and $W_j$ denote the sets $X$, $Y$, $Z$, and $W$ (respectively) just after $j$ was assigned, for clarity. The maximum processing time and the second maximum processing time out of the processing times of the jobs of $Y_{j-1}$ are denoted by $p^{\max{Y}}_j$ and $p^{\max{Y,2}}_j$, and the suitable jobs are denoted by $j^{\max Y},\ j^{\max Y,2}$, respectively. Each of these values is defined to be zero if it is not well-defined (which happens in the case $|Y_{j-1}|\leq 1$).

We let $T_1$ and $T_2$ be the completion times (or loads) at the end of the input of the first and the second machine respectively, i.e. the makespan equals $\max\{T_1,\ T_2\}$. We also let $OPT_j$ denote the makespan of an optimal offline solution just after job $j$ was scheduled. For a complete input with $n$ jobs, we have $T_1=x_n+z_n$ and $T_2=y_n$. For any input $I$, let $c^*(I),\ c_{Alg}(I)$ denote the makespan of an optimal solution and the solution of algorithm Alg, respectively. In particular, we have $c^*(I)=OPT_n$.
The competitive ratio for input $I$ is therefore $\frac{c_{Alg}(I)}{c^*(I)}$, and the competitive ratio of $Alg$ is the supremum of these values over all possible instances $I$.


\section{The case $\boldsymbol{M \geq 2.5}$}

In this section, we prove a tight bound on the competitive ratio for the semi-online problem and the case $M\geq 2.5$. Let us define a value $\mu$, which is a function of $M$, by $\mu = \frac{2}{2M+3}$, where $0<\mu \leq \frac 14$ (and $\frac{1}{\mu}\geq 4$). The tight bound on the competitive ratio for this problem is equal to $1+\mu=1+\frac{2}{2M+3}$, where this value tends to $1$ for $M$ growing to infinity, and it is never exceeds $\frac 54$. The algorithm whose competitive ratio is $1+\mu$ is based on keeping the final completion time of the second machine in the interval $[1-\mu,1+\mu]$. It is obvious that if this is achieved, the completion time for the first machine will not exceed $1+\mu$, because as mentioned earlier, the sum of all jobs is not greater than $2$. After we complete the analysis of our algorithm, we will prove a lower bound on the competitive ratio for every algorithm, that is equal to $1+\mu$. As a result, we get a tight bound of $1+\mu$ on the competitive ratio for this case, where $\mu$ is the above function of $M$.

\subsection{An algorithm}

\textbf{Algorithm \textit{A}}

Let $X=\emptyset,\ Y=\emptyset,\ Z=\emptyset;$

Repeat until all jobs have been assigned:
\begin{enumerate}
\item Receive job $j$ with $p_j$ and $g_j$; \item If one of $g_j=1$ and
$y_{j-1}\geq 1-\mu$ holds, schedule $j$ on the first machine and
update: $X \leftarrow X\cup \{j\}$, or $Z \leftarrow Z\cup \{j\},$ respectively.

            return to step 1.
\item If $y_{j-1}+p_j \leq 1+\mu$ holds, schedule $j$ on the second machine
 and update: $Y\leftarrow Y\cup \{j\}$,

            return to step 1.
\item Let $W$ be a subset of $Z \cup Y\cup \{j\}$ of maximum total
size not exceeding $1$. Update the
schedule such that all jobs of $W$ are assigned to the second machine, and all other jobs
are assigned to the first machine. Update $Y \leftarrow W,\ Z
\leftarrow \{Z \cup Y\cup \{j\}\} \setminus W$, return to step 1.
\end{enumerate}

The algorithm applies a procedure that tries to find a subset $W$ with a compatible size to maintain the balance of the machines, such that
the completion times will not exceed $1+\mu$ (in fact, they will not exceed $1$ in this case). Steps $2$ and $3$ deal with simple cases, where the new job has GoS $1$, or the job can be assigned without any migration such that no machine will have completion time above $1+\mu$. For the cases with a job with a grade of service $2$, this will be obvious, while the case where the grade of service of the new job is $1$ will be analyzed. Step $4$ is the case where jobs of GoS $2$ should be rearranged, such that the schedule is more similar to that of the current optimal solution.

Note that the running time of this algorithm is exponential due to the last step where the subset sum problem is solved. By applying an approximation scheme for this problem rather than an exact algorithm we can obtain a polynomial running time at the expense of a slightly larger competitive ratio. Now, we prove an upper bound on the competitive ratio, and analyze its migration factor.

Consider a fixed optimal offline solution, and let $o_{ji}$ denote the load of $m_i$ in this solution for $i=1,2$, after $j$ jobs were assigned. We have $o_{ji}\leq 1$ since the final optimal makespan is $1$.

\begin{lemma}\label{lemo1}
After $j$ jobs have been assigned by the algorithm, the load of the second machine
$y_j$ is at least $\min\{1-\mu,o_{j2}\}$.
\end{lemma}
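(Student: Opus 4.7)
The plan is to proceed by induction on $j$. The base case $j=0$ is immediate since $y_0 = 0 = o_{02}$. For the inductive step, I will case-split according to which branch of the algorithm processes job $j$, using the inductive hypothesis $y_{j-1} \geq \min\{1-\mu, o_{(j-1)2}\}$.

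Steps 2 and 3 are straightforward. In Step 2, $y_j = y_{j-1}$; if the step fires because $g_j = 1$, then an optimal solution also cannot place $j$ on $m_2$, so $o_{j2} = o_{(j-1)2}$ and the inductive hypothesis transfers directly; if instead it fires because $y_{j-1} \geq 1-\mu$, then $y_j \geq 1-\mu$ already dominates $\min\{1-\mu, o_{j2}\}$. In Step 3, the failure of the Step 2 condition forces $y_{j-1} < 1-\mu$, so the inductive hypothesis yields $y_{j-1} \geq o_{(j-1)2}$; combining with the trivial bound $o_{j2} \leq o_{(j-1)2} + p_j$ gives $y_j = y_{j-1} + p_j \geq o_{j2}$, as desired.

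The substantive case is Step 4, and the key observation I plan to exploit is that the set $Z_{j-1} \cup Y_{j-1} \cup \{j\}$ is exactly the collection of all jobs of GoS $2$ that have arrived through step $j$. Since an optimal offline solution may place only GoS-$2$ jobs on $m_2$, its set of machine-$2$ jobs at time $j$ is a subset of $Z_{j-1} \cup Y_{j-1} \cup \{j\}$ with total size $o_{j2} \leq 1$. As the algorithm chooses $W$ to be a subset of the same ground set with maximum total size not exceeding $1$, I obtain $y_j = w_j \geq o_{j2} \geq \min\{1-\mu, o_{j2}\}$ at once, without needing the inductive hypothesis at all in this branch.

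I expect the only subtle point to be the framing for Step 4: what looks like a complicated subset-sum rebuild is in fact strong enough to match OPT's machine-$2$ load in a single shot, precisely because the algorithm optimizes over the very ground set to which OPT is constrained on $m_2$. Everything else is mechanical casework tracking $y_{j-1}$ versus the threshold $1-\mu$ and using the one-step bound on how much $o_{j2}$ can exceed $o_{(j-1)2}$.
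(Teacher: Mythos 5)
Your proof is correct, and the decisive observation in the Step-4 branch---that $W$ is optimized over the ground set $Z_{j-1}\cup Y_{j-1}\cup\{j\}$ of \emph{all} GoS-$2$ jobs seen so far, which contains OPT's machine-$2$ set as a candidate of total size $o_{j2}\le 1$, hence $w_j\ge o_{j2}$---is exactly the insight the paper relies on. Where you differ is in the organization. The paper argues by minimal counterexample and searches backwards for the most recent job $j'\le j$ not handled by Step 3; this forces it to treat the whole block of Step-3 assignments in $\{j'+1,\dots,j\}$ at once, introducing the accumulated size $\lambda$ of that block and arguing that a subset of size $o_{j2}-\lambda$ was available as $W_{j'}$. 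Your forward induction replaces that block argument with the one-step bound $o_{j2}\le o_{(j-1)2}+p_j$ in the Step-3 case, after noting that the failure of the Step-2 condition forces the minimum in the inductive hypothesis to be $o_{(j-1)2}$. The two routes prove the same thing, but yours is a little cleaner: each case of the induction is local to a single job, the Step-4 case needs no inductive hypothesis at all, and there is no need to reason about which of Steps 2--4 last fired before time $j$. One small point worth making explicit if you write this up: in Step 4 one has $g_j=2$ (since Step 2 did not fire), which is what guarantees that $\{j\}$ belongs to the GoS-$2$ ground set and that $o_{j2}\le 1$ follows from the final optimal makespan being $1$ in the fixed optimal solution.
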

\begin{proof}
Assume by contradiction that there is an index $j$ for which $y_{j} < \min\{1-\mu,o_{j2}\}$, and that $j$ is the minimum index for which this holds. Consider the most recent time before time $j$, $j'\leq j$, that a job of GoS $2$ was assigned by the algorithm not by applying step 3, if such a time exists.

If there is no such $j'$, this means that for jobs $\{1,2,\ldots,j\}$ it holds that all jobs of GoS $2$ are assigned to machine $m_2$ (this set may contain any number of jobs, and it could be empty), while machine $m_1$ has exactly the jobs of GoS $1$. Machine $m_2$ has a subset of jobs with GoS $2$ in any solution including the optimal one, while the algorithm assigns all these jobs to $m_2$, and therefore $y_{j} \geq o_{j2} \geq \min\{1-\mu,o_{j2}\}$.

If there was an assignment of a job $j' \leq j$ whose grade of service is $2$ by step $2$, and there was no assignment by step $4$ up to and including job $j$, this means that $y_{j'-1}\geq 1-\mu$, and the load of $m_2$ could not decrease until time $j$ (since step $4$ is not applied, there is no migration and loads are monotonically non-decreasing as a function of $j$), so $y_{j}\geq y_{j'-1} \geq 1-\mu$.

Finally, assume that step $4$ was applied for job $j'\leq j$, and afterwards no job with GoS $2$ was assigned by step $2$ up to and including time $j$. Machine $m_1$ received a set of jobs of GoS $2$ in the application of step $4$ for job $j'$. When step $4$ is performed, machine $m_2$ receives a  set of jobs of maximum size not exceeding $1$ out of all possible subsets of already existing jobs. Since the optimal makespan at termination is $1$, the optimal makespan at this time is also at most $1$, and therefore $m_2$ has a set of jobs of total size at most $1$ in optimal solutions, that is, $o_{j'2} \leq 1$, and since one of the options for $W$ is the set of jobs of $m_2$ in the optimal solution at this time, we find $y_{j'}=w_{j'} \geq o_{j'2}$. If $j'=j$, we are done (since we already reach a contradiction), and otherwise,
according to the assumption we have that $y_{j} < \min\{1-\mu,o_{j2}\}$. By the choice of $j'$, all jobs in $\{j'+1,\ldots,j\}$ whose GoS is $2$ are assigned by step $3$. Thus, the jobs assigned to machine $m_2$ at time $j$ (the set $Y_j$) are exactly those of $W_j$ together with all jobs of GoS $2$ among $\{j'+1,\ldots,j\}$. Let the total size of this last set of jobs be denoted by $\lambda \geq 0$. Based on the optimal solution, we find that among jobs $1,2,\ldots,j'$ there is a subset of jobs of GoS $2$ whose total size is at least $o_{j2}-\lambda$, where $o_{j2}-\lambda \leq 1$ since $o_{j2} \leq 1$ and $\lambda\geq 0$. This subset could have been chosen as $W_{j'}$ and therefore
$w_{j'} \geq o_{j2}-\lambda$. However, the current load  of $m_2$ for the algorithm after $j$ is assigned satisfies $y_j = w_{j'}+\lambda$. Combining with $y_j < o_{j2}$ gives $o_{j2} > y_j = w_{j'}+\lambda \geq o_{j2}$, a contradiction.
\end{proof}

\begin{corollary}
After $j \geq 1$ jobs have been assigned, the load of the first machine is at most $1+\mu$.
\end{corollary}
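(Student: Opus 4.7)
The plan is to derive the bound on the load of $m_1$ from Lemma \ref{lemo1} combined with the elementary sum identity $x_j + y_j + z_j = o_{j1} + o_{j2}$, since both sides equal the total size of the jobs presented so far.

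First I would fix an index $j\geq 1$ and split into two cases based on which term attains the minimum in Lemma \ref{lemo1}. In the case where $y_j \geq 1-\mu$, I would use the fact that the total size of all jobs is at most $2$ (since the final optimal makespan is $1$ and there are two machines). This gives
\[
x_j + z_j \;=\; (x_j + y_j + z_j) - y_j \;\leq\; 2 - (1-\mu) \;=\; 1+\mu,
\]
which is the desired bound on the load of $m_1$.

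In the remaining case, $y_j < 1-\mu$, Lemma \ref{lemo1} forces the minimum in its statement to be $o_{j2}$ (otherwise we would have $y_j \geq 1-\mu$), so in particular $y_j \geq o_{j2}$. Using the identity $x_j + y_j + z_j = o_{j1} + o_{j2}$ I would then write
\[
x_j + z_j \;=\; o_{j1} + o_{j2} - y_j \;\leq\; o_{j1} \;\leq\; 1 \;\leq\; 1+\mu,
\]
where the last inequalities use $o_{j1}\leq 1$ (as the optimal makespan is $1$) and $\mu > 0$.

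I do not expect any real obstacle here; the corollary is a direct consequence of Lemma \ref{lemo1} together with the global size budget of $2$. The only thing to be careful about is to state clearly why the case $y_j < 1-\mu$ lets us replace the minimum in Lemma \ref{lemo1} by $o_{j2}$, so that the telescoping through $o_{j1}+o_{j2}$ yields the sharper estimate $x_j+z_j\leq 1$ in that regime.
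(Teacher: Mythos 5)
Your proof is correct and follows essentially the same route as the paper: both split according to which term achieves the minimum in Lemma \ref{lemo1}, use the total-size bound $x_j+y_j+z_j\leq 2$ in the case $y_j\geq 1-\mu$, and use the identity $x_j+y_j+z_j=o_{j1}+o_{j2}$ together with $o_{j1}\leq 1$ in the case $y_j\geq o_{j2}$. The only cosmetic difference is that you phrase the case split as $y_j\geq 1-\mu$ versus $y_j<1-\mu$ and then deduce $y_j\geq o_{j2}$ in the latter case, which is a valid reading of the lemma.
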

\begin{proof}
By Lemma \ref{lemo1}, we have $y_{j} \geq \min\{1-\mu,o_{j2}\}$. We also have $x_j+y_j+z_j=o_{j1}+o_{j2}$, which is the total size of the first $j$ jobs. We bound the load of $m_1$, which is $x_j+z_j$ as follows. If $y_j \geq 1-\mu$, using $x_j+y_j+z_j \leq 2$, we get $x_j+z_j \leq 2-y_j \leq 1+\mu$. If $y_j \geq o_{j2}$, we get $x_j+z_j =o_{j1}+o_{j2}-y_j \leq o_{j1} \leq 1$, since an optimal solution for a sub-input has makespan not exceeding $1$.
\end{proof}

\begin{lemma}
After $j$ jobs have been assigned, the load of the second machine is at most $1+\mu$.
\end{lemma}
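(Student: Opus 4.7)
My plan is to prove the bound by induction on $j$, with case analysis on which step of Algorithm $A$ handles job $j$. The base case $y_0 = 0 \leq 1+\mu$ is immediate, and I only need to verify that each of the four branches in the repeat loop preserves the invariant $y_j \leq 1+\mu$ whenever the previous value already satisfies it.

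For the inductive step, I would argue as follows. If job $j$ is handled by step 2 (either $g_j = 1$ or $y_{j-1} \geq 1-\mu$), then the job is routed to the first machine, so $y_j = y_{j-1}$ and the bound follows from the inductive hypothesis. If job $j$ is handled by step 3, then the condition $y_{j-1} + p_j \leq 1+\mu$ is explicitly tested before the assignment, and since $y_j = y_{j-1} + p_j$ in this case, the bound is immediate. Finally, if step 4 is applied, then by its definition $Y_j \leftarrow W$ where $W$ is a subset of $Z\cup Y\cup\{j\}$ of total size at most $1$; hence $y_j = w_j \leq 1 \leq 1+\mu$, where the last inequality holds because $\mu > 0$.

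I do not expect any real obstacle here: unlike Lemma 1, which required a delicate argument reasoning backwards to the most recent application of step 4, the upper bound on $y_j$ is enforced syntactically by the algorithm in each of its branches. The only place where one might pause is step 2 with $g_j = 1$, where one should note that the bound comes purely from the inductive hypothesis rather than from any property of $p_j$; and step 4, where one should notice that the feasibility of choosing $W$ of size at most $1$ is trivial since $W = \emptyset$ is always a valid choice, so the maximum is well-defined and bounded by $1$ by construction.
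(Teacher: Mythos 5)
Your proof is correct and follows essentially the same route as the paper: induction on $j$, with the observation that step 2 leaves $y_j$ unchanged, step 3 explicitly enforces $y_{j-1}+p_j\leq 1+\mu$, and step 4 yields $y_j=w_j\leq 1$. The extra remark on the well-definedness of $W$ is a harmless addition the paper leaves implicit.
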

\begin{proof}
We prove the claim by induction. Obviously, it holds for $j=0$ since $y_0=0$. If $y_j=y_{j-1}$, we are done. Otherwise, one of steps  $3$ and $4$ was applied. If $j$ is assigned in step $3$, then we have $y_j=y_{j-1}+p_j\leq 1+\mu$. Otherwise, $y_j=w_j \leq 1$.
\end{proof}

\begin{lemma}
For every $M\geq 2.5$, the migration factor of the algorithm is at most $M=\frac 1{\mu}-\frac 32$.
\end{lemma}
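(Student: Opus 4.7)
The plan is: since only step 4 of the algorithm performs any migration, the task reduces to bounding, at every step-4 invocation on some job $j$, the total size of migrated jobs by $Mp_j$. The failure of steps 2 and 3 yields the structural facts $g_j=2$, $y_{j-1}<1-\mu$, and $y_{j-1}+p_j>1+\mu$, which together give the crucial size bound $p_j>2\mu$. The arithmetic engine of the proof is the identity $(M+1)\cdot 2\mu = 2-\mu$, which follows from $M=\frac{1}{\mu}-\frac{3}{2}$: it implies that whenever migration $\leq 2-\mu-p_j$ holds, migration $\leq Mp_j$ follows automatically from $p_j>2\mu$.

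I would measure migration directly. Let $a$ be the total size of the jobs of $Y_{j-1}$ that remain in $W$ and $b$ the total size of the jobs of $Z_{j-1}$ that are promoted to $W$; the migrated jobs are exactly $(Y_{j-1}\setminus W)\cup(Z_{j-1}\cap W)$, so the migrated mass equals $(y_{j-1}-a)+b$. The newly arrived job $j$ is not itself migrated, whether or not $j\in W$. From $|W|\leq 1$ I obtain $a+b+[j\in W]\,p_j\leq 1$, and since $Y_{j-1}$ is a feasible candidate for $W$ I also have $w_j\geq y_{j-1}$.

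The case $j\in W$ is straightforward: combining $a+b+p_j\leq 1$ with $a\leq y_{j-1}$ gives migration $\leq y_{j-1}+1-p_j<2-\mu-p_j$, and the identity above yields migration $<Mp_j$. The case $j\notin W$ is the main obstacle, because the crude estimate migration $\leq y_{j-1}+w_j<2-\mu$ does not by itself beat $Mp_j$ when $p_j$ is close to $2\mu$. I would handle it by breaking ties in the definition of $W$ in favor of keeping the jobs of $Y_{j-1}$ on machine 2 (a choice consistent with the algorithm's specification): if $Y_{j-1}\subseteq W$, then migration $=b\leq 1-y_{j-1}<p_j-\mu\leq Mp_j$; otherwise I would invoke the auxiliary monotonicity fact---provable by induction, since $y$ is non-decreasing and every job demoted to $Z$ by a step-4 application at time $\ell$ has size at most $y_\ell$---that every job currently in $Z_{j-1}$ has size at most $y_{j-1}$, and combine this bound on individual $Z$-jobs with the maximality of $W$ via a swap argument to conclude migration $\leq Mp_j$ in the remaining sub-case as well.
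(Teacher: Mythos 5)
Your overall frame is right and matches the paper's arithmetic exactly: migration only happens in step 4, the failed conditions of steps 2 and 3 give $y_{j-1}<1-\mu$ and $y_{j-1}+p_j>1+\mu$, hence $p_j>2\mu$, and the identity $(M+1)\cdot 2\mu=2-\mu$ converts any bound of the form ``migration $\leq 2-\mu-p_j$'' into ``migration $\leq Mp_j$''. Your case $j\in W$ and your sub-case $Y_{j-1}\subseteq W$ are both correct (modulo writing $|W|\leq 1$ where you mean $w_j\leq 1$).

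The gap is the remaining sub-case, $j\notin W$ and $Y_{j-1}\not\subseteq W$, which you dispatch with an unspecified ``swap argument'' based on maximality of $W$ and the fact that every job of $Z_{j-1}$ has size at most $y_{j-1}$. That auxiliary fact is true, but maximality of $W$ inside the algorithm's state alone does not bound the migrated mass $(y_{j-1}-a)+b$ by $Mp_j$: if $W$ consists mostly of $Z$-jobs and excludes most of $Y_{j-1}$, your inequalities only give migration $<y_{j-1}+1<2-\mu$, which loses to $Mp_j$ when $p_j$ is near $2\mu$, and swapping a $Y$-job for $Z$-jobs in $W$ yields no contradiction. What actually rules this out is a fact your proposal never uses: the set of GoS-$2$ jobs on $m_2$ in an optimal solution of the first $j$ jobs is itself a candidate for $W$, so $w_j\geq o_{j2}$, hence the post-reassignment load of $m_1$ equals (total size)$-w_j\leq o_{j1}\leq 1$. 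This is the paper's route: after step 4 \emph{both} loads are at most $1$, so the jobs migrated onto the machine holding $j$ total at most $1-p_j$, the jobs migrated off that machine total at most its prior load $1-\mu$, and migration $\leq 2-\mu-p_j$ uniformly, with no case split on $j\in W$. Without appealing to the optimal solution (i.e., to the existence of a balanced partition certifying $w_j\geq \text{total}-1$), your last sub-case is not established; you should either import that lemma or replace the swap argument with the paper's two-sided load bound.
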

\begin{proof}
The only step where jobs may be migrated is step $4$, and in this case the GoS of the new job $j$ is $2$ ($g_j=2$). Note that $j$ was not assigned to any machine prior to this step and therefore it is not migrated but only other jobs of $W_j$ may be migrated.

We bound the loads of the two machines before and after the reassignment. By the action of the algorithm, $y_j=w_j\leq 1$, and $w_j \geq o_{j2}$. Moreover, by the choice of $W_j$ such that $w_j \geq o_{j2}$, we find that the resulting load of $m_1$ will be at most $o_{j1}+o_{j2}-y_j=o_{j1}+o_{j2}-w_j \leq o_{j1}\leq 1$. Thus, both loads will be at most $1$ for the algorithm after the assignment of $j$ is completed.

Before the application of step $4$, since step $3$ is not applied, it holds that $y_{j-1}+p_j >1+\mu$, and  we have $x_{j-1}+z_{j-1} \leq 2 - y_{j-1}-p_j \leq 1-\mu$. Since step $2$ was not applied and $g_j=2$, it was the case that $y_{j-1} < 1-\mu$, so we have that both loads were at most $1-\mu$. We note the case where one of the machines already has a total size of at least $1-\mu$ is in fact easy in the process of assigning a job of GoS $2$.

Let $i$ be the machine that has $j$ after the reassignment. The load of machine $i$ will be at most $1$, and the total size of jobs migrated from $m_{3-i}$ is at most $1-p_j$. Since the load of $m_i$ before the migration is at most $1-\mu$, this is an upper bound on the total size of jobs migrating to $m_{3-i}$. We find that the total size of migrated jobs does not exceed $2-p_j-\mu$.

We bound the value $p_j$ from below. Since step 2 was not applied, we have $y_{j-1}+p_j>1+\mu$. Using $y_{j-1}<1-\mu$ as well, we have $p_j > 1+\mu-y_{j-1} > (1+\mu)-(1-\mu)=2\cdot \mu$. Thus, the migration factor for the step of assigning $j$ is below $\frac{2-p_j-\mu}{p_j} <
\frac{2-3\mu}{2\mu}=\frac 1{\mu}-\frac 32$.
\end{proof}

We conclude with the following theorem.
\begin{theorem}
The competitive ratio for Algorithm $B$ is at most $1+\mu=1+\frac{2}{2M+3}=\frac{2M+5}{2M+3}$, and its
migration factor is at most $M$.
\end{theorem}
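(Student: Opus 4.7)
The plan is to observe that the theorem is essentially a summary of the three results already established in this subsection, so the proof will be a short composition rather than a new argument. First I would note that the assumed optimal makespan is $1$ (by the scaling convention stated in Section~1), so the competitive ratio on any instance $I$ equals the algorithm's makespan on $I$, which is $\max\{x_n+z_n,\, y_n\}$.

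Next I would bound each of these two quantities at the final time $j=n$. The preceding lemma shows $y_j \leq 1+\mu$ for every $j$, so in particular $y_n \leq 1+\mu$. The corollary (which itself relies on Lemma~\ref{lemo1}) shows $x_j+z_j \leq 1+\mu$ for every $j\geq 1$, so $x_n+z_n \leq 1+\mu$. Combining these two bounds gives a makespan of at most $1+\mu$, and since the optimal makespan is $1$, the competitive ratio is at most $1+\mu$. Substituting the definition $\mu = \frac{2}{2M+3}$ yields the claimed expression $\frac{2M+5}{2M+3}$.

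For the migration factor, I would simply invoke the last lemma, which shows that at every arrival the total size of migrated jobs is at most $\left(\frac{1}{\mu}-\frac{3}{2}\right) p_j$. Using the identity $M = \frac{1}{\mu}-\frac{3}{2}$ (which is equivalent to $\mu=\frac{2}{2M+3}$ and holds in the regime $M\geq \frac{5}{2}$ where $\mu\leq \frac14$), this is exactly $M\cdot p_j$, as required.

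There is no real obstacle: the theorem is a direct consequence of the chain Lemma~\ref{lemo1} $\Rightarrow$ Corollary on $m_1$, plus the separate load bound on $m_2$ and the migration bound. The only care needed is to check the algebraic conversion between $\mu$ and $M$ and to note that the theorem statement's ``Algorithm $B$'' is a typo for Algorithm $A$; otherwise the proof is a one-paragraph assembly of the prior lemmas.
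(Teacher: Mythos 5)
Your proposal is correct and matches the paper's own proof, which likewise just cites the preceding lemmas and corollary (the bound $y_j\leq 1+\mu$, the bound $x_j+z_j\leq 1+\mu$, and the migration-factor lemma) and is stated as following directly from them. Your additional care in spelling out the $\mu$-to-$M$ conversion and flagging the ``Algorithm $B$'' naming slip is sound but does not change the route.
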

\begin{proof}
The proof follows directly from the lemmas and corollary above.
\end{proof}

\subsection{A lower bound on the competitive ratio of the case where $\boldsymbol{M\geq 2.5}$}

\begin{theorem}
\label{ALGALOW2.5} The competitive ratio for every algorithm when $M \geq 2.5$ is at least $1+\mu = 1+\frac{2}{2M+3}$.
\end{theorem}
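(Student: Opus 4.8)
The plan is to construct an adversarial input that forces every algorithm, regardless of how it uses its migration budget, into a configuration where the final makespan is at least $1+\mu$ while an optimal offline solution achieves makespan exactly $1$. The key leverage is that $\mu = \frac{2}{2M+3}$ is precisely the threshold that makes the following trade-off tight: to keep the load of $m_2$ below $1+\mu$ at some critical step, an algorithm would have to have migrated too much mass relative to the size of the triggering job, exceeding the budget $M \cdot p_j$.

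First I would present a short sequence of jobs of GoS $2$ whose sizes are chosen so that the total size stays at most $2$ (honoring the bin-stretching promise that the optimal makespan is $1$). The natural skeleton is to feed several equal small jobs of GoS $2$ summing to something close to $1$, so that any online algorithm that wishes to keep $m_2$'s load low must place some of them on $m_1$ instead; then I would release a job of GoS $1$ (or a block of tiny GoS-$1$ jobs) of total size near $1$, which \emph{must} go on $m_1$ and which the algorithm cannot migrate away. The adversary's aim is to reach a point where $m_1$ already carries both a near-full batch of GoS-$1$ work and some GoS-$2$ jobs the algorithm earlier offloaded there, so that its load is forced above $1+\mu$; the only escape is to migrate those GoS-$2$ jobs back to $m_2$, but the migration budget tied to the small triggering GoS-$1$ job is zero (GoS-$1$ jobs can be taken arbitrarily small, as the authors note), so no migration is available at the decisive moment.

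The main obstacle, and the step I expect to require the most care, is the case analysis on how the algorithm distributes the early GoS-$2$ jobs before the GoS-$1$ mass arrives. I would parametrize by the amount $s$ of GoS-$2$ load the algorithm keeps on $m_2$ at the critical time and show a dichotomy: either $s$ is small, in which case $m_1$ already holds too much GoS-$2$ work and the subsequent GoS-$1$ mass pushes $T_1 \geq 1+\mu$; or $s$ is large, in which case a final GoS-$2$ job (whose size must be made small enough that the optimal solution can still absorb everything into makespan $1$, yet whose associated migration budget $M \cdot p_j$ is insufficient to rebalance) forces $T_2 \geq 1+\mu$. Pinning down the exact job sizes so that both branches cross the $1+\mu$ line simultaneously is where the value $\mu = \frac{2}{2M+3}$ must be solved for, and I would set up the balancing equation $1 + \mu = 2 - s$ against the migration-feasibility constraint to recover it.

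Throughout I would verify the optimal offline makespan is exactly $1$ at every prefix by exhibiting an explicit offline assignment (all GoS-$1$ jobs on $m_1$, the GoS-$2$ jobs split to fill both machines to at most $1$), so that the ratio $\frac{c_{Alg}(I)}{c^*(I)}$ is genuinely $\geq 1+\mu$. Since the lower bound must hold for \emph{every} algorithm and every $M \geq 2.5$, I would make sure the construction uses only the single free parameter $\mu$ and that all job sizes remain positive and summing to at most $2$ for the whole range; the fact that $\mu \leq \frac14$ here keeps the sizes well-defined and is what confines the argument to this regime of $M$.
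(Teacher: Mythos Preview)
Your plan has a genuine gap: the ``many small GoS-$2$ jobs'' skeleton does not force the algorithm into any commitment. If the early GoS-$2$ jobs are tiny and sum to at most $1$, the algorithm can simply place all of them on $m_2$ (so $s=T$ and $m_1$ carries nothing). Your first branch (``$s$ small'') then never occurs. In your second branch you rely on a final GoS-$2$ job of size $p$ whose migration budget $M\cdot p$ is ``insufficient to rebalance,'' but with $M\ge 2.5$ this budget is at least $2.5p$, and since every earlier job is tiny the algorithm can migrate exactly the mass it needs: it can hit any target load for $m_2$ in a wide interval, in particular some value in $[T+p-1-\mu,\,1+\mu]$, which keeps both machines at load at most $1+\mu$. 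So the dichotomy collapses and no bad configuration is forced. The balancing equation you allude to cannot be set up, because the algorithm never faces an indivisible commitment.

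What the paper actually does is the opposite of ``many small jobs'': it starts with \emph{two large} GoS-$2$ jobs of sizes $1-\gamma$ and $1-2\gamma$ (with $\gamma$ just below $\mu$). These sum to $2-3\gamma$, so putting both on one machine is already fatal, and the algorithm must split them. The adversary then sends one or two jobs of size $\Theta(\gamma)$ whose GoS depends on which large job sits on which machine. The decisive calculation is that the migration budget of a job of size $2\gamma$ is $2M\gamma$, while swapping even one of the large jobs requires moving at least $1-2\gamma$; the inequality $\frac{2-3\gamma}{2\gamma}=\frac{1}{\gamma}-\frac{3}{2}>\frac{1}{\mu}-\frac{3}{2}=M$ is exactly what pins down $\mu=\frac{2}{2M+3}$. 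The indivisibility of the two large jobs is the mechanism that defeats migration; your small-job construction removes precisely that mechanism.
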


\begin{figure}[h!]
\begin{center}
\begin{tabular}{c}
\includegraphics[width=160mm]{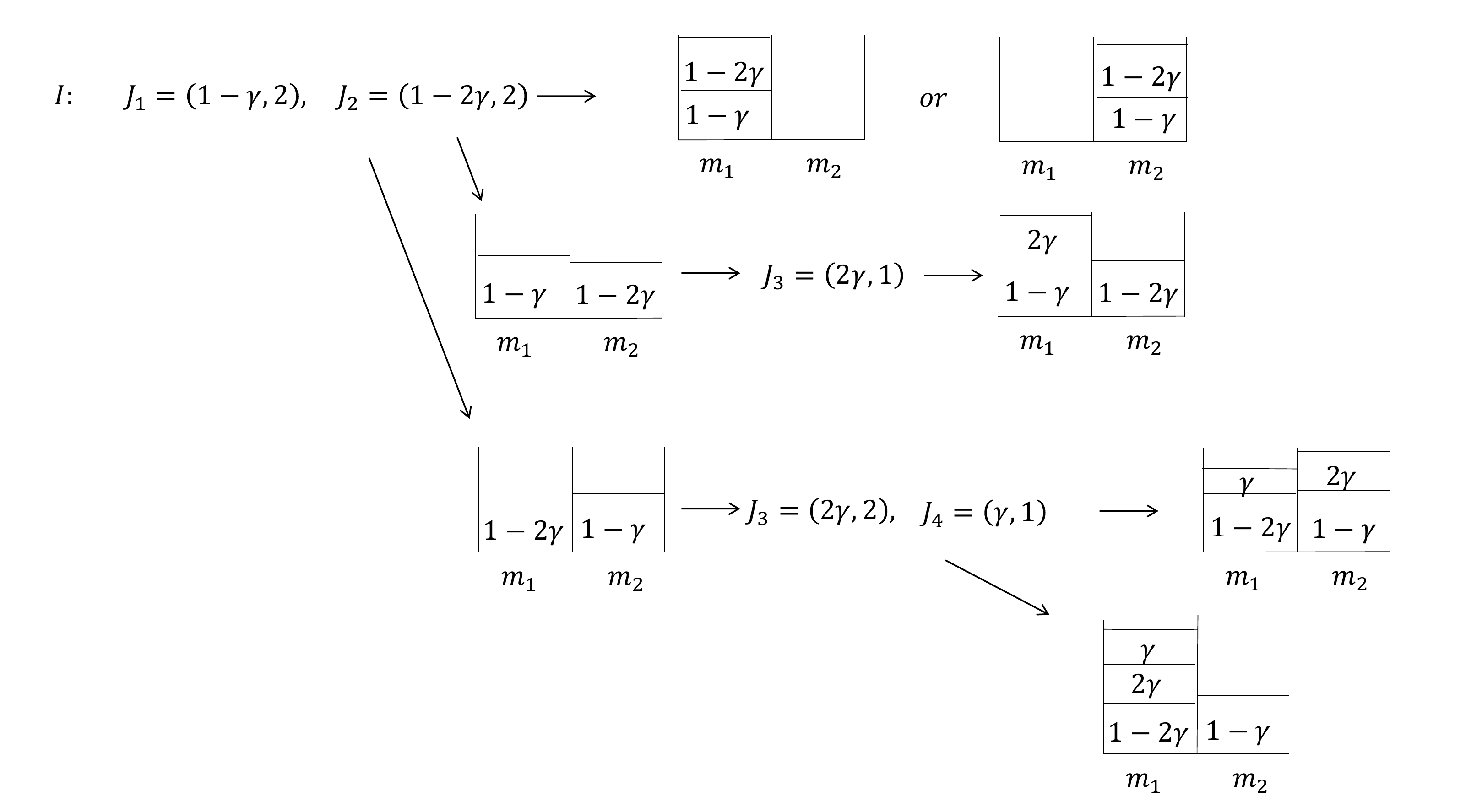}
\end{tabular}
\caption{\small
The schedules produced by an online algorithm with migration factor $ M \geq 2.5$ in the main cases of the proof of Theorem \ref{ALGALOW2.5}, represented as a decision tree. Some of the cases where both large jobs are assigned to the same machine are omitted.}
\label{f:1}
\end{center}
\end{figure}

\begin{proof} We define an input based on the actions of a fixed online algorithm. Let the first two jobs in the input be $J_1=(1-\gamma,2)$ and $J_2=(1-2\gamma,2)$, where $\gamma$ is a constant such that $ 0 < \gamma < \mu \leq \frac 14$ (whose value is close to $\mu$).

Consider the schedule of the online algorithm after the first two jobs were presented and all migrations were performed. If both jobs are
assigned to the same machine by the algorithm at this time, the makespan is $2-3\gamma \geq \frac 54 \geq 1+\mu$. In this case there are also very small jobs of GoS $2$ and total size $3\gamma$ (for example, six jobs of size $\frac{\gamma}2$), so the makespan of an optimal solution is indeed $1$, and the first two jobs cannot be migrated.

If the algorithm scheduled the first job on the first machine, and it scheduled the second job on the second machine, the input proceeds with a third job $J_3=(2\gamma,1)$. The optimal cost now is exactly $1$. This job cannot cause the migration of both previous jobs simultaneously as $\frac{2-3\gamma}{2\gamma}=\frac{1}{\gamma}-\frac 32 > \frac{1}{\mu}-\frac 32=M$. If it causes the migration of one
existing job, there will be a machine of load $2-3\gamma >1+\mu$, and the situation is as in the case where the two jobs were assigned to the same machine. Finally, if there is no migration, the first machine has load $1+\gamma$, as $J_3$ cannot be assigned to $m_2$. As mentioned earlier, an optimal solution has makespan $1$, since it swaps the machines of the first two jobs.

If the algorithm schedules the first job on the second machine, and the second job on the first machine, we continue to the next jobs in the input that are $J_3=(2\gamma,2), J_4=(\gamma,1)$. Once again, none of these jobs can cause the migration of more than one job out of the first two jobs, and the migration of just one job results in a makespan above $1+\gamma$. Consider the schedule after termination. Machine $m_1$ has jobs $J_2$ and $J_4$. If it also has $J_3$, its load is $1+\gamma$. Otherwise, $m_2$ has $J_1$ and $J_3$, and its load is $1+\gamma$, so the makespan of the algorithm is $1+\gamma$ in both cases, An optimal solution can schedule $J_1$ and $J_4$ on $m_1$ and the other two jobs on $m_2$.

Because all optimal schedules for the above inputs have makespan $1$, while the algorithm always has makespan not below $1+\gamma$, by letting $\gamma$ tend to $\mu$, we get a lower bound of $1+\mu$ on the competitive ratio of any online algorithm.

The  diagram of Figure \ref{f:1} describes the difficult cases of the aforementioned input.
\end{proof}

\section{The case $\boldsymbol{\frac 34\leq M<\frac 52}$}

In this section, we prove a tight bound on the competitive ratio for the same semi-online problem, but with a different values for the migration factor $M$, which will now satisfy $\frac 34 \leq M< \frac 52$. The tight bound on the competitive ratio for this problem is equal to $1.25$. The proof of a tight bound for this problem in this case is divided into two parts as in the previous case. In the first part we present an $Algorithm \; B$ whose competitive ratio is $1.25$ and its migration factor does not exceed $\frac 34$. The second part is a
lower bound of $1.25$ on the competitive ratio for any online algorithm with $M \leq 2.5$. This part was in fact proved in the previous section, since this is the lower bound for the case $M=2.5$.

The idea behind this algorithm is similar to that of $Algorithm \ A$, except that $Algorithm \ B$ maintains the load of machine $m_2$ such that it will not exceed $1.25$, and such that the final load will be within a fixed interval $[0.75,1.25]$. Due to this property, the final load of machine $m_1$ will not exceed $1.25$. Now, we present the algorithm and analyze it, and prove an upper bound on the competitive ratio.

\subsection{An algorithm}

\textbf{Algorithm \textit{B}}

Let $X=\emptyset,\ Y=\emptyset,\ Z=\emptyset,\ W=\emptyset;$

Repeat until all jobs have been assigned:
\begin{enumerate}
\item Receive job $j$ with $p_j$ and $g_j$; \item If at least one of $g_j=1$ and $y_{j-1} \geq 0.75$ holds,
 assign $j$ to the first machine and update: $X \leftarrow X\cup \{j\},$ or $Z \leftarrow Z\cup \{j\},$ respectively.

            return to step 1.
\item If $y_{j-1}+p_j \leq 1.25$ holds, assign $j$ to the second machine and update: $Y\leftarrow Y\cup \{j\},$

            return to step 1. In the remaining cases $Y_{j-1}$ is non-empty.
\item If $p_j \geq 0.75$ holds,
sort the jobs in $Y$ by non-increasing size, and update $W$ to be the maximum prefix of the sorted list with total size at most $0.75 \cdot p_j$ (which may be empty).
\begin{itemize} 
\item If $y_{j-1}-w_j+p_j > 1.25$ holds, assign $j$ to the first machine and update: $Z\leftarrow Z\cup\{j\}$.
\item Otherwise, migrate the jobs of $W$ to the first machine, assign $j$ to the second machine and update: $Y\leftarrow (Y \setminus W) \cup \{j\}$ and $Z\leftarrow Z\cup W$.\end{itemize}
            return to step 1.
\item In this case $p_j + p^{\max Y}_j > 1.25$ holds. Assign $j$ to the first machine and update: $Z\leftarrow Z\cup \{j\}$.

Otherwise:
\begin{itemize}
\item If $p^{\max Y}_j \geq \frac{y_{j-1}}{2}$ holds, update $W\leftarrow Y \setminus \{j^{\max Y}\}$.
\item If $0.25 \leq p^{\max Y}_j < \frac{y_{j-1}}{2}$ holds, update $W\leftarrow \{j^{\max Y}\}$.
\item In this case, $p^{\max Y}_j < 0.25$ holds. Sort the jobs of $Y$ by non-increasing size and update $W$ to be the minimum length prefix of the sorted list with total size at least 0.25, and the entire list  $Y$ if no such prefix exists.\\
If $w_j > 0.75 \cdot p_j$ holds, update $W\leftarrow Y \setminus W$.\end{itemize}
Migrate all jobs of $W$ to the first machine, assign $j$ to the second machine and update: $Z\leftarrow Z\cup W$ and $Y\leftarrow (Y \setminus W)\cup \{j\}$.

    return to step 1.
\end{enumerate}

The idea of the first three steps in $Algorithm \ B$ is the same idea of the $Algorithm \ A$. In the last steps, step $4$ and $5$, when $Algorithm \ B$ is forced to migrate jobs, the idea is different. In these steps the algorithm tries to find a subset $W$ of $Y$ (in polynomial time), such that it can migrate this set to machine $m_1$. Jobs are only ever migrated from machine $m_2$ to machine $m_1$. The algorithm attempts to find such a subset of jobs that their total size satisfies the following property for $w_j$: $\gamma_j \leq w_j \leq 0.75 \cdot p_j$, where we let $\gamma_j= y_{j-1}+p_j-1.25$. The value  $\gamma_j$ is a lower bound on the total size that has to be migrated from the second machine to obtain $y_j\leq 1.25$ in order to allow $j$ to be scheduled on the second machine. If step $4$ is reached, by the condition of step $3$ (which does not hold), $\gamma_j$ is strictly positive (and so is $y_{j-1}$ which is implied by $p_j\leq 1$).
The way $W$ is selected is based on the required migration factor of $0.75$, and the correctness is based in particular on the  fact that in each input there are no three jobs whose processing time of each pair of them is greater than $1$, since the optimal cost is $1$ (by the pigeonhole principle). In both steps $4$ and $5$, if the algorithm does not find a set $W$ as required, then job $j$ will be assigned to machine $m_1$ (which is tested in advance in both steps). In what follows, we prove that for each input, once the algorithm enters step $4$ or step $5$, it will never enter any of those steps again. In both cases (no matter whether there is a suitable subset $W$ or not) the final load of the two machines will be at most $1.25$.

In the next lemma we discuss step $4$, and the case where the new job is assigned to the first machine, which is done since the condition $y_{j-1}-w_j+p_j > 1.25$ holds. In this case, $y_{j-1}>0$ holds, and therefore $Y$ is non-empty and $p^{\max Y}_j>0$. If $W$ is non-empty, it holds that $0 < p^{\max Y}_j \leq 0.75 \cdot p_j$.

\begin{lemma} \label{lB:1}
Consider step $4$. If $j$ is assigned to the first machine and $W$ is not empty, then it holds that $p_j + p^{\max Y,2}_j > 1$ and $p^{\max Y}_j + p^{\max Y,2}_j > 0.75\cdot p_j > 0.5$  (and $p_j+p^{\max{Y}}_j > 1$ holds as well).
\end{lemma}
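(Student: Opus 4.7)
The plan is to extract the numerical constraints imposed by the conditions that lead us into this branch (step 4, $j$ assigned to the first machine, $W$ non-empty), and then pin down the exact structure of $W$.

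First I would collect the controlling inequalities. Because step 2 was skipped and $g_j=2$, we have $y_{j-1} < 0.75$; because step 3 was skipped, $y_{j-1} + p_j > 1.25$; entering step 4 gives $p_j \geq 0.75$; and the ``assigned to the first machine'' subcase gives $y_{j-1} - w_j + p_j > 1.25$, so $w_j < y_{j-1} + p_j - 1.25 < p_j - 0.5$, which also yields $w_j < 0.5$ (using $p_j \le 1$, since the optimal makespan is $1$).

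Next I would argue $|W|=1$. Observe $W \ne Y$, because $W=Y$ would turn the ``to first machine'' test into $p_j > 1.25$, impossible. So the non-increasing sort of $Y$ has a next element $j_{k+1}\notin W$, and maximality of the prefix forces $w_j + p_{j_{k+1}} > 0.75 p_j$. Suppose for contradiction $|W|\ge 2$; then $w_j \ge p^{\max Y}_j + p^{\max Y,2}_j \ge 2p^{\max Y,2}_j$, so $p^{\max Y,2}_j < (p_j-0.5)/2$, while $p_{j_{k+1}} \le p^{\max Y,2}_j$ (the next element of the sorted list comes no later than $j^{\max Y,2}$ once $|W|\ge 2$) and $p_{j_{k+1}} > 0.75 p_j - w_j > 0.5 - 0.25 p_j$. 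Chaining gives $0.5 - 0.25 p_j < (p_j-0.5)/2$, i.e., $p_j > 1$, a contradiction. Hence $|W|=1$ and $w_j = p^{\max Y}_j$.

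With $|W|=1$ and $|Y|\ge 2$ (forced by $W \ne Y$), maximality of the prefix applied to the next element $j^{\max Y,2}$ gives $p^{\max Y}_j + p^{\max Y,2}_j > 0.75 p_j$; since $p_j \ge 0.75$ this exceeds $0.5$, which settles the middle chain. The same maximality gives $p^{\max Y,2}_j > 0.75 p_j - p^{\max Y}_j$, and substituting $p^{\max Y}_j = w_j < p_j - 0.5$ yields $p^{\max Y,2}_j > 0.5 - 0.25 p_j$. Therefore $p_j + p^{\max Y,2}_j > 0.75 p_j + 0.5 \ge 0.75\cdot 0.75 + 0.5 > 1$, and $p_j + p^{\max Y}_j \ge p_j + p^{\max Y,2}_j > 1$ follows from $p^{\max Y}_j \ge p^{\max Y,2}_j$. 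The main obstacle is excluding $|W|\ge 2$: one has to marry the upper bound on $w_j$ coming from the ``first-machine'' test with the lower bound on the first job left out of $W$ coming from prefix maximality, and notice that these pinch through only when $p_j > 1$, which the assumption on the optimal makespan rules out.
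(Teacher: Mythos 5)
Your proposal is correct and follows essentially the same route as the paper's proof: extract $w_j < p_j - 0.5 \le p_j/2$ from the first-machine test, rule out $|W|\ge 2$ by pinching the prefix-maximality bound $w_j + p_{j_{k+1}} > 0.75\,p_j$ against the upper bound on $w_j$ (the paper phrases this as the case $\alpha\ge 3$ for the position of the first excluded job, but it is the same contradiction), and then read off the three inequalities from $w_j = p^{\max Y}_j$ and maximality applied to $j^{\max Y,2}$. The only quibble is a wording slip ("comes no later than $j^{\max Y,2}$" should be "no earlier than position $3$"), but the inequality you use, $p_{j_{k+1}} \le p^{\max Y,2}_j$, is the right one.
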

\begin{proof}
We start with the properties which hold for any job $j$ assigned to machine $m_1$ in step $4$. Since this step is applied, we have $p_j \geq 0.75$. Since step $2$ was not applied, we have $y_{j-1}<0.75$. Since step $3$ was not applied, we have $y_{j-1}+p_j>1.25$.

By $y_{j-1}-w_j+p_j > 1.25$, which is the condition for assignment to the first machine, we get $w_j<y_{j-1}+p_j-1.25$. By $y_{j-1}<0.75$, this yields $w_j<p_j-0.5$, and by $p_j \leq 1$, we find $w_j<p_j-\frac{p_j}2=\frac{p_j}2$. Since $p_j\leq 1$, we also have $y_{j-1}-w_j > 0.25$, so there is at least one job of $Y_{j-1}$ that is not included in $W_j$. In particular, $p^{\max Y,2}_j$ is well-defined, since $W$ is non-empty.

Let $J_k$ be the first job (with the maximum processing time) in $Y \setminus W$ according to the sorted list of $Y_{j-1}$ (which was used to find $W_j$). Let $\alpha \geq 2$ be the index of $k$ in the list for $Y_{j-1}$ (it holds that $\alpha \neq 1$, since $W$ is non-empty). Every job among the first $\alpha$ jobs of this list has size at least $p_k$, so $w_j \geq (\alpha-1)\cdot p_k$. Consider the case $\alpha \geq 3$, where $W_j$ consists of at least two jobs, and $w_j \geq 2\cdot p_k$ (or equivalently, $p_k\leq \frac{w_j}2$). In this case, we have $w_j+p_k \leq \frac{3\cdot w_j}2$ and $w_j+p_k>0.75 \cdot p_j$ (by the choice of $W$), yielding $w_j>\frac{p_j}2$, a contradiction to a property proved earlier. Thus, we are left with the case $\alpha=2$. In particular, $p_k=p^{\max Y,2}_j$, and by $w_j+p_k>0.75 \cdot p_j$ we obtain that $p^{\max Y}_j + p^{\max Y,2}_j > 0.75\cdot p_j > 0.5$, since $p_j \geq 0.75$.

Since $w_j=p^{\max Y}_j$, we get $p^{\max Y}_j<p_j-0.5$, or alternatively, $p_j-p^{\max Y}_j>0.5$, and we also use  $p^{\max Y}_j+p^{\max Y,2}_j>0.5$, where we already proved the last inequality. Taking the sum of the two inequalities provides the first property of the lemma. It also holds that $p_j+p^{\max{Y}}_j>1$, because $p^{\max{Y}}_j\geq p^{\max{Y,2}}_j$.
\end{proof}

\begin{lemma} \label{lB:2}
For any job $j$ satisfying $p_j\leq 0.5$, $j$ is assigned in step $2$ or step $3$. Once the property $y_j \geq 0.75$ holds for some job $j$, all jobs arriving after $j$ will be scheduled in step $2$.
\end{lemma}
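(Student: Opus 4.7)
The plan is to prove both claims by a direct inspection of the branching conditions of Algorithm $B$; no delicate inequalities are needed.

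For the first claim, I would rule out steps $4$ and $5$ whenever $p_j \leq 0.5$. Step $4$ is guarded by $p_j \geq 0.75$ and is therefore immediately excluded. To rule out step $5$, I would observe that it can be reached only if step $2$ and step $3$ both fail. Since step $5$ presupposes $g_j = 2$ (otherwise step $2$ would already trigger via $g_j=1$), the failure of step $2$ gives $y_{j-1} < 0.75$, and the failure of step $3$ gives $y_{j-1} + p_j > 1.25$. Combining these two inequalities yields $p_j > 1.25 - 0.75 = 0.5$, contradicting $p_j \leq 0.5$. Thus only steps $2$ and $3$ remain, and I would then verify that at least one of them actually fires: if $g_j = 1$ or $y_{j-1} \geq 0.75$, step $2$ applies; otherwise $g_j = 2$ and $y_{j-1} < 0.75$, and using $p_j \leq 0.5$ we get $y_{j-1} + p_j < 1.25$, so step $3$ applies.

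For the second claim, the key structural fact is that step $2$ only updates $X$ or $Z$, both of which sit on $m_1$, so it never changes $Y$; in particular the load $y$ is preserved. From this the statement follows by a one-step induction: if $y_{j'} \geq 0.75$ holds at some point $j' \geq j$, then at the arrival of job $j'+1$ the condition $y_{(j'+1)-1} = y_{j'} \geq 0.75$ is satisfied, so step $2$ handles $j'+1$ and leaves $y$ unchanged, giving $y_{j'+1} \geq 0.75$. Iterating carries the property to every job arriving after $j$.

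I do not expect any real obstacle. The only spot that deserves a moment of care is the step-$5$ exclusion, where one must explicitly chain the two failure conditions $y_{j-1}<0.75$ and $y_{j-1}+p_j>1.25$ to obtain $p_j > 0.5$; everything else is a mechanical check against the algorithm's pseudocode.
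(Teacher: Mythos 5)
Your proof is correct and follows essentially the same route as the paper's: the first part reduces to the observation that $y_{j-1}<0.75$ and $p_j\leq 0.5$ force $y_{j-1}+p_j<1.25$ so step $3$ fires, and the second part is the same one-step induction using the fact that step $2$ leaves $Y$ (and hence $y$) unchanged. Your explicit exclusion of steps $4$ and $5$ is just the contrapositive of the paper's direct case analysis, so nothing substantively different is added.
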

\begin{proof}
We start with the first part. If $g_j=1$, $j$ is assigned in step $2$. If prior to the arrival of $j$ it holds that $y_{j-1} \geq 0.75$, it is also assigned in step $2$. Thus, we are left with the case $g_j=2$ and $y_{j-1}<0.75$. In this case, $y_{j-1}+p_j < 1.25$, and therefore $j$ is assigned in step $3$.

The second part holds since once $y_j \geq 0.75$ holds, it can be shown inductively that every job is assigned in step $2$ (to $m_1$), and the property will hold after the assignment.
\end{proof}

\begin{lemma}\label{lB:3}
For every input, the algorithm enters step $4$ at most once. And if the algorithm enters step $4$, it will assign all future jobs in steps $2$ and $3$.
\end{lemma}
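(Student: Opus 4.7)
My plan is to let $j$ denote the first time Algorithm $B$ enters step $4$ and to show that every job arriving after $j$ is handled by step $2$ or step $3$; this yields both halves of the statement simultaneously, since in particular no later job can invoke step $4$. The argument splits on which of the two branches of step $4$ is taken at time $j$.

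In the migration branch, the set $W_j$ is moved to $m_1$ and $j$ itself is placed on $m_2$. Because $W_j \subseteq Y_{j-1}$ we have $w_j \leq y_{j-1}$, and therefore $y_j = y_{j-1} - w_j + p_j \geq p_j \geq 0.75$. The second part of Lemma \ref{lB:2} then sends every subsequent job into step $2$, finishing this branch.

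In the other branch $j$ is placed on $m_1$, so $y_j = y_{j-1} < 0.75$ and Lemma \ref{lB:2} does not apply directly. Suppose, for contradiction, that $k > j$ is the earliest later time at which the algorithm would enter step $4$ or step $5$. Between $j$ and $k$ only steps $2$ and $3$ fire, and neither removes elements from $Y$, so $Y_{j-1} \subseteq Y_{k-1}$. The triggering conditions of steps $4$ and $5$, combined with Lemma \ref{lB:2}, force $g_k = 2$ and $p_k > 0.5$, so $p_j + p_k > 1$ and in every optimal schedule of the prefix $1,\ldots,k$ the jobs $j$ and $k$ are placed on different machines.

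Now I split on whether $W_j$ was empty at time $j$. If $W_j = \emptyset$, then the largest job of the sorted $Y_{j-1}$ must exceed $0.75\, p_j \geq 0.5625$, producing a job $a \in Y_{j-1}$ with $p_a > 0.5$; the three GoS $2$ jobs $j$, $a$, $k$ then have pairwise sums strictly greater than $1$, so no assignment onto two machines can keep every load at most $1$, contradicting that the optimum equals $1$. If $W_j \neq \emptyset$, Lemma \ref{lB:1} supplies $p_j + p^{\max Y,2}_j > 1$, $p_j + p^{\max Y}_j > 1$, and $p^{\max Y}_j + p^{\max Y,2}_j > 0.75\, p_j$. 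Setting $a = j^{\max Y}$ and $b = j^{\max Y,2}$, both of which lie in $Y_{j-1} \subseteq Y_{k-1}$, the job $j$ cannot share an optimal machine with any of $a$, $b$, $k$, so $a$, $b$, $k$ must all sit opposite $j$. Their total size then exceeds $0.75\, p_j + p_k > 0.5625 + 0.5 > 1$, contradicting the optimum. The main obstacle is assembling this optimum-pigeonhole argument in the $W_j \neq \emptyset$ sub-case: the precise sum bounds from Lemma \ref{lB:1} are exactly what is needed to force three rival jobs onto the machine opposite $j$, while the $W_j = \emptyset$ case and the migration branch are essentially immediate consequences of Lemma \ref{lB:2}.
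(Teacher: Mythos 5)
Your proof is correct and follows essentially the same route as the paper's: the same three-way case split (assignment to $m_2$; assignment to $m_1$ with $W$ empty or non-empty), the same reliance on Lemmas \ref{lB:1} and \ref{lB:2}, and the same pigeonhole argument on an optimal schedule with makespan $1$. The only difference is cosmetic: you argue by contradiction with a hypothetical later job $k$ entering step $4$ or $5$, whereas the paper directly bounds the sizes of all future jobs by $0.5$ and then invokes Lemma \ref{lB:2}.
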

\begin{proof}
If step $4$ was not applied at all during the execution, we are done. Otherwise, let $j$ be the first job in which the algorithm enters step $4$, where in particular, we have $p_j \geq 0.75$. Since step $4$ is reached, step $3$ was not applied and $y_{j-1}+p_j>1.25$ holds. Recall that in this case $Y_{j-1}$ is non-empty. If $j$ is assigned to machine $m_1$ and $W$ is empty, the first job in the sorted order satisfies $p^{\max Y}_j > 0.75 \cdot p_j>0.5$. Since both $j$ and an earlier jobs have sizes above $\frac 12$, two jobs of sizes above $0.5$ have arrived already. The input can contain at most two such jobs, so any job that arrives after $j$ has a processing time of at most $0.5$, and by the previous lemma, every such job is assigned in step $2$ or step $3$.

If $j$ is assigned to machine $m_1$ and $W$ is not empty, then by the second property of Lemma \ref{lB:1}, the set $Y_{j-1}$ contains two jobs whose total processing time is greater than $0.75 \cdot p_j > 0.5$. None of the two jobs can be scheduled on the same machine with $j$ in any optimal solution  by the first property of Lemma \ref{lB:1}. Any optimal solution has one machine with $j$, whose size is above $\frac 12$, and another machine with two jobs that arrived before $j$, where their total size is also above $\frac 12$. Thus, so each job that arrives after $j$ has processing time less than $0.5$ (any job arrives after $j$ should be with scheduled with $j$ or with the above two largest jobs of $Y_{j-1}$ in any optimal solution), and once again all further jobs are scheduled in a step prior to step $4$, by the first part of Lemma \ref{lB:2}.

Finally, if $j$ is assigned to machine $m_2$, no matter what the properties of $W$ are, we get $y_j \geq p_j \geq 0.75$, and by the second part of Lemma \ref{lB:2}, this means that all jobs will be assigned in step $2$.
\end{proof}

In what follows, let $\Lambda$ denote the total size of jobs of GoS $1$ in the entire input.

\begin{lemma} \label{lB:4}
If step $5$ is applied for a new job $j$, and it holds that $p_j + p^{\max Y}_j \leq 1.25$, then the algorithm always manages to update $W$ such that $\gamma_j \leq w_j \leq 0.75 \cdot p_j$ and $w_j<0.5$, and after assigning $j$, the makespan will not exceed $1.25$. If afterwards $m_1$ receives only jobs of GoS $1$, then its load will remain at most $1.25$.
\end{lemma}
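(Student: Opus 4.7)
The plan is to first pin down the arithmetic regime that holds whenever step $5$ is entered. Since step $2$ was skipped, $g_j=2$ and $y_{j-1}<0.75$; since step $3$ was skipped, $y_{j-1}+p_j>1.25$; since step $4$ was skipped, $p_j<0.75$. Combining these inequalities yields $0.5<y_{j-1}<0.75$, $0.5<p_j<0.75$, and $0<\gamma_j<0.25$, so the lower bound $w_j\geq \gamma_j$ is effectively a comparison of $w_j$ against a value below $0.25$.

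Next I would verify the bounds $\gamma_j\leq w_j\leq 0.75\cdot p_j$ and $w_j<0.5$ in each of the three branches of step $5$. In the branch $p^{\max Y}_j\geq y_{j-1}/2$, the algorithm sets $w_j=y_{j-1}-p^{\max Y}_j$; the hypothesis $p_j+p^{\max Y}_j\leq 1.25$ directly gives $w_j\geq \gamma_j$, while $w_j\leq y_{j-1}/2<0.375<0.75\cdot p_j$ using $p_j>0.5$. In the branch $0.25\leq p^{\max Y}_j<y_{j-1}/2$, $w_j=p^{\max Y}_j\geq 0.25>\gamma_j$ and $w_j<y_{j-1}/2<0.375<0.75\cdot p_j$. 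The delicate branch is $p^{\max Y}_j<0.25$: a minimum-length prefix of the sorted $Y_{j-1}$ of total size at least $0.25$ exists because $y_{j-1}>0.5$, and since every job in $Y_{j-1}$ has size strictly less than $0.25$, its total lies in $[0.25,\,0.5)$. If that total is at most $0.75\cdot p_j$ the bounds are immediate. Otherwise the algorithm flips $W\leftarrow Y\setminus W$, so that the new $w_j$ equals $y_{j-1}-w_{\mathrm{old}}$; the upper bound $w_j\leq 0.75\cdot p_j$ follows from $y_{j-1}<0.75$ and $p_j>0.5$, the lower bound $w_j\geq \gamma_j$ uses $w_{\mathrm{old}}<0.5$ to obtain $w_j>y_{j-1}-0.5\geq \gamma_j$ (precisely the step where $p_j<0.75$ is essential), and $w_j<0.5$ follows from $w_{\mathrm{old}}>0.75\cdot p_j>0.375$ together with $y_{j-1}<0.75$.

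With $w_j$ controlled, the makespan claim at time $j$ falls out quickly. The new load of $m_2$ equals $y_{j-1}+p_j-w_j$, which is at most $1.25$ by $w_j\geq \gamma_j$ and strictly greater than $0.75$ by $w_j<0.5$ combined with $y_{j-1}+p_j>1.25$. The new load of $m_1$ equals the total size of the first $j$ jobs minus $y_j$, hence at most $2-0.75=1.25$.

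For the last sentence of the lemma I appeal to $y_j>0.75$: by the second part of Lemma \ref{lB:2}, every subsequent job is placed on $m_1$ through step $2$. If after $j$ the machine $m_1$ receives only jobs of GoS $1$, then no GoS $2$ job can arrive after $j$, since any such job would also be placed on $m_1$; thus $y_n=y_j>0.75$ and the total-size bound yields a final load of $m_1$ of at most $2-y_n<1.25$. The main obstacle will be the flip subcase of the $p^{\max Y}_j<0.25$ branch, where both $w_{\mathrm{old}}$ and its complement need to be controlled simultaneously, and where the lower bound $w_j \geq \gamma_j$ crucially uses the strict inequality $p_j<0.75$ inherited from step $4$ being skipped.
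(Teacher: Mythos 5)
Your proof is correct and follows essentially the same route as the paper's: the same derivation of the regime $0.5<p_j<0.75$, $0.5<y_{j-1}<0.75$, $\gamma_j<0.25$, the same case analysis over the three branches of step $5$ (including the flip to the complement in the $p^{\max Y}_j<0.25$ branch, where $w_{\mathrm{old}}>0.75\cdot p_j$ together with $y_{j-1}<0.75\leq 1.5\,p_j$ gives the upper bound), and the same total-size accounting for $m_1$. The only cosmetic difference is that you phrase the final $m_1$ bound as $2-y_n$ with $y_n=y_j>0.75$ rather than via $\Lambda+z_{j-1}+w_j$ as the paper does; the computation is identical.
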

\begin{proof}
Consider job $j$ that is assigned in step $5$. Since $j$ is not assigned in an earlier step, and due to the assignment in step $5$, we have $0.5< p_j <0.75$, $y_{j-1}<0.75$, and $p_j+y_{j-1} > 1.25$. So $0.5<y_{j-1}<0.75$ holds as well. In particular, $Y_{j-1}$ is non-empty.
Additionally, we have $\gamma_j= y_{j-1}+p_j-1.25 <0.75 +0.75-1.25= 0.25$. Suppose that $p_j + p^{\max Y}_j \leq 1.25$ holds. We first show that in all three options of case $5$ for assigning $j$ to the second machine we have $\gamma_j \leq w_j \leq 0.75 \cdot p_j$ and $w_j<0.5$.

Now, if $p^{\max Y}_j \geq \frac{y_{j-1}}{2}$ holds, the algorithm updates $W$ to be $Y \setminus \{j^{\max Y}\}$, and in this case we get
$w_j=y_{j-1}-p^{\max Y}_j \leq \frac{y_{j-1}}{2} < \frac38 < 0.75\cdot p_j$. On the other hand $\gamma_j= y_{j-1}+p_j-1.25 =  (w_j + p^{\max Y}_j) +p_j-1.25$, since $p^{\max Y}_j +p_j \leq 1.25$, this yields $\gamma_j \leq w_j$.

If $0.25 \leq p^{\max Y}_j < \frac{y_{j-1}}{2}$ holds, the algorithm updates $W$ to contain only $\{j^{\max Y}\}$, i.e $w_j = p^{\max Y}_j$, and in this case we get $\gamma_j < 0.25 \leq w_j < \frac{y_{j-1}}{2} < \frac38 < 0.75\cdot p_j$.

Finally, if $p^{\max Y}_j < 0.25$, the algorithm sorts the jobs of $Y$ by non-increasing size and updates $W$ to be the minimum length prefix of the sorted list with total size at least $0.25$. Since $p^{\max Y}_j < 0.25$ (so all jobs of $Y_{j-1}$ are smaller than $0.25$),  while $y_{j-1}>0.5$, the algorithm initially updates $W$ to be of total size in $[0.25,0.5)$. Since $0.5 < y_{j-1} < 0.75$, the total size of the jobs of the complement set is smaller than $0.5$. Therefore, no matter whether $W$ is the original one or the complement, both $w_j<0.5$ and $y_{j-1}-w_j<0.5$ will hold.

If $w_j \leq 0.75\cdot p_j$, the set $W$ is not modified further, and we are done also with respect to this property of $W$. Otherwise, the algorithm updates $W$ to be its complement set with respect to $Y$. Since $y_{j-1}<0.75\leq \frac{3p_j}2$, for the partition of $Y_{j-1}$, at least one of the two subsets has total size at most $\frac{3p_j}4$, and therefore the total size of the complement set is smaller than $0.75\cdot p_j$, and for the final and possibly modified set $W$, we get $w_j \leq 0.75 \cdot p_j$. On the other hand $\gamma_j= y_{j-1}+p_j-1.25  < (0.5 + w_j) +p_j-1.25 = w_j +p_j -0.75$. Since $p_j < 0.75$ we get $\gamma_j < w_j$.

To prove the last part, note that $j$ will be assigned to machine $m_2$, and recall that we have $\gamma_j \leq w_j \leq 0.75\cdot p_j$ where the algorithm migrates $W$ to machine $m_1$. Thus, after assigning $j$ we get $y_j = p_j + y_{j-1} - w_j \leq p_j + y_{j-1} - \gamma_j = 1.25$.

As for the load of the first machine, we have $y_{j-1}+p_j>1.25$, and therefore the total size of jobs whose GoS is $1$ is below $0.75$. We have $z_{j-1}+y_{j-1}+p_j \leq 2-\Lambda$. If the first machine only receives jobs of GoS $1$ after $j$ was assigned, its load is at most $\Lambda+z_{j-1}+w_j \leq 2-(y_{j-1}+p_j)+w_j < 2- 1.25 +0.5=1.25$.
\end{proof}

\begin{lemma} \label{lB:5}
For every input, the algorithm enters step $5$ at most once. And if the algorithm enters step $5$, all further jobs will be assigned in steps $2$ and $3$.
\end{lemma}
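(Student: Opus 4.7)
The plan is to split the argument according to which branch of step $5$ is taken for the first job $j$ that triggers it, and show that in both branches the scheduling state afterward rules out any further entry into step $4$ or step $5$. First I would collect the inequalities forcing step $5$ to be reached. Since step $4$ was not applied we have $p_j<0.75$; since step $3$ was not applied we have $y_{j-1}+p_j>1.25$; since step $2$ was not applied we have $g_j=2$ and $y_{j-1}<0.75$. Combining these yields $p_j>0.5$ and $y_{j-1}>0.5$, so in particular $Y_{j-1}$ is non-empty and $p^{\max Y}_j$ is well-defined.

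The first branch of step $5$ occurs when $p_j+p^{\max Y}_j>1.25$, in which case $p^{\max Y}_j>1.25-p_j>0.5$. Thus two jobs already present in the input, namely $j$ and $j^{\max Y}$, both have size exceeding $1/2$. Since the optimal makespan equals $1$, no two jobs of size $>1/2$ may share a machine in any optimal schedule. A future job $j'$ with $p_{j'}>1/2$ would therefore produce three jobs each of size $>1/2$, and by the pigeonhole principle two of them would end up on the same machine in every schedule on two machines, contradicting $c^*(I)=1$. Hence every future job satisfies $p_{j'}\leq 0.5$, and by the first part of Lemma~\ref{lB:2} it is scheduled in step $2$ or step $3$.

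The second branch is entered when $p_j+p^{\max Y}_j\leq 1.25$, and Lemma~\ref{lB:4} applies, producing a set $W$ with $w_j<0.5$, after which $j$ is migrated to $m_2$. Using the step $3$ failure condition $y_{j-1}+p_j>1.25$ this gives
\[
y_j \;=\; y_{j-1}+p_j-w_j \;>\; 1.25-0.5 \;=\; 0.75,
\]
so by the second part of Lemma~\ref{lB:2} every subsequent job is assigned in step $2$.

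In both branches every later job lands in step $2$ or step $3$, which simultaneously establishes the second assertion of the lemma and precludes any further entry into step $5$, proving the first assertion as well. I do not expect a hard obstacle here: the key strict bound $w_j<0.5$ is already packaged in Lemma~\ref{lB:4}, and the main conceptual step is the pigeonhole argument in the first branch, which relies only on the fact that $c^*(I)=1$ forbids two jobs of size exceeding $1/2$ from being co-located optimally.
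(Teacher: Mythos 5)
Your proof is correct and follows essentially the same route as the paper's: the same case split on whether $p_j+p^{\max Y}_j>1.25$, the same appeal to the two-large-jobs observation in the first branch, and the same use of Lemma~\ref{lB:4} together with the second part of Lemma~\ref{lB:2} in the second branch. The only difference is that you spell out the pigeonhole argument that the paper leaves implicit.
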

\begin{proof}
If the algorithm does not apply step $5$, we are done. Otherwise, let $j$ be the first job in which the algorithm applies step $5$. Since previous steps were not applied for $j$, we have $0.5< p_j < 0.75$ and $y_{j-1} +p_j>1.25$. If $p_j+p^{\max Y}_j >1.25$, it holds that $p^{\max Y}_j >0.5$. Therefore, since there are already two jobs of sizes above $0.5$, any job that arrives after $j$ has a processing time of at most $0.5$, which means the algorithm will assign all future jobs in steps $2$ and $3$ (see Lemma \ref{lB:2}). If $p_j+p^{\max Y}_j \leq 1.25$, then by Lemma \ref{lB:4}, $j$ will be assigned to machine $m_2$, and $w_j <0.5$, so we get $y_j= y_{j-1} +p_j -w_j \geq 1.25 - 0.5 = 0.75$. Therefore, the algorithm will assign all further jobs in step $2$.
\end{proof}

\begin{corollary} \label{cB:1}
For every input, the algorithm will not enter both steps $4$ and $5$.
\end{corollary}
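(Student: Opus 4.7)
The plan is to derive the corollary as an immediate consequence of Lemmas \ref{lB:3} and \ref{lB:5}, since both lemmas, in their respective ``once entered'' parts, say that every job arriving after the first application of step $4$ (resp.\ step $5$) is scheduled in step $2$ or step $3$.

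First I would argue by contradiction: assume that on some input the algorithm applies both step $4$ and step $5$. Let $j_4$ be the first job on which step $4$ is applied and $j_5$ the first job on which step $5$ is applied. A single job enters at most one of the five cases, so $j_4\neq j_5$; hence either $j_4<j_5$ or $j_5<j_4$.

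If $j_4<j_5$, then Lemma \ref{lB:3} implies that every job arriving after $j_4$ is assigned in step $2$ or step $3$, contradicting the fact that $j_5$ is assigned in step $5$. Symmetrically, if $j_5<j_4$, then Lemma \ref{lB:5} implies that every job arriving after $j_5$ is assigned in step $2$ or step $3$, contradicting the fact that $j_4$ is assigned in step $4$. Either case yields a contradiction, so no input can trigger both steps. There is no real obstacle here; the corollary is essentially a bookkeeping consequence of the two previous lemmas and the proof should only take a few lines.
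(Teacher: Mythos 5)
Your proof is correct and follows essentially the same route as the paper: both arguments reduce the corollary to the ``all further jobs are assigned in steps $2$ and $3$'' parts of Lemmas \ref{lB:3} and \ref{lB:5}, applied to whichever of the two steps occurs first. The contradiction framing is a cosmetic difference only.
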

\begin{proof}
If the algorithm assigns $j$ in step $4$, then by Lemma \ref{lB:3} it assigns all further jobs in steps $2$ and $3$. If the algorithm assigns $j$ in step $5$, the situation is similar by Lemma \ref{lB:5}.
Thus, after one of steps $4$ and $5$ is applied, none of these steps is applied again.
\end{proof}

\begin{lemma} \label{lB:6}
After assigning $j$ in step $4$, the makespan is at most $1.25$. The load of $m_1$ will not exceed $1.25$ even if it receives additional jobs of GoS $1$ in the future.
\end{lemma}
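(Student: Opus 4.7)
I would first isolate the structural fact that $z_{j-1}=0$ at the moment step 4 is entered. Lemma \ref{lB:3} ensures the algorithm enters step 4 at most once, so this is the first such entry, while Corollary \ref{cB:1} rules out any prior entry into step 5. Consequently no migration has occurred before $j$, so $y$ is monotonically non-decreasing up to $j-1$; since $y_{j-1}<0.75$ we have $y<0.75$ throughout the previous history, which means step 2 never routed a GoS-2 job to $m_1$, giving $z_{j-1}=0$.

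The proof then splits on which branch of step 4 is taken. If $j$ is assigned to $m_2$ (with $W$ migrated to $m_1$), the upper bound $y_j\leq 1.25$ is exactly the triggering condition, so the key claim is the lower bound $y_j\geq 0.75$, which combined with $x_{j-1}+z_{j-1}+w_j\leq 2-y_j$ handles $m_1$. I would establish $y_j\geq 0.75$ by sub-casing on $W$: if $W=Y_{j-1}$ then $y_j=p_j\geq 0.75$ directly; otherwise the maximality of $W$ forces the first excluded job of the sorted list to satisfy $w_j+p_\alpha>0.75\cdot p_j$, and since that job lies in $Y_{j-1}\setminus W$ one gets $y_{j-1}>0.75\cdot p_j$ and hence $y_j>p_j\geq 0.75$. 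Once $y_j\geq 0.75$ holds, every subsequent job is routed to $m_1$ by step 2, so $y$ freezes at $y_j$ and the bound $m_1\leq 2-y_n\leq 1.25$ persists indefinitely, which in particular covers all future GoS-1 arrivals.

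In the other branch, where $j$ is assigned to $m_1$ without migration, $m_2$'s load stays at $y_{j-1}<0.75$, so only $m_1$ needs attention. Using $z_{j-1}=0$, the eventual load of $m_1$ after all possible future GoS-1 jobs is at most $\Lambda+p_j$, so the task reduces to proving $\Lambda+p_j\leq 1.25$. I would fix an optimal schedule and branch on where it places $j$. If $j$ is on $m_1$ in that schedule, then $\Lambda+p_j\leq 1$ follows from the optimum's makespan bound. Otherwise $j$ is on $m_2$ in the optimum, and I would use Lemma \ref{lB:1} (when $W$ is non-empty) to force both $j^{\max Y}$ and $j^{\max Y,2}$ onto $m_1$ in the optimum; when $W$ is empty the inequality $p^{\max Y}_j>0.75\cdot p_j$ alone puts $j^{\max Y}$ on $m_1$. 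The optimum's load constraint on $m_1$ then yields $\Lambda\leq 1-p^{\max Y}_j-p^{\max Y,2}_j$ (or $\Lambda\leq 1-p^{\max Y}_j$ when $W$ is empty), which together with $p^{\max Y}_j+p^{\max Y,2}_j>0.75\cdot p_j$ (respectively $p^{\max Y}_j>0.75\cdot p_j$) gives $\Lambda<1-0.75\cdot p_j$, and hence $\Lambda+p_j<1+0.25\cdot p_j\leq 1.25$ since $p_j\leq 1$.

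The main obstacle is the second branch: to pass from a local algorithmic decision to a global load bound one must marry the algorithmic invariant $z_{j-1}=0$ with the combinatorial structure of the optimum, specifically the pairwise-incompatibility inequalities collected in Lemma \ref{lB:1}. Keeping the $W$-empty and $W$-non-empty sub-cases aligned so that both close with the identical inequality $\Lambda+p_j<1+0.25\cdot p_j$ is where most of the bookkeeping concentrates, but Lemma \ref{lB:1} packages precisely the facts needed.
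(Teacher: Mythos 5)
Your proposal is correct and follows essentially the same route as the paper's proof: the same invariant $z_{j-1}=0$ via Corollary \ref{cB:1}, the same two branches of step $4$, and the same use of Lemma \ref{lB:1} to bound $\Lambda$ by $1-0.75\cdot p_j$ and conclude $\Lambda+p_j\leq 1.25$. The only (harmless) differences are cosmetic: in the first branch the bound $y_j\geq p_j\geq 0.75$ is immediate without your sub-casing on $W$, and in the second branch your case split on where the optimum places $j$ is equivalent to the paper's direct observation that both optimal machines carry GoS-$2$ load above $0.75\cdot p_j$.
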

\begin{proof}
By Corollary \ref{cB:1} the algorithm never entered step $4$ or $5$ before $j$. Therefore, machine $m_1$ has no jobs of GoS $2$. This holds since jobs of GoS $2$ are not assigned to $m_1$ in step $3$, and for step $2$, once such a job is assigned to $m_1$, all future jobs (for both hierarchies) will be assigned in the same step, and the algorithm would not have reached step $4$.

Since previous steps were not applied but step $4$ is applied, we have $p_j\geq 0.75$, $y_{j-1}<0.75$, and $y_{j-1}+p_j>1.25$.

Now, if $j$ is assigned to machine $m_2$, we have a subset $W$ such that $y_{j-1}+p_j-1.25 = \gamma_j \leq w_j \leq 0.75\cdot p_j$ that the algorithm migrated it to machine $m_1$, so we get $0.75 \leq p_j \leq y_j = y_{j-1}+p_j-w_j \leq 1.25$, and $x_j+z_j \leq 2- y_j \leq 1.25$.

If $j$ is assigned to machine $m_1$, it holds that  $y_j = y_{j-1} < 0.75$, and we have $y_{j-1}-w_j+p_j>1.25$. If $W$ is empty, since $p_j\leq 1$, the set $Y_{j-1}$ is non-empty and $p^{\max Y}_j > 0.75 \cdot p_j$ holds. Otherwise, by Lemma \ref{lB:1}, $Y_{j-1}$ contains a pair of jobs whose total processing time is greater than $0.75\cdot p_j$, and the total size of each of them together with $j$ is above $1$. Thus, in any optimal solution, one machine has $j$, and the other machine has jobs of GoS $2$ with total size above $0.75\cdot p_j$, and therefore each machine has jobs of total size above $0.75 \cdot p_j$ of GoS $2$.
This implies that the total size of jobs of GoS $1$ is at most $1-0.75\cdot p_j$, since only one machine may have such jobs, and thus  $\Lambda \leq 1-0.75\cdot p_j$. As long as machine $1$ does not get other jobs of GoS $2$, it has jobs of GoS $1$ and the set $Z_j=\{j\}$, and its load is at most $p_j+\Lambda \leq 1-0.75\cdot p_j +p_j \leq 1.25$.
\end{proof}

\begin{lemma} \label{lB:7}
In step $5$, if $p_j + p^{\max Y}_j > 1.25$, then after assigning $j$ the makespan is at most $1.25$. If afterwards $m_1$ receives only jobs of GoS $1$, then its load will remain at most $1.25$.
\end{lemma}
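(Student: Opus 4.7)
The plan is to extract the structural invariants available when step 5 is reached, argue that $m_1$ currently carries only GoS $1$ jobs, and then bound the two machine loads by comparing with an optimal schedule: of the prefix $\{1,\dots,j\}$ for the first claim, and of the full input for the second.

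First I would gather the inequalities forced by the algorithm's flow. Since steps $2$--$4$ did not fire, we have $g_j=2$, $p_j<0.75$, $y_{j-1}<0.75$, and $y_{j-1}+p_j>1.25$, which together imply $p_j>0.5$ and $y_{j-1}>0.5$. The case hypothesis $p_j+p^{\max Y}_j>1.25$ combined with $p_j<0.75$ gives $p^{\max Y}_j>0.5$. Next I would show $z_{j-1}=0$: by Corollary \ref{cB:1}, steps $4$ and $5$ have not been entered for any earlier job, and by the second part of Lemma \ref{lB:2}, if $y_{j'-1}\geq 0.75$ had ever held for some $j'\leq j$ then $j$ itself would already have been placed in step $2$; consequently no prior GoS $2$ job has been routed to $m_1$ via step $2$ either. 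Hence $m_1$ currently holds only GoS $1$ jobs, of total size $x_{j-1}$.

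With these facts the first claim is a short computation. After $j$ is added to $Z$, the load of $m_2$ is unchanged at $y_{j-1}<0.75\leq 1.25$, while the load of $m_1$ becomes $x_{j-1}+p_j$. Since $p_j+p^{\max Y}_j>1\geq OPT_j$, in every optimal schedule of $\{1,\dots,j\}$ the jobs $j$ and $j^{\max Y}$ lie on different machines, and every GoS $1$ job lies on $m_1$. If the optimum places $j$ on $m_1$, then $x_{j-1}+p_j\leq OPT_j\leq 1$; otherwise $j^{\max Y}$ is on $m_1$, so $x_{j-1}\leq 1-p^{\max Y}_j$ and
\[ x_{j-1}+p_j \;\leq\; 1-p^{\max Y}_j+p_j \;<\; 1-0.5+0.75 \;=\; 1.25. \]

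For the second claim I would run the same case analysis against the final optimum, replacing $x_{j-1}$ by the full-input GoS $1$ total $\Lambda$: if $m_1$ only receives jobs of GoS $1$ after $j$, its final load is $\Lambda+p_j$, and in the final optimum the jobs $j$ and $j^{\max Y}$ still cannot share a machine, so either $\Lambda+p_j\leq 1$ or $\Lambda+p^{\max Y}_j\leq 1$, yielding $\Lambda+p_j<1.25$ by the same arithmetic. I expect the main obstacle not to be any calculation — the strict gap $p_j<0.75$ versus $p^{\max Y}_j>0.5$ is what carries the argument — but the bookkeeping behind $z_{j-1}=0$, without which the load formula for $m_1$ is wrong; this is precisely where Corollary \ref{cB:1} and Lemma \ref{lB:2} do the real work.
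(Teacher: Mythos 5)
Your proof is correct and follows essentially the same route as the paper's: both establish via Corollary \ref{cB:1} that $m_1$ carries only GoS $1$ jobs before $j$, then use the fact that $j$ and $j^{\max Y}$ cannot share a machine in any optimal solution to bound the GoS-$1$ total by $\max\{1-p_j,\,1-p^{\max Y}_j\}$ and conclude with the same case split. The only cosmetic differences are that you treat the prefix and full-input optima separately where the paper bounds $\Lambda$ once, and you close the second case with $p^{\max Y}_j>0.5$ and $p_j<0.75$ rather than the paper's $1-p^{\max Y}_j+p_j<2p_j-0.25$; both yield the strict bound $1.25$.
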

\begin{proof}
Similarly to Lemma \ref{lB:6}, by Corollary \ref{cB:1} the algorithm never entered step $4$ or $5$ before $j$, and machine $m_1$ has only jobs of GoS $1$ (it may be empty). Since the largest job of $Y_{j-1}$ and $j$ are assigned to different machines in any optimal solution, the total size of jobs whose GoS is $1$ is at most $\max\{1-p_j,1-p^{\max Y}_j\}$, and this is an upper bound on $\Lambda$, which we use to denote the total size of jobs of GoS $1$ in the entire input. Once again $y_j=y_{j-1}<0.75$, so it remains to bound the load of the first machine, including all jobs of GoS $1$.

We have $\Lambda+z_j=\Lambda+p_j$. If $\max\{1-p_j,1-p^{\max Y}_j\}=1-p_j$, we get $\Lambda+z_j\leq 1$. Otherwise, $\Lambda+z_j \leq 1-p^{\max Y}_j+p_j< 2\cdot p_j-0.25$. In this case $p_j\leq 0.75$, and therefore $\Lambda+z_j<1.25$.
\end{proof}

\begin{theorem}\label{theoB}
The competitive ratio for algorithm \textit{B} is at most $1.25$.
\end{theorem}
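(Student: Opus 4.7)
The plan is to bound the two machine loads separately. For the second machine a straightforward induction on $j$ suffices: jobs assigned to $m_1$ (via step $2$ or via the $m_1$-branches of steps $4$ and $5$) leave $y_j=y_{j-1}$, while jobs assigned to $m_2$ have their contribution bounded directly by the step's own guarantee, namely the test in step $3$, Lemma \ref{lB:6} for step $4$'s $m_2$-branch, and Lemma \ref{lB:4} for step $5$'s $m_2$-branch (which is precisely the sub-case $p_j+p^{\max Y}_j\leq 1.25$). Starting from $y_0=0$, this gives $y_n\leq 1.25$.

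For the first machine the key observation is that steps $4$ and $5$ are only entered when $y_{j-1}<0.75$; hence once $y_j\geq 0.75$ is first attained, the algorithm handles every subsequent arrival by step $2$, and $y$ stays at least $0.75$ for the rest of the input. Whenever $y_n\geq 0.75$, the global slack bound $x_n+z_n\leq 2-y_n\leq 1.25$ closes the case. This already covers every execution that ever uses an $m_2$-branch of step $4$ or step $5$: for step $4$'s $m_2$-branch one has $y_j\geq p_j\geq 0.75$, while for step $5$'s $m_2$-branch the Lemma \ref{lB:4} bound $w_j<0.5$ yields $y_j>y_{j-1}+p_j-0.5>0.75$.

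It remains to handle the case $y_n<0.75$. Under this assumption no GoS $2$ job is ever routed to $m_1$ via the $y_{j-1}\geq 0.75$ clause of step $2$, so, using Corollary \ref{cB:1} together with Lemmas \ref{lB:3} and \ref{lB:5} (which restrict step $4$ and step $5$ to at most one invocation in total), the set $Z_n$ is either empty or contains exactly the one job assigned to $m_1$ in the $m_1$-branch of step $4$ or step $5$. If $Z_n=\emptyset$ then $x_n+z_n=x_n\leq \Lambda\leq 1$; if the lone job came from step $4$, Lemma \ref{lB:6} yields $\Lambda+z_n\leq 1+0.25\, p_j\leq 1.25$; and if it came from step $5$, Lemma \ref{lB:7} yields the same bound. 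The main obstacle I anticipate is precisely the potential interaction between an $m_1$-branch invocation and later arrivals, which is resolved by the dichotomy above: any GoS $2$ job later routed to $m_1$ would require $y\geq 0.75$, contradicting the case assumption.
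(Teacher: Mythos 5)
Your proof is correct and follows essentially the same route as the paper's: both rest on the same supporting lemmas (\ref{lB:2}--\ref{lB:7} and Corollary \ref{cB:1}) and on the same key dichotomy --- either $y$ reaches $0.75$ (so $x_n+z_n\leq 2-y_n\leq 1.25$), or it never does, in which case $m_1$ carries at most one GoS~$2$ job, placed by the $m_1$-branch of step $4$ or $5$, and the corresponding lemma bounds its load. The only difference is organizational: the paper runs an induction over assignment events to bound the makespan at every intermediate time, while you analyze the final configuration directly via a case split on $y_n$; the content is the same.
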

\begin{proof}
Since the optimal cost is $1$, we show by induction that the makespan never exceeds $1.25$. The base case is before any job was assigned. We assume that just before job $j$ (for $j\geq 1$) is assigned, the makespan does not exceed $1.25$. We already showed that the makespan will not exceed $1.25$ after an assignment by step $4$ or $5$.
If $j$ is assigned in step 2 and $y_{j-1}\geq 0.75$, we have $y_j=y_{j-1}\leq 1.25$ by the induction hypothesis and $x_j+z_j \leq 2-y_j \leq 1.25$. If $j$ is assigned by step $3$, the property is satisfied by induction and by the condition of this case. We are left with the case that $j$ is assigned by step $2$ while $y_{j-1}<0.75$, in which case $g_j=1$ holds. If $z_j=0$, we get $x_j\leq 1$, since the total size of jobs of GoS $1$ does not exceed $1$, so we assume that $z_j>0$. This means that one of steps $4,5$ was applied in the past. In all cases for steps $4$ and $5$ we saw that as long as $m_1$ only receives jobs of GoS $1$, its load remains no larger than $1.25$. Thus, we consider the case that it received such a job, which could only happen in step $2$. However, when only steps $2$ and $3$ are applied, the total size of jobs of $Y$ cannot decrease, and when a job whose GoS was assigned to $m_1$ later than the application of step $4$ or step $5$, this means that the total size for $Y$ was at least $0.75$. Thus, $y_{j-1}\geq 0.75$, contradicting the condition of the case.
\end{proof}

\begin{lemma}
The migration factor of the algorithm is at most $\frac 34$.
\end{lemma}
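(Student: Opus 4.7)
The plan is to observe that migration only occurs in step $4$ and step $5$; steps $1$--$3$ never move an already assigned job. In each of these two cases, I would show that the total size $w_j$ of the set $W$ that is migrated to $m_1$ satisfies $w_j \leq 0.75 \cdot p_j$, which immediately yields the desired bound on the migration factor.

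First I would handle step $4$. If $j$ is assigned to $m_1$ in this step (the branch with $y_{j-1}-w_j+p_j > 1.25$), no job is actually moved: $W$ is only computed, but the update is $Z \leftarrow Z \cup \{j\}$, so no migration occurs. In the other branch, the algorithm explicitly constructs $W$ as the maximum prefix of the sorted list of $Y_{j-1}$ with total size at most $0.75 \cdot p_j$, and these are exactly the jobs migrated to $m_1$. Thus the migrated mass is $w_j \leq 0.75\cdot p_j$ by construction, giving ratio at most $3/4$.

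Next I would handle step $5$. Again, if $p_j+p^{\max Y}_j > 1.25$ then $j$ is placed on $m_1$ and nothing is migrated. Otherwise, we are precisely in the situation covered by Lemma \ref{lB:4}, which already certifies that the chosen set $W$ satisfies $\gamma_j \leq w_j \leq 0.75\cdot p_j$. Since only the jobs of $W$ are migrated (from $m_2$ to $m_1$) before placing $j$ on $m_2$, the total migrated size is once more at most $0.75\cdot p_j$.

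Combining the two cases, whenever migration occurs upon arrival of job $j$ the total size of moved jobs is bounded by $0.75\cdot p_j$, so the migration factor of Algorithm $B$ is at most $\frac{3}{4}$. The only subtle point is recognizing that the branches within steps $4$ and $5$ in which $j$ goes to $m_1$ do not actually move any job despite the name $W$ being mentioned; once this is observed, the bound is immediate from the construction of $W$ in step $4$ and from Lemma \ref{lB:4} in step $5$, so there is no real obstacle beyond careful case analysis.
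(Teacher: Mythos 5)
Your proof is correct and follows essentially the same route as the paper: migration can only happen in steps $4$ and $5$, in step $4$ the bound $w_j \leq 0.75\cdot p_j$ holds by the very construction of $W$ as a prefix of total size at most $0.75\cdot p_j$, and in step $5$ it is exactly the property established in Lemma \ref{lB:4}. Your additional remark that the branches sending $j$ to $m_1$ perform no migration is a correct (and slightly more careful) elaboration of what the paper leaves implicit.
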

\begin{proof}
The only two steps where jobs may be migrated, step $4$ and $5$. In step $4$, $W$ is migrated, and it is selected such that its total size does not exceed $0.75\cdot p_j$. For step $5$, a similar property was proved in Lemma \ref{lB:4}.
\end{proof}

\section{The case $\boldsymbol{\frac{1}{2} \leq M < \frac{3}{4}}$}

In this section, we prove a tight bound of $2-M$ for the case $M\in [\frac 12,\frac 34)$. In the end of this analysis, we will show a lower bound on the competitive ratio for any algorithm. The upper bound on the competitive ratio is shown by presenting two algorithms (C and D)  defined over two different domains of the migration factor of this case. The idea behind both algorithms is similar to the previous algorithms, but still we present two distinct algorithms because the way in which we determine $W$ (the job sets selected from the second machine) when it needs to migrate jobs, to maintain a completion processing time in the two machine that is not greater than $2-M$.
In the domain $\frac{1}{2} \leq M < \frac{2}{3}$, the $Algorithm \; C$ finds a subset $W$ of $Y$, whose migration will fulfill the required.
In domain $\frac{2}{3} \leq M < \frac{3}{4}$, the $Algorithm \; D$ finds a subset $W$ of $Y$ also, but the two algorithms differ in size of this subset, which is stemmed from the property that the migration factor is larger and the competitive ratio is smaller. We present the proof of the bounds of the competitive ratios, and the migration factors, after presenting them. We begin with the $Algorithm \; C$ with a migration factor in the interval $[\frac{1}{2}, \frac{2}{3})$:

\subsection{An algorithm for the case $\boldsymbol{\frac12 \leq M < \frac23}$}
\textbf{Algorithm \textit{C}}

Let $X=\emptyset,\ Y=\emptyset,\ Z=\emptyset,\ W=\emptyset;$

Repeat until all jobs have been assigned:
\begin{enumerate}
\item Receive job $j$ with $p_j$ and $g_j$; \item If $g_j=1 \ or \ y_{j-1} \geq M$ holds,
 assign $j$ to the first machine and update: $X \leftarrow X\cup \{j\},$ or $Z \leftarrow Z\cup \{j\},$ respectively.

            return to step 1.
\item If $y_{j-1}+p_j \leq 2-M$ holds, assign $j$ to the second machine and update: $Y\leftarrow Y\cup \{j\},$

            return to step 1.
\item If $p^{\max{Y}}_j>M\cdot p_j$ holds, assign $j$ to the first machine and update: $Z \leftarrow Z\cup \{j\}$.

            return to step 1.
\item Sort the jobs in $Y$ by non-increasing size, and update $W$ to be the minimum length prefix of the sorted list with total size at least $p_j+y_{j-1}-(2-M)$ (or let $W=Y$ if not such subset exists).
Migrate all jobs of $W$ to the first machine, and assign $j$ to the second machine, and update: $Y\leftarrow (Y \setminus W)\cup \{j\}$ and $Z\leftarrow Z\cup W$.

            return to step 1.
\end{enumerate}

The idea in $Algorithm \; C$ in the first steps is similar to the idea of its predecessors. It handles small jobs and ensures that the load of second machine load is at most $2-M$. In the last steps, step $4$ and step $5$, $Algorithm \; C$ relies on the fact that in each input there are no three jobs, where the total processing time of each pair of them is more than $1$. Similarly to $Algorithm \; B$, we prove that $Algorithm \; C$ may apply  step $4$ or $5$, and not both, and such a step (one of these two steps) is applied at most once for every input. Then, we prove that if $Algorithm \; C$ enters step $5$, then it will always be able to find a subset $W$ of $Y$ such that $\gamma_j \leq w_j \leq M\cdot p_j$, where $\gamma_j$ in the case $\frac{1}{2} \leq M < \frac{3}{4}$  is equal to $y_{j-1} + p_j -(2-M)$.
The value  $\gamma_j$ is a lower bound on the total size that has to be migrated from the second machine to obtain $y_j\leq 2-M$ in order to allow the option of scheduling $j$ on the second machine.

Note that the if the algorithm reaches step $4$ (and applied this step or step $5$), it holds that $y_{j-1} < M$ and $y_{j-1} +p_j > 2-M$, which implies that $p_j > 2-2\cdot M \geq \frac 23$. In this case, we will have $M\cdot p_j \geq M(2-2M) > \frac 49$. Note that $y_{j-1} +p_j > 2-M$ implies that the set $W$ is selected such that its total size at least $p_j+y_{j-1}-(2-M)$. We also have $p^{\max{Y}}_j \leq M\cdot p_j$ if step $5$ is applied.

\begin{lemma} \label{lC:1}
In step $5$, the algorithm always manages to update $W$ such that $\gamma_j \leq w_j \leq M\cdot p_j$.
\end{lemma}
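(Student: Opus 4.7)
My plan is to verify the two bounds separately, since the lower bound $\gamma_j \leq w_j$ is essentially a statement about the construction of $W$, whereas the upper bound $w_j \leq M\cdot p_j$ is where the design parameters of the algorithm are actually consumed. Before doing either I would record the standing assumptions forced by steps 2, 3, and 4 all being skipped at job $j$: one has $g_j=2$, $y_{j-1} < M$, $y_{j-1}+p_j > 2-M$ (so in particular $p_j > 2-2M$), and the step~4 guard $p^{\max Y}_j \leq M\cdot p_j$.

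The lower bound is nearly immediate. Since $p_j \leq 1 \leq 2-M$, one gets $y_{j-1} \geq p_j+y_{j-1}-(2-M) = \gamma_j$, so the full set $Y_{j-1}$ is itself a sorted prefix with total at least $\gamma_j$; consequently the minimum length prefix $W$ is well-defined and satisfies $w_j \geq \gamma_j$ by construction.

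For the upper bound I would split on $\ell:=|W|$. If $\ell=1$, then $w_j = p^{\max Y}_j$, and the bound is precisely the step~4 guard. If $\ell \geq 2$, let $p_k$ denote the smallest (last) element of the sorted prefix $W$. By minimality of $|W|$, the prefix of length $\ell-1$ has total strictly below $\gamma_j$, i.e., $w_j-p_k < \gamma_j$, so $w_j < \gamma_j+p_k$. Because $p_k$ is the minimum of $W$, it is at most the average, so $p_k \leq w_j/\ell \leq w_j/2$, and substituting yields $w_j < 2\gamma_j$. It then remains to check $2\gamma_j \leq M\cdot p_j$: expanding, this is $(2-M)p_j + 2y_{j-1} \leq 4-2M$, and plugging in the worst case $p_j \leq 1$, $y_{j-1} < M$ reduces the inequality to $3M \leq 2$, which holds strictly on the range $M \in [\tfrac12,\tfrac23)$. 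Combining with $w_j < 2\gamma_j$ gives $w_j < M\cdot p_j$, which is stronger than the claimed bound.

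The main obstacle is the conceptual separation of the two regimes for $|W|$. The minimality argument provides no useful upper bound in the singleton case, so one must instead rely on the explicit step~4 guard; for $\ell \geq 2$ the non-trivial ingredient is that the last element of a non-increasing sorted prefix is at most its average, which is exactly what consumes the hypothesis $M < \tfrac23$ in the ensuing arithmetic. Everything else is routine bookkeeping.
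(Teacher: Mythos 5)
Your proof is correct and follows essentially the same route as the paper's: the case split on $|W|=1$ versus $|W|\geq 2$ coincides with the paper's split on $\gamma_j\leq p^{\max Y}_j$ versus $p^{\max Y}_j<\gamma_j$, and the key inequality $2\gamma_j<M\cdot p_j$ is obtained from the same arithmetic $(2-M)p_j+2y_{j-1}<2+M<4-2M$. The only cosmetic difference is that you bound the last job of the prefix by its average $w_j/2$ where the paper bounds it by $\gamma_j$; both yield $w_j<2\gamma_j$.
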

\begin{proof}
Suppose that the algorithm enters step $5$. The selected set has total size at least $\gamma_j=p_j+y_{j-1}-(2-M)$ since such a subset exists, and therefore $\gamma_j \leq w_j$ holds. It is left to show that $w_j \leq M\cdot p_j$ holds. By $p^{\max{Y}}_j \leq M\cdot p_j$ , we can consider the following two cases:

\begin{enumerate}
\item $\gamma_j \leq p^{\max{Y}}_j \leq M\cdot p_j$.
\item $p^{\max{Y}}_j < \gamma_j$.
\end{enumerate}

In the first case, $W$ will consist of one job, which is the largest job in $Y$, i.e $\gamma_j \leq w_j=p^{\max{Y}}_j \leq Mp_j$. In the second case $p^{\max{Y}}_j < \gamma_j$, and we use the following properties.

\begin{itemize}
\item $y_{j-1} > \gamma_j$, because $p_j+M-2 \leq 1 +\frac 23-2= -\frac 13$ yields  $\gamma_j < \gamma_j - p_j + (2-M) = y_{j-1}.$
\item $M+2 < 4-2M$, because $M < \frac{2}{3}$.
\item $(2-M)p_j+2y_{j-1} < 2+M$, because $y_{j-1}<M, \ p_j\leq 1$.
\end{itemize}

We get $(2-M)p_j+2y_{j-1} < 4-2M \ \Leftrightarrow \ 2(p_j+y_{j-1}-2+M) < Mp_j \ \Leftrightarrow \ 2\gamma_j < Mp_j$. Every job scheduled on the second machine (any job of $Y_{j-1}$) has a size of less than $\gamma_j$, (because $p^{\max{Y}}_j < \gamma_j$). Therefore, a minimum length prefix of the sorted list with total size at least $\gamma_j$ will have a total size of at most $2\gamma_j < Mp_j$.
\end{proof}

\begin{lemma}\label{lC:2}
For any job $j$ satisfying $p_j\leq 2-2M$, $j$ is assigned in step $2$ or step $3$. Once the property $y_j \geq M$ holds for some job $j$, all jobs arriving after $j$ will be scheduled in step $2$.
\end{lemma}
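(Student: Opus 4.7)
The plan is to prove both parts by direct analysis of the case conditions in the algorithm, in a manner fully parallel to Lemma \ref{lB:2}.

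For the first assertion, I would argue by contrapositive. Suppose $j$ is not assigned in step $2$ or step $3$. The failure of step $2$ forces $g_j = 2$ and $y_{j-1} < M$, and the failure of step $3$ forces $y_{j-1} + p_j > 2 - M$. Combining these inequalities gives
\[
p_j \;>\; (2-M) - y_{j-1} \;>\; (2-M) - M \;=\; 2 - 2M,
\]
which contradicts the hypothesis $p_j \leq 2 - 2M$. Hence $j$ must be assigned in step $2$ or step $3$.

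For the second assertion, I would proceed by induction on the index of jobs arriving after the job $j$ for which $y_j \geq M$ first becomes true. For the base case, consider the very next job $j+1$: since $y_j \geq M$, the condition of step $2$ is satisfied (either by $g_{j+1}=1$ or by $y_{(j+1)-1} = y_j \geq M$), so job $j+1$ is assigned to the first machine in step $2$ and in particular $y_{j+1} = y_j \geq M$. For the inductive step, assume every job arriving after $j$ and up through $j'-1$ has been assigned in step $2$; then none of these assignments altered $y$, so $y_{j'-1} = y_j \geq M$, which again triggers step $2$ for job $j'$ and preserves the load of $m_2$. This completes the induction.

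There is essentially no obstacle here: the only subtlety is verifying that the load $y$ on the second machine cannot drop below $M$ once it reaches $M$, and this holds trivially because the sole mechanism in the algorithm that decreases $y$ is migration in step $5$, which requires $y_{j-1} < M$ (since otherwise step $2$ intercepts the job beforehand). Thus step $5$ can never be reached once $y \geq M$, and $y$ stays above $M$ throughout the remainder of the input.
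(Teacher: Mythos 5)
Your proof is correct and follows essentially the same route as the paper: the first part is just the contrapositive of the paper's direct argument (in the remaining case $g_j=2$ and $y_{j-1}<M$, one gets $y_{j-1}+p_j<2-M$), and the second part is the induction that the paper only sketches by reference to Lemma \ref{lB:2}. Your closing observation that step $5$ (the only step that decreases $y$) is unreachable once $y\geq M$ is a correct and worthwhile explicit justification of why the load of $m_2$ never drops back below $M$.
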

\begin{proof}
The proof is the same as the Lemma \ref{lB:2} proof taking into account the different parameters.

If $g_j=1$, $j$ is assigned in step $2$. If prior to the arrival of $j$ it holds that $y_{j-1} \geq M$, it is also assigned in step $2$. Thus, we are left with the case $g_j=2$ and $y_{j-1}<M$. In this case, $y_{j-1}+p_j < 2-M$, and therefore $j$ is assigned in step $3$.
\end{proof}

\begin{lemma} \label{lC:3}
For every input, the algorithm enters the union of steps $4$ and $5$ at most once.
\end{lemma}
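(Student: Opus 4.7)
My plan is to show that once the algorithm first enters step $4$ or step $5$ for some job $j$, every later job is handled by step $2$ or step $3$. At the moment step $4$ or $5$ is triggered, the failures of steps $2$ and $3$ force $g_j=2$, $y_{j-1}<M$, and $y_{j-1}+p_j>2-M$, hence $p_j>2-2M$. By Lemma \ref{lC:2}, it suffices to establish, for each later $j'$, either $p_{j'}\leq 2-2M$ (which routes $j'$ through step $2$ or $3$) or $y_{j'-1}\geq M$ (which routes $j'$ through step $2$). Suppose for contradiction some later job $j'$ is the second to trigger step $4$ or $5$; between $j$ and $j'$ only steps $2$ and $3$ are executed, so in particular no migration occurs and $y$ is non-decreasing.

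If step $5$ was applied at $j$, then $j$ is assigned to $m_2$, so $j\in Y_j$ and $y_j\geq p_j>2-2M$. Steps $2$ and $3$ never remove jobs from $m_2$, so $j$ remains in $Y$, giving $y_{j'-1}\geq p_j>2-2M>M$, where the last inequality uses $M<\tfrac{2}{3}$. Then $j'$ must be routed through step $2$, contradicting the assumption.

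If step $4$ was applied at $j$, then $j$ sits on $m_1$, but the step $4$ test produces a witness $k\in Y_{j-1}$ with $p_k>M\cdot p_j$; the job $k$ stays in $Y$ since no migration occurs afterwards. I would then examine the three jobs $j,k,j'$ inside a fixed optimal schedule of the entire input, whose makespan is $1$. Their pairwise sums satisfy $p_j+p_{j'}>2(2-2M)=4-4M$, and both $p_j+p_k$ and $p_k+p_{j'}$ exceed $(1+M)(2-2M)=2-2M^2$. Throughout $M\in[\tfrac{1}{2},\tfrac{2}{3})$ one has $M<\tfrac{3}{4}$ and $M<\tfrac{1}{\sqrt{2}}$, so $4-4M>1$ and $2-2M^2>1$, and every pair among $\{j,k,j'\}$ has sum above $1$. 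No two of the three can share a machine in OPT, which is impossible with only two machines.

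The main obstacle is the step $4$ branch: one cannot argue via the load of $m_2$ alone, since $j$ is placed on $m_1$ and $y$ need not climb above $M$ from the intervening step $3$ assignments. The argument must instead extract a global contradiction from the known optimum $c^*=1$, and the only lever is the hidden ``second large job'' $k$ on $m_2$ that the step $4$ condition furnishes. Once $k$ is in hand, a three-job pigeonhole on two machines closes the case, provided the three pairwise-sum inequalities above are uniformly above $1$ throughout $[\tfrac{1}{2},\tfrac{2}{3})$, which a short computation confirms.
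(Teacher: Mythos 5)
Your proposal is correct and follows essentially the same route as the paper: the first triggering job has $p_j>2-2M$; in the step-$5$ branch one gets $y_j\geq p_j>2-2M>M$ and invokes Lemma \ref{lC:2}; in the step-$4$ branch the witness $k\in Y_{j-1}$ with $p_k>M\cdot p_j$ yields three jobs with pairwise sums above $1$, contradicting the two-machine optimal schedule of makespan $1$. The only difference is cosmetic (you phrase the pigeonhole step as a contradiction with a hypothetical second trigger, the paper phrases it as showing no later job exceeds $2-2M$), and your numerical verifications $4-4M>1$ and $2-2M^2>1$ on $[\tfrac12,\tfrac23)$ match the paper's.
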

\begin{proof}
If none of these steps is applied, we are done. Otherwise, let $j$ be the first job for which step $3$ is considered but not applied. Since steps $2$ and $3$ were not applied, it holds that $p_j > 2-2M$. If the algorithm enters step $4$, i.e. $p^{\max{Y}}_j > M\cdot p_j > M(2-2M)$. We have $(2-2M)+M(2-2M)>1$ and $2(2-2M)>1$, since $(2-2M)+M(2-2M)=2(1-M)(1+M)=2(1-M^2)\geq 2(1-4/9)>1$ and $M<1$. Therefore, later in the input no job with processing time more than $2-2M$ will arrive (because a machine that will get two of the three largest  will have load above $1$), so by Lemma \ref{lC:2}, the algorithm will not reach step $4$ or step $5$. If $p^{\max{Y}}_j \leq M\cdot p_j$, then $j$ will be assigned to machine $m_2$, and in this case we get $y_j \geq p_j > 2-2M \geq M$. Therefore, by Lemma \ref{lC:2} all jobs that arrive after $j$ will be assigned to the first machine by step $2$.
\end{proof}

\begin{lemma} \label{lC:4}
If job $j$  is assigned in step $4$ or step $5$, after assigning $j$ the makespan is at most $2-M$. If $m_1$ will only get jobs of GoS $1$ afterwards, its load will never exceed $2-M$.
\end{lemma}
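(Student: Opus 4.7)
The plan is to handle the two steps separately, after first extracting the common structural consequences of Lemmas \ref{lC:2} and \ref{lC:3}. Before $j$ arrives, neither step $4$ nor step $5$ has been entered, so every GoS $2$ job assigned earlier went to $m_2$ via step $3$, and every job on $m_1$ has GoS $1$; in particular $z_{j-1}=0$ and $x_{j-1}=\Lambda_{j-1}$, where $\Lambda_{j-1}$ denotes the total size of GoS $1$ jobs among the first $j-1$ jobs. Since steps $2$ and $3$ did not apply to $j$, we also have $g_j=2$, $y_{j-1}<M$, and $y_{j-1}+p_j>2-M$, which together force $p_j>2-2M$.

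For step $5$, Lemma \ref{lC:1} supplies a migration set with $\gamma_j\le w_j\le M\cdot p_j$, so $y_j=y_{j-1}+p_j-w_j\le 2-M$ is immediate. The load of $m_1$ becomes $\Lambda_j+w_j$ just after $j$ is assigned, and evolves to $\Lambda+w_j$ once all the remaining (assumed GoS $1$) jobs arrive. To bound $\Lambda$, I would invoke the total-size bound on the full input, which holds because the semi-online promise is that the optimal makespan is $1$: under the hypothesis of no further GoS $2$ jobs after $j$, $\Lambda+y_{j-1}+p_j\le 2$, and combined with $y_{j-1}+p_j>2-M$ this yields $\Lambda<M$. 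Using $w_j\le M\cdot p_j\le M$ then gives $\Lambda+w_j<M+M\cdot p_j\le 2M\le 2-M$, where the last inequality is $M\le 2/3$. The analogous estimate applied to $\Lambda_j$ via the sub-input total-size bound handles the makespan immediately after $j$.

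For step $4$, job $j$ is placed on $m_1$, so $y_j=y_{j-1}<M\le 2-M$ is automatic, and the only work is to bound the load $\Lambda_j+p_j$ (and later $\Lambda+p_j$) of $m_1$. The key observation is that the step $4$ entry condition $p^{\max Y}_j>M\cdot p_j$ together with $p_j>2-2M$ gives
\[
p_j+p^{\max Y}_j>(1+M)p_j>(1+M)(2-2M)=2(1-M^2),
\]
and $2(1-M^2)>1$ strictly whenever $M<2/3$. Hence in any optimal schedule (of the sub-input up to $j$, or of the full input under the stated hypothesis) $j$ and $j^{\max Y}$ must lie on different machines. A case split on the placement of $j$ in that optimal schedule then gives either $\Lambda+p_j\le 1\le 2-M$ directly, or else $\Lambda+p^{\max Y}_j\le 1$, from which $\Lambda+p_j\le 1-p^{\max Y}_j+p_j<1+(1-M)p_j\le 2-M$ follows using $p_j\le 1$.

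The main obstacle I anticipate is bookkeeping: aligning the sub-input total-size inequality (used for the bound right after $j$) with its full-input version (used for the hypothetical future load on $m_1$), and making sure every invocation of $p_j\le 1$ and $M\le 2/3$ is tight enough. The routine verifications, namely $2(1-M^2)>1$ strictly for $M<2/3$ and $2M\le 2-M$ for $M\le 2/3$, are both immediate.
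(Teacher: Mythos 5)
Your proposal is correct and follows essentially the same route as the paper: the step-$4$ case uses the identical observation that $p_j+p^{\max Y}_j>(1+M)(2-2M)=2(1-M^2)>1$ forces $j$ and $j^{\max Y}$ onto different machines in an optimal schedule, giving $\Lambda\leq\max\{1-p_j,1-p^{\max Y}_j\}$ and then the same two-case computation. The only (immaterial) difference is in step $5$, where the paper bounds the load of $m_1$ by $2-y_j\leq 2-p_j<2M<2-M$ using $y_j\geq p_j>2-2M$, while you bound it as $\Lambda+w_j<M+M\cdot p_j\leq 2M$; both rest on the same total-size conservation and the same migration-set guarantee from Lemma \ref{lC:1}.
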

\begin{proof}
In this case $j$ has GoS $2$, and it holds that $y_{j-1} < M$ and $y_{j-1} +p_j > 2-M$, since the conditions of steps $2$ and $3$ do not hold. By Lemma \ref{lC:3}, the algorithm never entered step $4$ or $5$ before $j$. Therefore, machine $m_1$ has only jobs of GoS $1$ (or it may be empty), by Lemma \ref{lC:2}. If $p^{\max{Y}}_j>M\cdot p_j$, we have two jobs whose total size is above $1$ which must be assigned to different machines in an optimal solution. Thus, and the total size of jobs with GoS $1$ (recall that it is denoted by $\Lambda$) is at most $\max\{1-p_j,1-p^{\max Y}_j\}$. Step $4$ is applied, and since $y_j=y_{j-1}<M<2-M$, we analyze $m_1$. We have $\Lambda+z_j=\Lambda+p_j$. If $p_j\leq p^{\max Y}_j$, we get $\Lambda+p_j\leq 1$. Otherwise, $\Lambda+p_j\leq 1-p^{\max Y}_j+p_j <1+p_j-M\cdot p_j=1+(1-M)p_j\leq 2-M$, since $p_j\leq 1$.

If $p^{\max{Y}}_j \leq M\cdot p_j$, step $5$ is applied. The algorithm will assign $j$ to machine $m_2$, and by Lemma \ref{lC:1} it will migrate a subset $W$ of size at least $\gamma_j$, so we get: $y_j=y_{j-1} - \gamma_j +p_j \leq 2-M$, and $x_j+z_j \leq 2 - y_j \leq 2- p_j < 2- (2-2M) = 2M < 2-M$. Since steps $4$ and $5$ will not be applied again, the set $Y$ will always have job $j$, so the load of $m_1$ will never exceed $2-p_j < 2-M$ (no matter which jobs it will receive).
\end{proof}

\begin{theorem}
The competitive ratio for algorithm \textit{C} is at most $2-M$.
\end{theorem}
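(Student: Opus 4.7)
The plan is to mirror the structure of Theorem~\ref{theoB}, proving by induction on the number of assigned jobs that the makespan after each assignment is at most $2-M$. The base case of zero jobs is immediate since both loads are $0$. For the inductive step, I would split on which of the four assignment rules was applied to the arriving job $j$ and use the lemmas already established to dispose of the easy branches, reserving the main work for one subtle case.

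For step $3$, the triggering condition $y_{j-1}+p_j \leq 2-M$ directly gives $y_j \leq 2-M$, while $m_1$ is unchanged, so the induction hypothesis finishes the case. For step $2$ when $y_{j-1} \geq M$, the identity $y_j = y_{j-1}$ together with the induction hypothesis gives $y_j \leq 2-M$, while the total-input bound yields $x_j+z_j \leq 2 - y_j \leq 2-M$. For steps $4$ and $5$, Lemma~\ref{lC:4} already provides both the makespan bound of $2-M$ at the moment of the assignment and the stronger statement that $m_1$'s load stays at most $2-M$ as long as $m_1$ subsequently receives only jobs of GoS $1$.

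The remaining and main case is step $2$ with $g_j = 1$ and $y_{j-1} < M$. If $z_j = 0$, then $m_1$ holds only GoS $1$ jobs, whose total over the entire input is at most $1 \leq 2-M$, so there is nothing to prove. If $z_j > 0$, then some earlier job of GoS $2$ was placed on $m_1$: it cannot have come from step $3$ (which only augments $m_2$), and by Lemma~\ref{lC:3} at most one application of step $4$ or step $5$ has ever occurred. The key observation I intend to use is that $y$ is monotonically non-decreasing throughout the execution, since $y$ is unchanged in step $2$ and in step $4$, strictly grows in step $3$, and in step $5$ the bound $w_j \leq M p_j$ yields $y_j \geq y_{j-1} + (1-M)p_j > y_{j-1}$. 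Consequently, if a GoS $2$ job had been placed on $m_1$ via step $2$ at any prior moment, we would have $y_{j-1} \geq M$, contradicting the current case. Hence every GoS $2$ job on $m_1$ was placed by step $4$ or step $5$, and $m_1$ has received only GoS $1$ jobs ever since; Lemma~\ref{lC:4} then yields the bound $\leq 2-M$ even after adding the current GoS $1$ job $j$.

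The main obstacle I foresee is exactly this last case: verifying that the monotonicity of $y$, combined with the ``at most one application'' guarantee of Lemma~\ref{lC:3}, really rules out every channel through which a GoS $2$ job could have reached $m_1$ without already forcing $y_{j-1} \geq M$. Once this is nailed down, the result follows directly from the previously established lemmas with no additional calculations.
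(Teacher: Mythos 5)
Your proposal is correct and follows essentially the same route as the paper: induction on the assignments, immediate disposal of steps $2$ (with $y_{j-1}\geq M$), $3$, $4$, $5$ via the preceding lemmas, and a reduction of the remaining GoS-$1$ case to Lemma~\ref{lC:4} by arguing that any GoS-$2$ job on $m_1$ placed by step $2$ would force $y_{j-1}\geq M$. The only difference is one of explicitness: the paper compresses this last case into a reference to the argument of Theorem~\ref{theoB}, whereas you spell out the monotonicity of $y$ (including across step $5$, where $w_j\leq Mp_j<p_j$ guarantees $y$ still increases), which is exactly the fact the paper is implicitly relying on.
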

\begin{proof}
We use induction again. We saw that an assignment in steps $4$ or step $5$ is valid in terms of the competitive ratio. So is an assignment in step $3$, and an assignment in step $2$ due to a load of $M$ or more for $m_2$. Once again we are left with the case where a job $j$ with $g_j=1$ is assigned to $m_1$. As before, if steps $4$ and $5$ were never applied, $m_1$ has no jobs of GoS $2$, and if one of these steps was applied but all jobs assigned to $m_1$ later have GoS $1$, the load will not exceed $2-M$. Otherwise, $m_2$ already has load of $M$ or more, so $m_1$ will have a load of at most $2-M$.
\end{proof}

\begin{lemma} \label{lC:5}
The migration factor of the algorithm is at most $M$.
\end{lemma}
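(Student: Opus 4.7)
The plan is to observe that migration can only occur in step $5$ of Algorithm $C$, and to verify this directly by inspecting each step. In step $2$, the new job is placed on $m_1$ (either into $X$ or $Z$), with no previously assigned job moved. In step $3$, the new job is placed on $m_2$ and nothing else is touched. In step $4$, the new job $j$ is placed on $m_1$, and again no previously assigned job is migrated --- the set $W$ is only referenced for its size but is not actually moved. Thus for jobs assigned in steps $2$, $3$, or $4$, the total size of migrated jobs is $0$, which trivially satisfies the migration bound.

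It therefore suffices to bound the migration when step $5$ is applied. In this case, exactly the jobs of $W$ are migrated from $m_2$ to $m_1$, so the total size of migrated jobs equals $w_j$. By Lemma \ref{lC:1}, the algorithm always selects $W$ in step $5$ so that $\gamma_j \leq w_j \leq M \cdot p_j$, and in particular $w_j \leq M \cdot p_j$. This is precisely the definition of having migration factor at most $M$ for the arrival of job $j$.

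Since $p_j > 0$ by assumption, dividing by $p_j$ in the step-$5$ case and taking a trivial bound in the other cases yields the uniform bound $M$ on the ratio of migrated size to $p_j$ across every job arrival, proving the lemma. There is essentially no obstacle here: the content is entirely encapsulated in Lemma \ref{lC:1}, and the remaining work is the bookkeeping check that no migration happens outside step $5$.
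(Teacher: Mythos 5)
Your proof is correct and follows essentially the same route as the paper: observe that migration occurs only in step $5$, and invoke Lemma \ref{lC:1} to get $w_j \leq M\cdot p_j$. Your explicit step-by-step check that steps $2$--$4$ move nothing is just a more detailed version of the paper's one-line observation.
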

\begin{proof}
The only step where jobs may be migrated is step $5$. By Lemma \ref{lC:1}, the algorithm always manages to update $W$ such that $w_j \leq M\cdot p_j$, as required.
\end{proof}

\subsection{An algorithm for the case $\boldsymbol{\frac 23 \leq M < \frac 34}$}

We consider the $Algorithm \ D$ with migration factor in the interval $[\frac 23, \frac 34)$. The idea of $Algorithm \ D$ is the same idea that we reviewed in this article, The algorithm tries to load machine $m_2$ until the load reaches the interval $[M,\ 2-M]$ where it keeps the load on machine $m_1$ to be at most $2-M$. The migration is performed in this algorithm only from machine $m_2$ to machine $m_1$. The manner of selecting subset $W$ is similar to its predecessors, but the selection analysis is more complex. In analyzing subset $W$ selection in $Algorithm \ D$ we not only look at the big largest job in $Y$, but also at the second largest job. Now, we present the algorithm and analyze it.

\textbf{Algorithm \textit{E}}

Let $X=\emptyset,\ Y=\emptyset, \ Z=\emptyset;$

Repeat until all jobs have been assigned:
\begin{enumerate}

\item Receive job $j$ with $p_j$ and $g_j$; \item If $g_j=1 \ or \ y_{j-1} \geq M$ holds, assign $j$ to the first machine and update: $X \leftarrow X\cup \{j\},$ or $Z \leftarrow Z\cup \{j\},$ respectively.

            return to step 1.
\item If $y_{j-1}+p_j \leq 2-M$ holds, assign $j$ to the second machine and update:
$Y\leftarrow Y\cup \{j\}$,

            return to step 1.
\item If $p_j \geq M$ holds, sort the jobs in $Y$ by non-increasing size, and update $W$ to be the maximum length prefix of the sorted list with total size at most $M\cdot p_j$ (which may be empty).
\begin{itemize}
\item If $y_{j-1}-w_j+p_j > 2-M$ holds, assign $j$ to the first machine and update: $Z\leftarrow Z\cup\{j\}$.
\item Otherwise, migrate the jobs of $W$ to the first machine, assign $j$ to the second machine and update: $Y\leftarrow (Y \setminus W) \cup \{j\}$ and $Z\leftarrow Z\cup W$.\end{itemize}
            return to step 1.

\item Sort the jobs of $Y$ by non-increasing size and update $W$ to be the minimum length prefix of the sorted list with total size at least $\frac{M}{3}$, or the entire set if no such a prefix exists.

\begin{itemize}
\item If $w_j > \min\{\frac{2M}3, M\cdot p_j\}$ holds, update $W\leftarrow Y \setminus W$.
\item If $y_{j-1}-w_j+p_j > 2-M$ holds, assign $j$ to the first machine and update: $Z\leftarrow Z\cup\{j\}$.
\item Otherwise, migrate the jobs of $W$ to the first machine, assign $j$ to the second machine and update: $Y\leftarrow (Y \setminus W) \cup \{j\}$ and $Z\leftarrow Z\cup W$.\end{itemize}
            return to step 1.
\end{enumerate}

First, we analyze step $4$ in the algorithm, and the first job $j$ assigned by this step, if it exists. By the condition of assignment by step $4$, $p_j \geq M$ holds. We divide the analysis into two cases:

\begin{enumerate}
\item Job $j$ is assigned to machine $m_2$.
\item Job $j$ is assigned to machine $m_1$.
\end{enumerate}

In both cases we prove that the makespan is at most $2-M$. We start with the first case.

\begin{lemma}
\label{lD:1}
In step $4$, if the algorithm assigns $j$ to the second machine, then the final load of the two machines will be at most $2-M$.
\end{lemma}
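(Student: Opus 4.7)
The plan is to show three things in sequence: (i) right after the assignment of $j$, machine $m_2$ has load at most $2-M$; (ii) $m_2$'s load is also at least $M$, which locks out any further changes to $Y$; and (iii) given these bounds on $m_2$, the final load of $m_1$ is at most $2-M$ because the total size of all jobs is at most $2$.

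For (i), I would invoke the branch condition itself. The lemma hypothesis is that $j$ is assigned to $m_2$ in step $4$, which is exactly the case where $y_{j-1}-w_j+p_j \leq 2-M$. After migrating $W$ to $m_1$ and placing $j$ on $m_2$, the new load is $y_j = y_{j-1}-w_j+p_j$, so $y_j \leq 2-M$.

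For (ii), I would observe that $W \subseteq Y_{j-1}$ implies $w_j \leq y_{j-1}$, hence $y_j = (y_{j-1}-w_j)+p_j \geq p_j \geq M$, where the last inequality uses the entry condition $p_j \geq M$ of step $4$. Now, for every subsequent job $j'$, we have $y_{j'-1} \geq y_j \geq M$, so the condition of step $2$ is satisfied and $j'$ is assigned to $m_1$; in particular $Y$ is never modified after time $j$. This argument is structurally identical to the second part of Lemma \ref{lC:2} from the previous subsection, and I would state it by analogy rather than re-prove it. Hence $T_2 = y_n = y_j \leq 2-M$.

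For (iii), I would use the global bound on total job size. Since the optimal makespan is $1$, the total size of all jobs is at most $2$, i.e.\ $T_1+T_2 = x_n+z_n+y_n \leq 2$. Combined with $T_2 = y_n \geq M$ from (ii), this yields $T_1 \leq 2-M$, completing the proof. The only genuinely delicate point is verifying that no later step can migrate jobs away from $m_2$ or increase its load above $2-M$, which is exactly what (ii) rules out; after that, the $m_1$ bound is a one-line consequence of the total-size inequality.
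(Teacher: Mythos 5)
Your proposal is correct and follows essentially the same route as the paper's own proof: use the branch condition to get $y_j\leq 2-M$, use $p_j\geq M$ to get $y_j\geq M$ so that all later jobs are handled by step $2$, and conclude $T_1\leq 2-M$ from the total-size bound. The only difference is that you spell out the appeal to the Lemma \ref{lC:2}-style argument and the inequality $T_1+T_2\leq 2$, which the paper leaves implicit.
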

\begin{proof}
By the condition of assignment of $j$ to $m_2$, we have $y_j=y_{j-1}-w_j+p_j\leq 2-M$. Due to the size of $j$, we have $y_j\geq p_j \geq M$. From this moment onwards, the new jobs will be scheduled on the first machine by step $2$, so the final load of machine $m_2$ will be at least $M$ and at most $2-M$, and the final load of machine $m_1$ will be at most $2-M$.
\end{proof}

\begin{lemma}
\label{lD:2}
If $j$ is assigned to $m_1$, it holds that $|W|\leq 1$ and $|W|<|Y|$.
\end{lemma}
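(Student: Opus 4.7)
The plan is to prove both claims by contradiction, exploiting the conditions that trigger step~$4$ together with the $m_1$-assignment test $w_j < y_{j-1}+p_j-(2-M)$. Since step~$2$ did not apply we have $g_j=2$ and $y_{j-1}<M$; since step~$3$ did not apply we have $y_{j-1}+p_j>2-M$; and since step~$4$ is reached we have $p_j\geq M$. In particular $p_j\leq 1\leq 2-M$, which will be used repeatedly.

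I would first settle $|W|<|Y|$. If $W=Y$, then $w_j=y_{j-1}$, so the $m_1$-assignment test reduces to $p_j>2-M$, which contradicts $p_j\leq 1\leq 2-M$. This also guarantees that a $(|W|+1)$-st element of the sorted list of $Y_{j-1}$ exists, which is needed in the argument for $|W|\leq 1$.

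For $|W|\leq 1$, assume for contradiction that $|W|=k\geq 2$, and let $s_1\geq s_2\geq\cdots\geq s_k$ denote the sizes of the jobs of $W$ in sorted order, and let $s_{k+1}$ be the size of the next element in the sorted list. I would derive two opposing bounds on $w_j$. On one hand, by the maximality of the prefix we have $w_j+s_{k+1}>Mp_j$, so $s_{k+1}>Mp_j-w_j$; since $k\geq 2$, $s_{k+1}\leq s_k\leq s_2=p^{\max{Y,2}}_j$, while $w_j\geq s_1+s_2\geq 2p^{\max{Y,2}}_j$ forces $p^{\max{Y,2}}_j\leq w_j/2$. Chaining these gives $w_j/2\geq p^{\max{Y,2}}_j>Mp_j-w_j$, hence $w_j>\frac{2Mp_j}{3}$. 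On the other hand, the $m_1$-assignment test combined with $y_{j-1}<M$ gives $w_j<2M+p_j-2$. Combining the two bounds yields $p_j(3-2M)>6(1-M)$, and since the hypothesis $M<\frac 34$ is equivalent to $6(1-M)>3-2M$, this forces $p_j>1$, contradicting $p_j\leq 1$.

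The main obstacle is identifying the correct pair of bounds: the lower bound on $w_j$ requires combining two facts about the sorted prefix---namely, that the first two entries already account for at least $2p^{\max{Y,2}}_j$ of $w_j$, and that the next entry is squeezed by both $s_{k+1}\leq p^{\max{Y,2}}_j$ and $s_{k+1}>Mp_j-w_j$---while the upper bound comes from the $m_1$-assignment condition together with $y_{j-1}<M$. The final algebra is tight precisely at $M=\frac 34$, which matches the range of this case exactly and explains why the statement holds throughout the interval $[\frac 23,\frac 34)$.
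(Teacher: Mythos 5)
Your proof is correct and follows essentially the same route as the paper's: the $W=Y$ case is dispatched identically, and for $|W|\geq 2$ both arguments derive the lower bound $w_j>\frac{2}{3}M p_j$ from the maximal-prefix property (the paper via $w_j\geq k\cdot s_{k+1}$, you via $s_{k+1}\leq s_2\leq w_j/2$, which is the same estimate at $k=2$) and play it against the upper bound $w_j<2M+p_j-2$ coming from the $m_1$-assignment test and $y_{j-1}<M$, reaching the same contradiction with $p_j\leq 1$ and $M\leq\frac34$.
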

\begin{proof}
If $W=Y$, we have $w_j=y_{j-1}$ and $y_{j-1}-w_y+p_j =p_j \leq 1 \leq 2-M$. Thus, in the case $W=Y$,  $j$ is assigned to $m_2$.

If $W\geq 2$, while $|Y|>|W|$, letting $k=|W|$ (where $k\geq 2$), we have the following. The first job of $Y$ (in the sorted order) that is not in $W$ is not larger than any job of $W$, so $w_j\geq k \cdot x$, where $x$ is the size of that job. Since  $w_j+x > M\cdot p_j$ and $x \leq \frac{w_j}k$, we have $\frac{k+1}k\cdot w_j > M \cdot p_j$ and by $k \geq 2$, we get $w_j > \frac 23 \cdot M \cdot p_j$.

On the other hand, if we show that $w_j \geq 2M-2+p_j$, we will have $y_{j-1}-w_j+p_j \leq y_{j-1}-2M+2 \leq 2-M$, since $y_{j-1}< M$. This would imply that $j$ is assigned to $m_2$. It if sufficient to show that $\frac 23 \cdot M \cdot p_j \geq 2M-2+p_j$, and by rearranging, $(1-\frac {2M}3) \cdot  p_j \leq 2-2M$. Since $p_j\leq 1$ and $1-\frac {2M}3 > \frac 12>0$ for $M<\frac 34$, it is sufficient to show that $1-\frac {2M}3 \leq 2-2M$, or equivalently, $\frac{4M}3\leq 1$, which holds since $M\leq \frac 34$.
\end{proof}

Recall that $\gamma_j=y_{j-1} + p_j -(2-M)$ in this case.

\begin{lemma}
\label{lD:3}
Assume that $j$ is assigned in step $4$. If it holds that $p^{\max{Y}}_j <\gamma_j$, and $p^{\max{Y,2}}_j+p^{\max{Y}}_j > M\cdot p_j$, then $p_j+p^{\max{Y,2}}_j > 1$ holds (and $p_j+p^{\max{Y}}_j > 1$ holds as well). Moreover, if $p^{\max{Y}}_j > M\cdot p_j$, then $p_j+p^{\max{Y}}_j > 1$ holds. In both situations, the total size of all jobs of GoS 1 is at most $1-M\cdot p_j$.

\end{lemma}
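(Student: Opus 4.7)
The plan is to use the partition of step $4$ according to $|W|$. Since $j$ is assigned to $m_1$, Lemma~\ref{lD:2} gives $|W|\leq 1$, and because $W$ is the maximum length prefix of $Y_{j-1}$ (sorted by non-increasing size) with $w_j\leq M\cdot p_j$, one has $|W|=0$ precisely when $p^{\max{Y}}_j>M\cdot p_j$ (the second situation) and $|W|=1$ with $w_j=p^{\max{Y}}_j$ precisely when $p^{\max{Y}}_j\leq M\cdot p_j<p^{\max{Y}}_j+p^{\max{Y,2}}_j$. I will also use the two standing inequalities $y_{j-1}<M$ (step $2$ not applied) and $p_j\geq M$ (step $4$ condition); these yield $\gamma_j<2M+p_j-2$, and since $2M+p_j-2\leq M\cdot p_j$ reduces to $p_j\leq 2$, in fact $\gamma_j<M\cdot p_j$. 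Consequently $p^{\max{Y}}_j<\gamma_j$ automatically places the first hypothesis inside the $|W|=1$ regime.

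The main piece is the first situation. Substituting the upper bound $p^{\max{Y}}_j<2M+p_j-2$ into $p^{\max{Y,2}}_j>M\cdot p_j-p^{\max{Y}}_j$ produces a lower bound of the form $p^{\max{Y,2}}_j>(M-1)p_j+2(1-M)$, whence $p_j+p^{\max{Y,2}}_j>M\cdot p_j+2(1-M)$. The inequality $M\cdot p_j+2(1-M)>1$ rearranges to $p_j>2-1/M$, and since $p_j\geq M$ the whole thing collapses to checking $M>2-1/M$, equivalently $(M-1)^2>0$, which holds because $M<\frac 34<1$. The claim $p_j+p^{\max{Y}}_j>1$ then follows at once from $p^{\max{Y}}_j\geq p^{\max{Y,2}}_j$. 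The main obstacle is precisely this chain: routing the step-$4$ conditions through the upper bound on $\gamma_j$ so that the final algebraic step reduces to the harmless $(M-1)^2>0$. For the second situation, $p^{\max{Y}}_j>M\cdot p_j$ together with $p_j\geq M\geq \frac 23$ immediately gives $p_j+p^{\max{Y}}_j>(1+M)p_j\geq M(1+M)\geq \frac{10}{9}>1$.

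For the bound on $\Lambda$ I would argue via any optimal schedule. Since $p_j+p^{\max{Y}}_j>1$ in both situations, $j$ and $j^{\max{Y}}$ must lie on different machines in any optimum. In the second situation this alone forces $\Lambda\leq 1-\min(p_j,p^{\max{Y}}_j)\leq 1-M\cdot p_j$ (using $p_j\geq M\cdot p_j$ and $p^{\max{Y}}_j>M\cdot p_j$). In the first situation $p_j+p^{\max{Y,2}}_j>1$ additionally forces $j^{\max{Y}}$ and $j^{\max{Y,2}}$ together onto the machine opposite $j$, so $m_1$ carries either $\{j\}$ or $\{j^{\max{Y}},j^{\max{Y,2}}\}$, and the remaining GoS-$1$ capacity is at most $\max(1-p_j,\,1-p^{\max{Y}}_j-p^{\max{Y,2}}_j)\leq 1-M\cdot p_j$.
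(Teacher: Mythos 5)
Your proposal is correct and takes essentially the same route as the paper: your direct chain for the first situation is the paper's contradiction argument run forwards (both reduce, via $y_{j-1}<M$ and $p_j\geq M$, to $(M-1)^2>0$), the second situation uses the identical bound $M(1+M)>1$, and the $\Lambda$ bound via the placement of $j$, $j^{\max Y}$, $j^{\max Y,2}$ in an optimal schedule matches the paper's. Your opening paragraph classifying $|W|$ is dispensable and tacitly assumes $j$ is assigned to $m_1$ (which is not a hypothesis of the lemma), but nothing in the substantive argument relies on it.
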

\begin{proof}
We start with the first property. Let us assume (by contradiction) that the conditions hold but $p_j+p^{\max{Y,2}}_j \leq 1$. We get (by  $y_{j-1}<M$ which holds since step $2$ was not applied for $j$):
$$ M\cdot p_j < p^{\max{Y,2}}_j+p^{\max{Y}}_j <1-p_j+\gamma_j = y_{j-1}+M-1 < 2M-1 \ ,$$
i.e. $M\cdot p_j < 2M-1$. Therefore $M^2 < 2M-1$, which holds since $p_j\geq M$ (since $j$ is assigned in step $4$). A contradiction, because $M^2-2M+1 = (M-1)^2 \geq 0$. Thus, we find that $p_j+p^{\max{Y,2}}_j > 1$ holds, and it also holds that $p_j+p^{\max{Y}}_j>1$,
because $p^{\max{Y}}_j\geq p^{\max{Y,2}}_j$.

If we have $p^{\max{Y}}_j > M\cdot p_j$, we get $p_j+p^{\max{Y}}_j > (M+1)\cdot p_j\geq M(M+1)\geq \frac 23 \cdot \frac 53>1 $.

In both cases, there is a job or even a pair of jobs that cannot be assigned to the machine of $j$ in any optimal solution. If $j$ is assigned to $m_1$ in such a solution, the total size of other jobs of $m_1$ is at most $1-p_j \leq 1-M\cdot p_j$ since $M<1$, and there are no other jobs whose GoS is $1$. Otherwise, the upper bound on jobs of GoS $1$ is $1-M\cdot p_j$ due to the size of the largest job or two largest jobs of $Y$ that cannot be assigned to $m_2$ in this case (in the first situation, this is based on a pair of jobs and in the second situation it is based on one job).
\end{proof}

\begin{lemma}
\label{lD:4}
In step $4$, if the algorithm assigns $j$ to the first machine, then the loads of the two machines will not exceed $2-M$,and if $m_1$ will only get jobs of GoS $1$ afterwards, its final load will be at most $2-M$. Moreover, further jobs will assigned by steps $2$ and $3$, and the final makespan will not exceed $2-M$.
\end{lemma}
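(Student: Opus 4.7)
The plan is to first pin down the state just after step $4$ places $j$ on $m_1$, then to show that no later job can re-enter step $4$ or step $5$, and finally to read off the final makespan bound. Since steps $2$ and $3$ are not applied to $j$ we have $y_{j-1}<M$ and $y_{j-1}+p_j>2-M$, while step $4$'s precondition gives $p_j\geq M$. As $j$ goes to $m_1$, the load of $m_2$ is $y_j=y_{j-1}<M<2-M$, and by Lemma \ref{lD:2} we have $|W|\leq 1$.

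Next I would split on $|W|$ and invoke Lemma \ref{lD:3} to bound the load of $m_1$. If $|W|=0$, then $Y_{j-1}$ is nonempty (else step $3$ would apply) and maximality of $W$ forces $p^{\max{Y}}_j>M\cdot p_j$, matching the ``Moreover'' hypothesis of Lemma \ref{lD:3}. If $|W|=1$, then $w_j=p^{\max{Y}}_j\leq M\cdot p_j$; the condition $y_{j-1}-w_j+p_j>2-M$ that sends $j$ to $m_1$ becomes $p^{\max{Y}}_j<\gamma_j$, and maximality of $W$ together with $|W|<|Y|$ (Lemma \ref{lD:2}) gives $p^{\max{Y}}_j+p^{\max{Y,2}}_j>M\cdot p_j$, matching the first hypothesis of Lemma \ref{lD:3}. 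In either case $\Lambda\leq 1-M\cdot p_j$ for the total size $\Lambda$ of GoS-$1$ jobs. Taking $j$ to be the first job for which step $4$ or step $5$ is invoked, prior to $j$ the load $y$ stays below $M$ by monotonicity of $y$ under steps $2$ and $3$, so $m_1$ carries only GoS-$1$ jobs before $j$. Thus $m_1$'s load just after $j$ is at most $\Lambda+p_j\leq 1+(1-M)p_j\leq 2-M$.

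The main obstacle is ruling out any future invocation of step $4$ or $5$. Suppose for contradiction a later GoS-$2$ job $j'$ triggers one; then $y_{j'-1}<M$ and $y_{j'-1}+p_{j'}>2-M$, forcing $p_{j'}>2-2M$. Combined with $p_j\geq M$ this gives $p_j+p_{j'}>2-M>1$, so in any optimal schedule $j$ and $j'$ lie on different machines. Lemma \ref{lD:3} also gives $p_j+p^{\max{Y}}_j>1$ in both sub-cases, and $p_j+p^{\max{Y,2}}_j>1$ when $|W|=1$, so the jobs $p^{\max{Y}}_j$ (and $p^{\max{Y,2}}_j$ in the second sub-case), which still reside on $m_2$ in the algorithm, must in the optimum lie on the machine opposite $p_j$, that is, together with $j'$. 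Using $p^{\max{Y}}_j>M\cdot p_j$ in sub-case $|W|=0$ and $p^{\max{Y}}_j+p^{\max{Y,2}}_j>M\cdot p_j$ in sub-case $|W|=1$, the load of that machine is at least
\[
p_{j'}+M\cdot p_j>(2-2M)+M^2=(M-1)^2+1>1,
\]
contradicting the optimum value $1$.

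Finally, with no further step $4$ or $5$ in the input, step $3$ keeps $y\leq 2-M$ and step $2$ does not alter $y$, so $m_2$'s final load is at most $2-M$. If $m_1$ only receives GoS-$1$ jobs afterwards, its final load stays at most $\Lambda+p_j\leq 2-M$ exactly as above. If instead some later GoS-$2$ job is assigned to $m_1$ by step $2$, that happens only once $y\geq M$, after which $y$ stays $\geq M$, so by the total-size bound $m_1$'s final load is at most $2-y\leq 2-M$. In either case the final makespan is at most $2-M$.
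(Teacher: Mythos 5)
Your proof is correct and follows essentially the same route as the paper: $|W|\leq 1$ via Lemma \ref{lD:2}, the bound $\Lambda\leq 1-M\cdot p_j$ via Lemma \ref{lD:3} giving $\Lambda+p_j\leq 1+(1-M)p_j\leq 2-M$, and the conclusion that all later jobs are handled by steps $2$ and $3$. The only difference is that you make explicit, via the optimal-schedule pigeonhole argument showing a later large job would force some machine's load above $1$, the claim that no future job reaches step $4$ or $5$ --- a claim the paper asserts much more tersely.
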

\begin{proof}
By Lemma \ref{lD:2}, we have $|W|\leq 1$, and $Y$ has at least one other job. If $W$ is empty, this means that already $p^{\max{Y}}_j > M\cdot p_j$ holds. In this case, the load of $m_1$ is at most the total size of jobs of GoS $1$ plus $p_j$, which is at most $1-M\cdot p_j+p_j$ by Lemma \ref{lD:3}. We have $1-M\cdot p_j+p_j=1+p_j(1-M)\leq 2-M$ by $p_j\leq 1$. The upper bound on the load of $m_1$ is valid not only after $j$ is assigned to it but also as long as this machine receives only jobs of GoS $1$.

If $W$ consists of one job, then $p^{\max{Y,2}}_j+p^{\max{Y}}_j > M\cdot p_j$ and due to the assignment to $m_1$, it holds that $p^{\max{Y}}_j = w_j <\gamma_j$. By  Lemma \ref{lD:3}, the analysis for $m_1$ is as in the previous case. This completes the analysis for $m_1$, and the load of $m_2$ is unchanged in the assignment of $j$, and it remains below $M<1$.
Since a job of size $M$ was just assigned to $m_1$, all further jobs of GoS $2$ can be assigned in step 3 if they are not assigned in step $2$, and such an assignment will not increase the loads above $2-M$. An assignment of a job of GoS $2$ in step $2$ means that the load of $m_2$ is already at least $M$, so after such an assignment is performed, all further jobs are assigned in step $2$, keeping the load of $m_1$ not larger than $2-M$ (and keeping the load of $m_2$ unchanged).
\end{proof}

For the analysis of an assignment of a job $j$ in step $5$, the case where in particular it holds that $p_j < M$, we will first prove some auxiliary lemmas. Note that if step $5$ is applied (and the previous steps were not), we have $p_j < M$, and $y_{j-1}+p_j > 2-M$, which implies $y_j > 2-2M > \frac{2M}{3} > \frac 12$ by $M< \frac 34$, and therefore the set $W$ indeed has total size above $\frac M3$.

\begin{lemma} \label{lD:5}
In step $5$, given a set of jobs which is a subset of $\{1,2,\ldots,j-1\}$, whose total size is at most $\frac{M}{2}$, it is possible to migrate these jobs.
\end{lemma}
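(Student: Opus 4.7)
The plan is to reduce the statement to showing that $p_j > \tfrac{1}{2}$, since then the migration budget $M\cdot p_j$ exceeds the total size $\tfrac{M}{2}$ of the jobs we wish to move. I would first collect the constraints that are forced by the fact that step~5 is being executed: steps~2,~3,~and~4 were not applied, which means $y_{j-1}<M$, $y_{j-1}+p_j>2-M$, and $p_j<M$, respectively.

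Next I would combine the first two to get the key lower bound on $p_j$: from $y_{j-1}+p_j>2-M$ and $y_{j-1}<M$ we obtain
\[
p_j > (2-M)-y_{j-1} > 2-2M.
\]
Since we are in the range $\tfrac{2}{3}\leq M < \tfrac{3}{4}$, we have $2-2M > \tfrac{1}{2}$, and therefore $p_j > \tfrac{1}{2}$.

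Finally, given any subset $S\subseteq\{1,\ldots,j-1\}$ of total size at most $\tfrac{M}{2}$, the migration-factor constraint for the arrival of~$j$ allows the migration of any collection of total size up to $M\cdot p_j$. Using $p_j>\tfrac{1}{2}$ we get
\[
M\cdot p_j > \frac{M}{2}\geq \sum_{k\in S} p_k,
\]
so the migration of $S$ is permitted by the migration factor, which is exactly what the lemma claims. No real obstacle arises here: the statement is essentially a budget check, and all inequalities follow immediately from the conditions that had to fail in order for step~5 to be reached, together with the assumption $M<\tfrac{3}{4}$ that pins down this case of the algorithm.
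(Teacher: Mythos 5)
Your proof is correct and follows essentially the same route as the paper: both derive $p_j > 2-2M$ from the failure of steps~2 and~3, note that $2-2M > \tfrac12$ for $M<\tfrac34$, and conclude that the migration budget $M\cdot p_j$ exceeds $\tfrac{M}{2}$. The only difference is that you spell out the intermediate inequality $y_{j-1}<M$ explicitly, which the paper leaves implicit.
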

\begin{proof}
Since $j$ is not assigned in steps $2$ and $3$, we have $p_j > 2-2M$. The proof follows immediately from the inequality: $p_j > 2-2M > \frac{1}{2}$, so $M\cdot p_j \geq \frac M2$.
\end{proof}

\begin{lemma}
\label{lD:6}
In step $5$, $y_{j-1} > \frac{2M}{3}$ and $p_j > \frac{2M}{3}$ hold.
\end{lemma}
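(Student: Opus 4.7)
The plan is to read off the entry conditions for step $5$ and then do a short arithmetic manipulation. Since step $5$ is executed for job $j$, none of steps $2$, $3$, $4$ were applied. The first step failing tells us $g_j = 2$ and $y_{j-1} < M$; the second step failing tells us $y_{j-1} + p_j > 2 - M$; and the fourth step failing tells us $p_j < M$. (These are already mentioned in the paragraph preceding the lemma.)

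Next, I would combine these three inequalities pairwise. From $y_{j-1} + p_j > 2 - M$ and $p_j < M$ I conclude $y_{j-1} > 2 - M - p_j > 2 - 2M$. Symmetrically, from $y_{j-1} + p_j > 2 - M$ and $y_{j-1} < M$ I get $p_j > 2 - 2M$. So both quantities exceed $2 - 2M$.

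The last step is to check that $2 - 2M \geq \tfrac{2M}{3}$ for the range of $M$ in question. Rearranging, this is equivalent to $6 \geq 8M$, i.e.\ $M \leq \tfrac{3}{4}$, which holds throughout the interval $[\tfrac23,\tfrac34)$ currently under consideration. Since the inequalities $y_{j-1} > 2 - 2M$ and $p_j > 2 - 2M$ are strict, we obtain $y_{j-1} > \tfrac{2M}{3}$ and $p_j > \tfrac{2M}{3}$ as required.

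There is no real obstacle here: the entire lemma is just an unpacking of the negations of the preceding step conditions plus one easy inequality on $M$. The only thing to be careful about is making sure the strict inequalities are preserved through the chain, and noting that the bound $M < \tfrac34$ (which defines this subcase) is exactly what is needed to dominate $\tfrac{2M}{3}$ by $2 - 2M$.
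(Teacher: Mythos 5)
Your proof is correct and takes essentially the same approach as the paper: both arguments combine the negated conditions of the earlier steps ($y_{j-1}<M$, $p_j<M$, $y_{j-1}+p_j>2-M$) with the case bound $M<\frac34$. The paper phrases it as a contradiction (if one quantity were $\leq \frac{2M}{3}$ then $y_{j-1}+p_j<\frac{5M}{3}<1.25<2-M$), while you argue directly via $y_{j-1},p_j>2-2M\geq\frac{2M}{3}$; the underlying inequality is the same.
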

\begin{proof}
In this case $y_{j-1} < M$ and $p_j < M$ hold since previous steps were not applied. If one of them has value which is not greater than $\frac{2M}{3}$, we get
$y_{j-1}+p_j < \frac{2M}{3} + M = \frac{5M}{3} < \frac{5}{3} \cdot \frac{3}{4} = 1.25,$ a contradiction, because the algorithm considered step $3$ for $j$ and did not apply it, i.e. $y_{j-1}+p_j > 2-M > 1.25$.
\end{proof}

\begin{lemma}
\label{lD:7}
In step $5$, if the algorithm migrated jobs from $m_2$, and scheduled $j$ on $m_2$, and we have $\frac{M}{3} \leq y_j-p_j \leq \frac{2M}{3}$, then $M \leq y_j \leq 2-M$.
\end{lemma}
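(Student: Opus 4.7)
The plan is to establish the two inequalities $y_j \ge M$ and $y_j \le 2-M$ separately, each from a direct arithmetic combination of the hypothesis $\frac{M}{3} \le y_j - p_j \le \frac{2M}{3}$ with facts already available about step $5$.

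First I would set up what is known at step $5$ when $j$ is placed on $m_2$ after migration. Since the earlier steps did not apply, we have $p_j < M$ (step $4$ was not applied) and $y_{j-1}+p_j > 2-M$ (step $3$ was not applied), and by Lemma \ref{lD:6} both $y_{j-1} > \frac{2M}{3}$ and $p_j > \frac{2M}{3}$. Because $j$ was actually assigned to $m_2$ in step $5$, the branch condition forces $y_{j-1}-w_j+p_j \le 2-M$, and by construction $y_j = y_{j-1}-w_j+p_j$, so $y_j - p_j = y_{j-1}-w_j$ is exactly the residual load of $m_2$ after the migration but before $j$ is added.

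For the lower bound, I would just write $y_j = (y_j - p_j) + p_j \ge \frac{M}{3} + p_j > \frac{M}{3} + \frac{2M}{3} = M$, using the left hypothesis and Lemma \ref{lD:6}. For the upper bound, the cleanest route is the step $5$ condition itself: $y_j = y_{j-1}-w_j+p_j \le 2-M$. (A purely arithmetic check also works, namely $y_j \le \frac{2M}{3}+p_j < \frac{2M}{3}+M = \frac{5M}{3} \le \frac{5}{4} \le 2-M$ whenever $M \le \frac{3}{4}$, which could be included as a sanity check.)

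There is no real obstacle here; the statement is essentially bookkeeping that says the two-sided bracket on the pre-insertion residual, together with the size of $p_j$ guaranteed by Lemma \ref{lD:6} and the branch condition of step $5$, places the post-insertion load of $m_2$ inside the target window $[M,2-M]$. The only thing to be slightly careful about is to invoke the step $5$ branch condition (rather than re-deriving the upper bound from scratch), and to remember that $p_j > \frac{2M}{3}$ is strict, which is what gives the strict inequality $y_j > M$ and hence the non-strict conclusion $y_j \ge M$.
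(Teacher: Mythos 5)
Your proposal is correct and follows essentially the same route as the paper: the paper likewise obtains $y_j \leq (y_j-p_j)+p_j \leq \frac{2M}{3}+M \leq 2-M$ (using $M\leq \frac34$) and $y_j \geq \frac{M}{3}+p_j > M$, the only cosmetic differences being that you bound $p_j$ below by $\frac{2M}{3}$ via Lemma \ref{lD:6} where the paper uses $p_j>2-2M$, and that you offer the step-$5$ branch condition as the primary upper-bound argument (legitimate here since the lemma's hypothesis already assumes $j$ was placed on $m_2$), with the paper's arithmetic as your fallback.
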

\begin{proof}
It holds that $y_j \leq p_j + \frac{2M}3 \leq 2-M$, since $p_j\leq M \leq \frac 34$.

On the other hand $y_j \geq \frac M3 + p_j > \frac M3 + (2-2M)> M$, because $\frac{8M}{3} < 2$ (for $M < \frac 34$).
\end{proof}

\begin{lemma}
\label{lD:8}
In step $5$, if the algorithm migrates jobs from $m_2$, where the total size of these jobs is in the interval $[\frac{M}{3},\frac{2M}{3}]$, and it schedules $j$ on $m_2$, then as a result we get $M \leq y_j \leq 2-M$.
\end{lemma}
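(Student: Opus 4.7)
The plan is to treat this as a direct consequence of the step 5 preconditions combined with the hypothesized bounds on $w_j$. I would start by collecting the inequalities that are guaranteed by the fact that steps 2, 3, and 4 were not applied for $j$: namely $g_j=2$, $y_{j-1}<M$, $p_j<M$ (since step 4 requires $p_j\geq M$), and $y_{j-1}+p_j>2-M$ (since step 3 was skipped). The first three together give $y_{j-1}+p_j<2M$, and the fourth gives $y_{j-1}+p_j>2-M$, so the quantity $y_{j-1}+p_j$ is pinned inside the interval $(2-M,\,2M)$.

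Next, since the algorithm migrates $W$ and assigns $j$ to $m_2$, I have $y_j=y_{j-1}+p_j-w_j$. Plugging in the hypothesis $\frac{M}{3}\leq w_j\leq \frac{2M}{3}$ gives the two-sided estimate
\[
(2-M)-\tfrac{2M}{3}\;<\;y_j\;<\;2M-\tfrac{M}{3},
\]
i.e.\ $2-\tfrac{5M}{3}<y_j<\tfrac{5M}{3}$.

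Finally, both the desired bounds $M\leq y_j$ and $y_j\leq 2-M$ reduce to the single inequality $\tfrac{8M}{3}\leq 2$, which is exactly $M\leq \tfrac{3}{4}$ and therefore holds throughout the regime of Algorithm $E$. Specifically, $\tfrac{5M}{3}\leq 2-M$ yields the upper bound on $y_j$, and $2-\tfrac{5M}{3}\geq M$ yields the lower bound on $y_j$.

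I do not expect a real obstacle here; the argument is just a careful application of the inequalities assembled from the skipped steps, and the conclusion is essentially a rephrasing of the computation already used in Lemma~\ref{lD:7}. The only point requiring care is to remember that $p_j<M$ is available (because step 4 was not triggered), since without that one would not obtain $y_{j-1}+p_j<2M$ and the upper bound on $y_j$ would fail.
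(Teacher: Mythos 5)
Your proposal is correct and follows essentially the same route as the paper's own proof: both use $2-M < y_{j-1}+p_j < 2M$ (from the skipped steps $3$ and $4$) together with $\frac{M}{3}\leq w_j\leq\frac{2M}{3}$, and both bounds reduce to $\frac{8M}{3}\leq 2$, i.e.\ $M\leq\frac34$. No gaps.
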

\begin{proof}
If the algorithm migrates a total size which is no less than $\frac{M}{3}$, then we get $y_j \leq p_j +y_{j-1} - \frac{M}{3} \leq 2M-\frac{M}{3} = \frac{5M}{3} < 2-M$. If the algorithm migrates a total size of jobs which is no more than $\frac{2M}{3}$, then we get $y_j \geq p_j+y_{j-1}-\frac{2M}{3} \geq 2-M- \frac{2M}{3} > M,$ because $\frac{8M}3 < 2$ for $M < \frac 34$.
\end{proof}

\begin{lemma}
\label{lD:9}
In step $5$, if the algorithm assigns $j$ to machine $m_2$, then the final makespan will be at most $2-M$, and the migration factor is not violated.
\end{lemma}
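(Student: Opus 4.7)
The plan is to establish three claims in order: (a) the migration bound $w_j \leq M p_j$ is not violated; (b) $M \leq y_j \leq 2-M$ immediately after $j$ is assigned; (c) every subsequent job is routed through step 2, so no machine ever exceeds load $2-M$ later. Throughout, I use the preconditions for reaching step 5: $g_j = 2$, $y_{j-1} < M$, $p_j < M$, $y_{j-1} + p_j > 2-M$, together with $y_{j-1}, p_j > 2M/3$ from Lemma~\ref{lD:6}. Write $\alpha$ for the size of the initial prefix set $W_1$ (the minimum-length prefix of total size $\geq M/3$), and recall that the algorithm flips to $W \leftarrow Y \setminus W_1$ precisely when $\alpha > \min\{2M/3, M p_j\}$.

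For (b), in the no-flip case $w_j = \alpha \in [M/3, \min\{2M/3, M p_j\}]$ and Lemma~\ref{lD:8} delivers $M \leq y_j \leq 2-M$ directly. In the flip case one has $y_j = p_j + \alpha$. If $\alpha \leq 2M/3$, then the flip condition forces $\alpha > M p_j$ (so $p_j < 2/3$), and combined with $p_j > 2M/3 > 1/3$ this gives $\alpha \in (M/3, 2M/3]$; Lemma~\ref{lD:7}, applied with $y_j - p_j = \alpha$, then yields $M \leq y_j \leq 2-M$. If instead $\alpha > 2M/3$, then $y_j = p_j + \alpha > 4M/3 > M$, and $y_j \leq 2-M$ holds because this is exactly the condition under which step 5 chose to assign $j$ to $m_2$.

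For (a), the no-flip case is immediate from $w_j = \alpha \leq M p_j$. In the flip case with $\alpha > 2M/3$, one gets $w_j = y_{j-1} - \alpha < M - 2M/3 = M/3 < M p_j$, using $p_j > 1/3$. The expected main obstacle is the remaining flip case, where $\alpha \leq 2M/3$ and the flip condition only gives $\alpha > M p_j$. Here I would combine $\alpha + M p_j > 2 M p_j$ with the step-3 failure bound $p_j > 2 - M - y_{j-1}$ to obtain $\alpha + M p_j > 2M(2-M-y_{j-1})$, and the desired $w_j = y_{j-1} - \alpha \leq M p_j$ reduces to the inequality $y_{j-1}(1 + 2M) \leq 4M - 2M^2$. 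Since $y_{j-1} < M$, it suffices to check $M(1+2M) \leq 4M - 2M^2$, which is equivalent to $4M^2 \leq 3M$, i.e., $M \leq 3/4$, and this holds by hypothesis.

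For (c), once $y_j \geq M$ a straightforward induction on $k > j$ shows that every subsequent job goes through step 2: for $g_k = 1$ by the first clause, and for $g_k = 2$ by the persistence of $y_{k-1} \geq M$ inherited from $y_j$ (since $y$ on $m_2$ is not modified after $j$). So $m_2$'s final load equals $y_j \leq 2-M$, while the total-size bound $x + z + y \leq 2$ combined with $y \geq M$ keeps $m_1$'s load at most $2-M$ throughout the remainder of the input. The migration factor bound is handled by (a), since step 5 is the only step at which $j$ triggers any migration.
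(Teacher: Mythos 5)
Your proof is correct and follows essentially the same route as the paper's: the same three-way case split on whether $W$ is flipped and on the size of the initial prefix, invoking Lemmas \ref{lD:7} and \ref{lD:8} for the two middle cases and the assignment condition directly for the $\alpha>\frac{2M}{3}$ case, then the step-2 argument for all later jobs. The only divergence is your algebraic derivation of the migration bound in the flip case with $\alpha\leq\frac{2M}{3}$; the paper gets this more directly from Lemma \ref{lD:5}, since $\alpha>M\cdot p_j\geq \frac{M}{2}$ (as $p_j>2-2M>\frac12$) forces the complement to have total size below $M-\frac{M}{2}=\frac{M}{2}\leq M\cdot p_j$ — but your computation reaches the same conclusion.
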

\begin{proof}
In this case the algorithm checked if migration of $W$ would violate the migration factor, and replaced it with $Y\setminus W$ if it would. By Lemma \ref{lD:5}, and since $y_{j-1}<M$, at least one of these sets can be migrated (since the total size of the set with the smallest total size out of $W$ and $Y\setminus W$ is below $\frac M2$). Thus, if the original $W$ cannot be migrated, its complement can be migrated with respect to the migration factor. Recall that $y_{j-1}>\frac{2M}{3}$ by Lemma \ref{lD:6}, and the set $W$ is first defined to have a total size of at least $\frac {M}3$. Consider the following cases:

Case 1: $p^{\max{Y}}_j \leq \frac{2M}{3}$. In this case, the algorithm initially updates $W$ such that $\frac{M}{3}\leq w_j \leq \frac{2M}{3}$. The lower  bound is based on the total size. For the upper bound, we find that if the largest job of $Y$ is of size at least $\frac M3$, then $W$ initially consists of this job. Otherwise all jobs have sizes below $\frac M3$, and when the last job is added, the total size is below $\frac M3$.

If $w_j \leq Mp_j$, the algorithm migrates $W$ to $m_1$ and schedules $j$ on $m_2$, so by Lemma \ref{lD:8} we get $M \leq y_j \leq 2-M$. Otherwise, the algorithm updates $W$ to be $Y \setminus W$, migrates it to $m_1$ and schedules $j$ on $m_2$. In the last case, we get $M \leq y_j \leq 2-M$ by Lemma \ref{lD:7}, since $y_j-p_j$ is exactly the total size of jobs of $W$ before it is replaced with its complement.

Case 2: $p^{\max{Y}}_j > \frac{2M}{3}$, in this case, the algorithm first selects $W$ to be the maximum job in $Y$. After this, the algorithm update $W$ to be $Y \setminus W$ and migrate it to $m_1$ and schedule $j$ on $m_2$, so we get $y_j =p_j+ p^{\max Y}_j > \frac{2M}{3} + \frac{2M}{3} > M$ (see Lemma \ref{lD:6}), and by the condition of assignment of $j$ to $m_2$, we have $y_j= y_{j-1}+p_j-w_j \leq 2-M$.

In the previous cases we get $M \leq y_j \leq 2-M$, so all jobs arriving after $j$ will be scheduled on the first machine by step $2$, and we get: $M \leq T_2 =y_j \leq 2-M$, and $T_1 =2-T_2 \leq 2-M$.
\end{proof}

\begin{corollary} \label{cD:1}
In step $5$, if the algorithm assigns $j$ to machine $m_1$, then $p^{\max{Y}}_j + p_j > 2-M$.
\end{corollary}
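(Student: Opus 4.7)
I plan to prove the contrapositive: assuming $p^{\max Y}_j + p_j \leq 2-M$, I show that step $5$ assigns $j$ to $m_2$. The rule for dispatching $j$ to $m_1$ in step $5$ reads $y_{j-1} + p_j - w_j > 2-M$, where $w_j$ denotes the size of $W$ \emph{after} the possible replacement by $Y \setminus W$; so it suffices to verify $w_j \geq \gamma_j$, where $\gamma_j = y_{j-1} + p_j - (2-M)$. By Lemma \ref{lD:6} and the conditions that take us into step $5$ we have $0 < \gamma_j < 3M-2$, and since $M \leq \frac 34$ this also gives $\gamma_j < \frac M3$.

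I would then split on $p^{\max Y}_j$. If $p^{\max Y}_j > \frac{2M}{3}$, the initial prefix equals the singleton $\{j^{\max Y}\}$ with size exceeding $\frac{2M}{3} \geq \min\{\frac{2M}{3}, M p_j\}$, so the algorithm replaces $W$ by $Y \setminus \{j^{\max Y}\}$, yielding the final $w_j = y_{j-1} - p^{\max Y}_j$; plugging in gives $y_{j-1} + p_j - w_j = p^{\max Y}_j + p_j \leq 2-M$ by hypothesis, so $j$ is assigned to $m_2$. If $p^{\max Y}_j \leq \frac{2M}{3}$, then by the minimality of the prefix and the uniform bound $\frac{2M}{3}$ on each individual job, the initial $W$ has total size $s \in [\frac{M}{3}, \frac{2M}{3}]$. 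Without complementation, the final $w_j = s \geq \frac{M}{3} > \gamma_j$ immediately. With complementation, the final $w_j = y_{j-1} - s$, and the required $w_j \geq \gamma_j$ rearranges to $s \leq (2-M) - p_j$, which follows from $s \leq \frac{2M}{3}$ and $p_j < M \leq \frac 34$: indeed $\frac{2M}{3} \leq (2-M) - p_j$ is equivalent to $p_j \leq 2 - \frac{5M}{3}$, and $p_j < M \leq 2 - \frac{5M}{3}$ by $\frac{8M}{3} \leq 2$.

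The principal obstacle is the complementation subcase inside $p^{\max Y}_j \leq \frac{2M}{3}$: the algorithm may well swap $W$ for its complement (because the initial $w_j = s$ exceeds the $M p_j$ threshold), and one must show that even the smaller, complemented $W$ still absorbs $\gamma_j$. This is precisely where the thresholds $\frac{M}{3}$ and $\frac{2M}{3}$ hard-coded into step $5$, combined with the global restriction $M \leq \frac 34$ of this subsection and the lower bounds on $y_{j-1}$ and $p_j$ from Lemma \ref{lD:6}, balance tightly.
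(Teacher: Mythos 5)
Your proof is correct and follows essentially the same route as the paper: both arguments reduce to determining which set the final $W$ is after the possible complementation in step $5$, and then reading off the assignment condition $y_{j-1}-w_j+p_j>2-M$. The only difference is presentational — the paper argues directly and delegates the case $p^{\max Y}_j \leq \min\{\frac{2M}{3}, M\cdot p_j\}$ to the analysis in Lemma \ref{lD:9}, whereas you prove the contrapositive and carry out that case analysis (including the bound $\gamma_j < 3M-2 \leq \frac{M}{3}$) explicitly and self-containedly.
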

\begin{proof}
In this case $p^{\max{Y}}_j > \min\{\frac{2M}3, M\cdot p_j\}$ holds, because if not, the algorithm will always be able to migrate a set $W$ to machine $m_1$ and assign $j$ to machine $m_2$ as we saw in Lemma \ref{lD:9}. Therefore, in this case $w_j = y_{j-1} - p^{\max{Y}}_j < \frac{M}{2}$. Thus, the algorithm assigns $j$ to machine $m_1$ because $y_{j-1}-w_j+p_j > 2-M$, i.e. $p^{\max{Y}}_j +p_j > 2-M$.
\end{proof}

\begin{lemma}
\label{lD:10}
In step $5$, if the algorithm assigns $j$ to machine $m_1$, then the final makespan will be at most $2-M$.
\end{lemma}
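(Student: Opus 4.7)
The plan is to leverage Corollary \ref{cD:1}, which gives $p_j + p^{\max{Y}}_j > 2-M > 1$, together with the fact that step $5$ is entered only when $p_j < M$ (step $4$ is bypassed) and $y_{j-1}+p_j > 2-M$ with $y_{j-1}<M$ (so $p_j > 2-2M$). The first observation I would make is that both critical jobs are large: from $p_j > 2-2M > \tfrac12$ (since $M < \tfrac34$) and $p^{\max{Y}}_j > 2-M-p_j > 2-2M > \tfrac12$, both $j$ and $j^{\max Y}$ have processing time strictly above $\tfrac12$.

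Next, I would argue that after $j$ is assigned the algorithm enters a controlled regime. Because no three jobs of size exceeding $\tfrac12$ can coexist in an input of optimal makespan $1$, every subsequent job $j'$ has $p_{j'} \leq \tfrac12 < 2-2M$, and the same reasoning as in Lemma \ref{lC:2} shows that all future jobs are assigned in step $2$ or step $3$ (so step $5$ is never entered again). Moreover, since $p_j + p^{\max{Y}}_j > 1$, the jobs $j$ and $j^{\max Y}$ must sit on different machines in every optimal schedule; combined with the fact that GoS-$1$ jobs can only run on $m_1$, this forces the total size $\Lambda$ of all GoS-$1$ jobs in the input to satisfy $\Lambda \leq 1 - \min(p_j,\, p^{\max{Y}}_j)$.

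Finally, I would split on the final load $T_2$ of the second machine. If $T_2 \geq M$, then $T_1 \leq 2 - T_2 \leq 2-M$ follows from the total size being at most $2$. If instead $T_2 < M$, then $y$ stays below $M$ for the rest of the input, so every future GoS-$2$ job is accepted by step $3$ (because $y + p_{j'} < M + \tfrac12 \leq 2-M$ for $M \leq \tfrac34$). Consequently, $m_1$ ends up holding exactly $j$ together with all GoS-$1$ jobs, giving $T_1 \leq \Lambda + p_j$. When $\min(p_j, p^{\max{Y}}_j) = p_j$ this yields $T_1 \leq 1 \leq 2-M$ directly. The main obstacle is the remaining subcase, where $\Lambda \leq 1 - p^{\max{Y}}_j$ must be combined with the lower bound $p^{\max{Y}}_j > 2-M-p_j$ and the strict inequality $p_j < M$ to derive $T_1 < 2p_j + M - 1 < 3M - 1$, after which the hypothesis $M \leq \tfrac34$ is exactly what reduces $3M-1 \leq 2-M$ and closes the argument.
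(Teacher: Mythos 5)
Your proposal is correct and follows essentially the same route as the paper's own proof: both invoke Corollary \ref{cD:1} to get $p_j+p^{\max Y}_j>2-M>1$, deduce that all later jobs are handled in steps $2$ and $3$, bound the total GoS-$1$ size by $1-\min\{p_j,p^{\max Y}_j\}$ from the fact that $j$ and $j^{\max Y}$ must be separated in any optimal schedule, and close with the case split on whether $m_1$ ever receives another GoS-$2$ job (equivalently $T_2\geq M$), ending in the same computation $1-p^{\max Y}_j+p_j<2p_j+M-1<3M-1\leq 2-M$. The only cosmetic difference is that you justify the "no further entry into steps $4$/$5$" claim via the observation that at most two jobs can exceed size $\frac 12$, whereas the paper checks the pairwise sums directly; the two arguments are interchangeable here.
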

\begin{proof}
By Corollary \ref{cD:1} $p^{\max{Y}}_j +p_j > 2-M$, i.e. $p^{\max{Y}}_j +p_j > 1$, and it holds that $y_{j-1} + p_j > 2-M$, since the condition of step $3$ do not hold. Therefore, every job $k$ that arrive after $j$ has $p_k < M$, thus, the algorithm will not enter step $4$ after $j$. Moreover, the algorithm will not enter step $5$ after $j$, because if so then $p_k>2-2M$ holds, i.e. $p^{\max{Y}}_j + p_k > \frac{M}{2}+2-2M>1$ and $p_j + p_k > (2-2M)+(2-2M)>1$.
Therefore, later in the input all jobs will be assigned in step $2$ or $3$ (because a machine that will get two of the three largest will have load above $1$). We have two jobs whose total size is above $1$ which must be assigned to different machines in an optimal solution. Thus, the total size of jobs with GoS $1$ is at most $\max\{1-p_j,1-p^{\max Y}_j\}$. If machine $m_1$ will only receive jobs with GoS $1$ until the end of the input then $T_1 = x_n+p_j$. If $p_j\leq p^{\max Y}_j$, we get $x_j+p_j\leq 1$. Otherwise, $x_j+p_j\leq 1-p^{\max Y}_j + p_j < 1+p_j-(2-M-p_j) < 3M -1 \leq 2-M$, since $p_j < M$ and $M\leq \frac 34$. If machine $m_1$ receive job with GoS $2$ then the load of machine $m_2$ is at least $M$. In both cases we get  $T_1 \leq 2-M$ and $T_2 \leq 2-M$.
\end{proof}

\begin{theorem}
\label{theoD:1}
The competitive ratio for algorithm \textit{D} is at most $2-M$, and the migration factor is not larger than $M$.
\end{theorem}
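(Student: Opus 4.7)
The plan is to follow the same inductive template used in Theorem \ref{theoB}, leveraging the already-proved case analyses for steps $4$ and $5$. I would prove by induction on the number $j$ of assigned jobs that, at every point during the execution, both machine loads are bounded by $2-M$. The base case ($j=0$) is trivial. For the inductive step, I split according to which step of Algorithm $D$ assigns $J_j$.

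Steps $4$ and $5$ are completely handled by the preceding lemmas: Lemmas \ref{lD:1} and \ref{lD:4} cover the two branches of step $4$, and Lemmas \ref{lD:9} and \ref{lD:10} cover the two branches of step $5$. In each of those lemmas, not only is the load bounded immediately after the assignment, but it is also shown that all subsequent jobs are assigned in steps $2$ and $3$ without pushing either load above $2-M$. An assignment in step $3$ preserves the invariant by the explicit test $y_{j-1}+p_j\leq 2-M$ together with the induction hypothesis for $m_1$. For step $2$, if $g_j=2$ then by the step's guard $y_{j-1}\geq M$, so $y_j=y_{j-1}\leq 2-M$ by induction and the total-size bound $x_j+z_j\leq 2-y_j\leq 2-M$ completes the bound for $m_1$.

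The only remaining subcase is $g_j=1$ assigned in step $2$, which requires a little care. If no step $4$ or $5$ has ever been executed, then $m_1$ contains only GoS-$1$ jobs, and their total size is at most $1\leq 2-M$. Otherwise let $j^*$ be the (unique, by Lemmas \ref{lD:4} and \ref{lD:9}) time at which step $4$ or step $5$ was applied; by those lemmas, immediately after $j^*$ we have $y_{j^*}\geq M$ whenever $m_1$ received GoS-$2$ jobs via migration, and afterwards $Y$ can only grow (since step $4$ and step $5$ never fire again). Hence at time $j$ we still have $y_{j-1}\geq M$, contradicting the hypothesis that step $2$'s second guard failed; so this configuration cannot arise, or if it does arise (namely $z_{j-1}=0$), then $m_1$ holds only GoS-$1$ jobs and the bound $\Lambda\leq 1$ suffices.

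For the migration-factor claim, migration occurs only in steps $4$ and $5$. In step $4$, the set $W$ is built as the \emph{maximum} prefix of the sorted $Y$ with $w_j\leq M\cdot p_j$, so the bound is immediate. In step $5$, Lemma \ref{lD:9} establishes that whichever of $W$ and $Y\setminus W$ is ultimately migrated has total size at most $M\cdot p_j$ (the algorithm swaps to the complement exactly when the first choice exceeds $\min\{\tfrac{2M}{3},M\cdot p_j\}$, and Lemma \ref{lD:5} guarantees the complement fits). The main obstacle I anticipate is making the bookkeeping in the step-$2$ subcase airtight: specifically, arguing that $z_{j-1}>0$ forces $y_{j-1}\geq M$, using the monotonicity of $y$ after the unique application of step $4$ or $5$ together with the structural guarantees of Lemmas \ref{lD:4} and \ref{lD:9} on the load of $m_2$ at that moment.
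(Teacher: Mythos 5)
Your proposal is correct and follows essentially the same route as the paper: delegate steps $4$ and $5$ entirely to Lemmas \ref{lD:1}, \ref{lD:4}, \ref{lD:9} and \ref{lD:10} (which already bound the \emph{final} makespan once either step fires), handle steps $2$ and $3$ via the guards and the total-size bound, and obtain the migration factor from the construction of $W$ in step $4$ together with the swap-to-complement argument and Lemma \ref{lD:5} in step $5$. The only blemish is the final dichotomy in your $g_j=1$ subcase: when step $4$ or $5$ assigns its job directly to $m_1$ (the branches of Lemmas \ref{lD:4} and \ref{lD:10}) one has $z_{j-1}>0$ yet $y_{j-1}<M$, so neither horn applies literally, but this is exactly the configuration those two lemmas already dispose of, so nothing essential is missing.
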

\begin{proof}
We showed that the competitive ratio is valid if step $4$ or $5$ is reached. Otherwise, if jobs of GoS $2$ are only assigned to $m_2$, we are done, and in particular the load of $m_2$ never exceeds $2-M$. If a job of GoS $2$ is ever assigned to $m_1$, then machine $m_2$ already has a load of at least $M$, and machine $m_1$ will not have a load above $2-M$.

As for the migration factor, we observe that jobs are only migrated in steps $4$ and $5$, while in step $4$ by definition we have that the total size of $W$ does not exceed $M\cdot p_j$. In step $5$, there is migration only in the case that the largest job is sufficiently small to be migrated. If other jobs are migrated, there are two cases. In the case where their total size is at most $\frac{M}2$, we are done by Lemma \ref{lD:5}. The remaining case is the first one. In this case if $W$ is replaced with its complement, the complement has total size at most $M\cdot p_j$ since $y_j \leq M$, so at least one of the two sets has total size no larger than $\frac M2$.
\end{proof}

\subsection{A lower bound on the competitive ratio of the case where $\boldsymbol{\frac{1}{2}\leq M<\frac{3}{4}}$}

\begin{theorem}
\label{ALGALOW0.5} The competitive ratio for every algorithm when $\frac{1}{2}\leq M<\frac{3}{4}$ is at least $2-M$.
\end{theorem}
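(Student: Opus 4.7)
The plan is to build, for any sufficiently small $\eta>0$, an adaptive input of at most three jobs whose online makespan is at least $2-M-O(\eta)$ while the offline optimum is $1$; taking $\eta\to 0^+$ will yield the lower bound $2-M$. The opening job is $J_1=(M+\eta,2)$, chosen so that $p_1=M+\eta$ strictly exceeds the migration budget $M\cdot p$ offered by any later job of size $p\leq 1$. Consequently, once placed, $J_1$ can never be relocated. The adversary then branches on the algorithm's initial assignment of $J_1$.

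In Case A, where $J_1$ is placed on $m_1$, I follow with $J_2=(1,1)$. Since $g_2=1$, $J_2$ must be scheduled on $m_1$, and because $p_1>M=M\cdot p_2$, $J_1$ cannot migrate. The resulting load on $m_1$ is $1+M+\eta$, while an offline schedule that places $J_1$ on $m_2$ and $J_2$ on $m_1$ has makespan $1$. For $M\geq \frac{1}{2}$ this already gives a ratio of at least $1+M+\eta\geq 2-M$.

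In Case B, where $J_1$ is placed on $m_2$, I present $J_2=(1,2)$; the same budget bound $M<p_1$ again prevents migrating $J_1$, leaving two legal options. In subcase (B1), if $J_2$ is also placed on $m_2$, its load becomes $1+M+\eta$ and the adversary stops, since an offline optimum still has makespan $1$ (by putting $J_1$ and $J_2$ on different machines). In subcase (B2), if $J_2$ is placed on $m_1$, the loads are $(1,M+\eta)$; I then present $J_3=(1-M-\eta,1)$. It is forced to $m_1$, and its migration budget $M(1-M-\eta)\leq M(1-M)\leq \frac{1}{4}$ is strictly less than both $M+\eta$ and $1$, so neither $J_1$ nor $J_2$ can be moved. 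Hence $T_1=2-M-\eta$, while the offline solution assigning $J_2$ to $m_2$ and $\{J_1,J_3\}$ to $m_1$ achieves makespan exactly $1$ on both machines.

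Combining the branches, every algorithm is forced to makespan at least $\min\{1+M+\eta,\,2-M-\eta\}$, which equals $2-M-\eta$ whenever $M\geq \frac{1}{2}$. Letting $\eta\to 0^+$ yields the claimed bound of $2-M$. The only nontrivial step is the three migration-infeasibility checks, which reduce to the one-line arithmetic inequalities $M+\eta>M$, $M(1-M)\leq \frac{1}{4}<1$, and $M(1-M)<M+\eta$, all immediate from $M\in[\frac{1}{2},\frac{3}{4})$ and $\eta>0$.
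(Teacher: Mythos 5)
Your proposal is correct and follows essentially the same adaptive construction as the paper: the opening job $(M+\eta,2)$, the branch on its placement with a forced $(1,1)$ job in one case and the sequence $(1,2)$, $(1-M-\eta,1)$ in the other, and the same migration-infeasibility checks. The only cosmetic difference is that you spell out the arithmetic for the migration bounds slightly more explicitly than the paper does.
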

\begin{proof}
Let $\eps>0$ be a small constant such that $\eps<0.1$, such that $\frac{1}{\eps}$ is an integer. The input starts with the first job $J_1=(M+\eps,2)$ (where $M+\eps<1$). If the algorithm schedules the first job on the first machine, then there is a job of the form $(1,1)$. It will not be possible to migrate of the first job as $\frac{M+\eps}{1}>M$. An optimal solution assigns the second job to $m_1$ and the first job is assigned to $m_2$, and the makespan is $1$. The algorithm will have a load of  $1+M+\eps > 2-M$ for $m_1$.

If the algorithm schedules the first job on the second machine, then the second job is $J_2=(1,2)$, and it  will not allow to migrate the first job since $\frac{M+\eps}{1}>M$. The optimal makespan is $1$ which is achieved by assigning the first two jobs to different machines. If the algorithm schedules the second job on the second machine (so both jobs are assigned to the same machine), we will get a competitive ratio of $1+M+\eps>2-M$ again.

If the algorithm schedules the second job on the first machine, then there is a third job $J_3=(1-M-\eps,1)$. This job cannot cause the migration of $J_2$ or of $J_1$, since it is smaller (as $1-M \leq \frac 12$). Thus, $m_1$ will have both $J_2$ and $J_3$, and a load of $2-M-\eps$. An optimal solution assigns $J_2$ to $m_2$ and the other jobs to $m_1$, for a makespan of $1$. The competitive ratio $2-M-\eps$ tends to $2-M$ by letting $\eps$ tend to zero.
\end{proof}

\section{The case $\boldsymbol{0\leq M<\frac{1}{2}}$ and comments}

In this section we show that the case where $M<\frac 12$ is equivalent to the case without migration, by adapting the lower bound construction. Recall that an algorithm of competitive ratio $1.5$ without migration is known \cite{JS016}.

\begin{theorem}
\label{ALGALOW} The competitive ratio for every algorithm when $M <\frac{1}{2}$ is at least $1.5$.
\end{theorem}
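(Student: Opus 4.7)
The plan is to adapt the known $M=0$ construction of \cite{JS016}, ensuring that the critical jobs are large enough that a migration factor $M<\frac12$ is useless. The guiding numerical observation is that for any arriving job of size $p \leq 1$, the total size of migrated jobs is at most $M\cdot p < p/2 \leq \frac12$, so no previously placed job of size at least $\frac12$ can ever be moved on its own, and in my construction no two movable candidates are present simultaneously.

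I would start by presenting $J_1=(\frac12,2)$ and branch on the algorithm's choice. If $J_1$ goes to $m_1$, I follow with $J_2=(1,1)$: this must land on $m_1$, migrating $J_1$ away is ruled out since its size $\frac12$ exceeds the budget $M\cdot 1 < \frac12$, and $m_1$ therefore reaches load $\frac32$. An optimal offline solution puts $J_1$ on $m_2$ and $J_2$ on $m_1$, so $\mathrm{OPT}=1$ and the ratio is $\frac32$.

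If instead $J_1$ goes to $m_2$, present $J_2=(1,2)$; again $J_1$ is too big to migrate. If the algorithm places $J_2$ on $m_2$, its load is already $\frac32$ while $\mathrm{OPT}=1$ (put $J_1$ on $m_1$, $J_2$ on $m_2$). Otherwise $J_2$ sits on $m_1$, and I then present $J_3=(\frac12,1)$, which is forced onto $m_1$; the new migration budget $M\cdot\frac12 < \frac14$ is too small to push $J_2$ (size $1$) back to $m_2$, so $m_1$ attains load $\frac32$. The optimum here is $1$: put $J_2$ alone on $m_2$ and $J_1,J_3$ together on $m_1$.

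The main obstacle, and the step I would spell out in each branch, is verifying that no combination of migrations of already placed jobs can evade the bound. Because every ``useful'' migration in the construction would require moving a job of size $\frac12$ or $1$, and at each arrival the available budget is strictly below $\frac12$, a short case check rules these out. Together with the $\mathrm{OPT}=1$ verifications in the three sub-instances, this gives the lower bound $\frac32$ on the competitive ratio of every algorithm with $M<\frac12$.
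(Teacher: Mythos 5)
Your proposal is correct and follows essentially the same construction as the paper: the same jobs $J_1=(\frac12,2)$, $J_2=(1,1)$ or $J_2=(1,2)$, $J_3=(\frac12,1)$, the same branching on the algorithm's decisions, and the same observation that a migration budget strictly below $\frac12$ can never move any of these jobs. The only cosmetic difference is that you terminate the input early when both large jobs land on $m_2$, whereas the paper presents $J_3$ in that branch as well; the bound is the same either way.
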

\begin{proof}
The first job is defined by $J_1=(0.5,2)$. If the online algorithm schedules is on the first machine, then there is a job $J_2=(1,1)$. Since $M<\frac 12$, the first job cannot be migrated. An optimal solution schedules the first job on the second machine (and has makespan $1$ due to the second job), while the algorithm schedules both on the first machine, and the competitive ratio is $1.5$. Otherwise, there are two additional jobs: $J_2=(1,2)$ and $J_3=(0.5,1)$, where the sizes are such that no job can cause the migration of another job. An optimal solution schedules the first and third jobs to the first machine, and the second job is scheduled to the second machine, and its makespan is $1$. The algorithm must schedule the third job to the first machine. No matter whether the second job is assigned to the first machine or the second machine, its makespan is $1.5$ and this is also the value of the competitive ratio.
\end{proof}

We summarize the results for all values of $M$ in Table \ref{f:5}.

\begin{figure}[b!]
\begin{center}
\begin{tabular}{c}
\includegraphics[width=140mm]{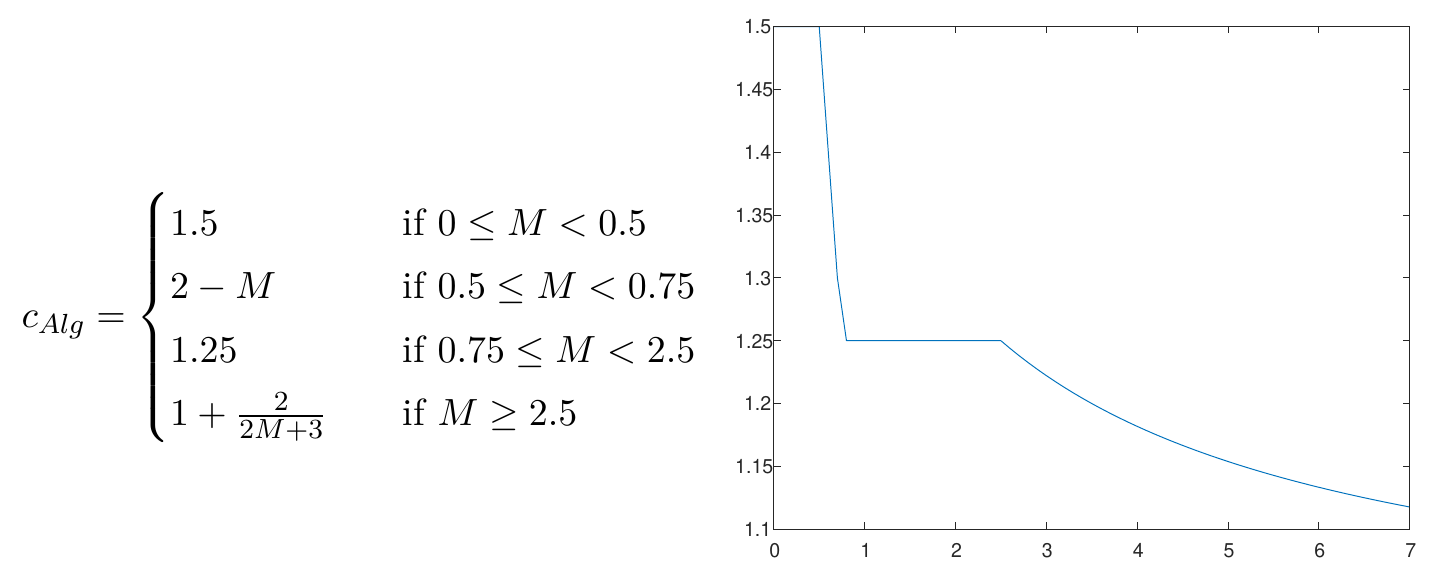}
\end{tabular}
\caption{\small The tight competitive ratio for the semi-online variant with known makespan, as a function of the migration factor $M$. \label{f:5}}
\end{center}
\end{figure}

\subsection{A related variant}
It is known that the following two semi-online models are related. The first one is the one where the optimal makespan is given in advance, and the other one is where the total size of jobs is known in advance. It is known that an algorithm for the latter can be used as an algorithm for the former with the same competitive ratio (and therefore a lower bound on the competitive ratio for the former is a lower bound for the latter). There are specific problems where the results for the two models are similar and problems for which they are different.

We show that the two problems are very different in the sense that for $M$ growing to infinity the variant with known total size has a competitive ratio not tending to $1$ (as the version studied here) but it is bounded away from $1$.

\begin{proposition}
Any algorithm for the semi-online problem where the total processing time is given and any migration factor $M$ has competitive ratio of at least $\frac{\sqrt{33}-1}{4}\approx 1.18614$.
\end{proposition}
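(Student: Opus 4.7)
The plan is to use an adaptive adversary with total processing time $T=2$. I would set $\rho=(\sqrt{33}-1)/4$, which is the positive root of $2\rho^2+\rho-4=0$, equivalently $2/\rho=\rho+1/2$, and let $a=\rho/2$. Phase $1$ presents two identical hierarchy-$2$ jobs $J_1=(a,2)$ and $J_2=(a,2)$. After any migration permitted at the arrival of $J_2$, the algorithm occupies, up to symmetry, one of three arrangements of $\{J_1,J_2\}$: (i) both on $m_1$, (ii) both on $m_2$, or (iii) one on each machine. Phase $2$ then presents many jobs of combined size exactly $2-2a=2-\rho$, each of individual size strictly less than $a/M$ (a vacuous constraint if $M=0$). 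The purpose of this cap is that for every phase-$2$ arrival $j$ the per-step migration budget $Mp_j$ is strictly below $a$, so neither $J_1$ nor $J_2$ can be moved in phase $2$.

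The hierarchy of the phase-$2$ jobs would be tailored to the arrangement. For case (i), I would send only hierarchy-$1$ jobs; they must pile onto $m_1$, yielding makespan $2$, while the optimum puts $J_1,J_2$ on $m_2$ and all hierarchy-$1$ jobs on $m_1$, giving makespan $\max(2a,2-2a)=2a=\rho$ (since $a>1/2$), so the ratio is $2/\rho=\rho+1/2>\rho$. For case (ii), I would send only hierarchy-$2$ jobs; the algorithm's best option is to place them all on $m_1$ giving loads $(2-2a,2a)$ and makespan $2a=\rho$, whereas the optimum splits $J_1,J_2$ onto different machines and distributes the tiny jobs to balance both at $1$, giving ratio $\rho$. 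For case (iii), I would send only hierarchy-$1$ jobs; they must all go to $m_1$, where one of $J_1,J_2$ already sits, producing makespan $2-a$, while the optimum puts both $J_1,J_2$ on $m_2$ for makespan $2a=\rho$; the ratio is $(2-a)/(2a)=2/\rho-1/2=\rho$ by the defining identity. The adversary thus forces ratio at least $\rho$ regardless of the algorithm's choice of arrangement.

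The hard part will be confirming that migration cannot rescue the algorithm in any of the three cases. The size cap is calibrated precisely so that the only two jobs whose relocation could help---$J_1$ and $J_2$---cannot be moved after phase $2$ begins, because each phase-$2$ step admits a migration of total size strictly below $a$. The tiny phase-$2$ jobs themselves can be placed and reshuffled freely, but I would need to verify this freedom does not reduce the stated makespans: in cases (i) and (iii) hierarchy-$1$ jobs are restricted to $m_1$ anyway, and in case (ii) placing a tiny hierarchy-$2$ job on $m_2$ merely pushes its load above $2a$. I should also check that mixed continuations do not do better for the adversary than the pure choices above (a routine case check: whichever hierarchy is in the minority lets the optimum balance so well that the algorithm's ratio only drops). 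Finally, the optimum value of $1$ in case (ii) is attained in the limit as the tiny-job size tends to zero, so the construction yields a lower bound of $\rho-o(1)$, whose supremum is the claimed $\rho$.
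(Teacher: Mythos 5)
Your proposal is correct and is essentially the paper's own construction: two hierarchy-$2$ jobs of size $a=\theta=\frac{\sqrt{33}-1}{8}$ followed by sand of total size $2-2a$ whose hierarchy is chosen adversarially (GoS $2$ if both large jobs sit on $m_2$, GoS $1$ otherwise), with the threshold chosen so that $2\theta=\frac{2-\theta}{2\theta}$. The only differences are cosmetic: you parametrize by $\rho=2\theta$ and split the paper's single ``otherwise'' case into the two sub-cases (i) and (iii), both of which the paper covers with the same bound $\frac{2-\theta}{2\theta}$.
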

\begin{proof}
First, we declare that the total processing time is 2.

The input starts with two large jobs of the form $(\theta,2)$, where $\frac 12< \theta<\frac 23$ is a constant satisfying $4\theta^2+\theta-2=0$, that is, $\theta=\frac{\sqrt{33}-1}{8}\approx 0.59307$. The other jobs will be sufficiently small so that the large jobs cannot be migrated and their initial assignment by an online algorithm is final.

If both large jobs are assigned to $m_2$, the remaining jobs are very small jobs (sand) of total size $2-2\cdot \theta$ with GoS $2$. An optimal solution can split the jobs evenly  such that the makespan is $1$ by assigning one large job each machine with sand of total size $1-\theta$. The algorithm has makespan of at least $2\theta$.

Otherwise, the sand of total size $2-2\theta$ has GoS $1$. In this case, an optimal solution cannot have makespan $1$, and by assigning both large jobs to $m_2$ it has makespan $2\theta$, while the algorithm has to assign all the sand to $m_1$ and its makespan is at least $\theta+(2-2\theta)=2-\theta$.

The competitive ratio is at least $\min\{\frac{2\theta}{1},\frac{2-\theta}{2\theta}\}$. By the equality $\frac{2-\theta}{2\theta}=2\theta$ and using the approximate value of $\theta$, we find that the competitive ratio of any algorithm and any migration factor is $\frac{\sqrt{33}-1}{4}\approx 1.18614$ even if the total job size is given in advance to the algorithm.
\end{proof}

\bibliography{twomach}
\bibliographystyle{abbrv}

\end{document}